\DeclareCiteCommand{\citepalias}[\mkbibparens]
  {\usebibmacro{prenote}}
  {\usebibmacro{citeindex}%
   \printtext[bibhyperref]{\@citealias{\thefield{entrykey}}}}
  {\multicitedelim}
  {\usebibmacro{postnote}}
\newtheorem{theorem}{Theorem}[section]
\newtheorem{proposition}{Proposition}[section]
\theoremstyle{definition}
\newtheorem{definition}{Definition}[section]
\newtheorem{corollary}{Corollary}[theorem]
\newtheorem{lemma}[theorem]{Lemma}
\theoremstyle{remark}
\newtheorem{remark}[theorem]{Remark}
\pgfplotsset{compat=newest}
\newcommand\BibTeX{{\rmfamily B\kern-.05em \textsc{i\kern-.025em b}\kern-.08em
T\kern-.1667em\lower.7ex\hbox{E}\kern-.125emX}}
\newcommand\numberthis{\addtocounter{equation}{1}\tag{\theequation}}
\DeclareMathAlphabet\mathbfcal{OMS}{cmsy}{b}{n}
\definecolor{darkgray176}{RGB}{176,176,176}
\definecolor{colorblind_red}{RGB}{216, 27, 96}
\definecolor{colorblind_blue}{RGB}{30, 136, 229}
\definecolor{colorblind_yellow}{RGB}{255, 193, 7}
\definecolor{colorblind_orange}{RGB}{254, 97, 0}
\definecolor{colorblind_green}{RGB}{0, 77, 64}
\definecolor{cdarkblue}{RGB}{0,53,102}
\definecolor{corange}{RGB}{230,85,13}
\definecolor{cgrassgreen}{RGB}{0,128,0}
\definecolor{cskyblue}{RGB}{0,155,219}
\definecolor{cyellow}{RGB}{240,228,66}
\definecolor{cdarkgreen}{RGB}{0,100,0}
\definecolor{cpurple}{RGB}{145,30,180}
\definecolor{clightgreen}{RGB}{153,180,51}
\definecolor{cdarkred}{RGB}{153,0,0}
\definecolor{darkorange2551490}{RGB}{255,149,0}
\definecolor{limegreen018569}{RGB}{0,185,69}
\definecolor{orangered255440}{RGB}{255,44,0}
\definecolor{teal1293165}{RGB}{12,93,165}
\definecolor{slategray13291151}{RGB}{132,91,151}
\definecolor{green68178118}{RGB}{68,178,118}
\newlength{\flexcheckerboardsize}
\newcommand{\defineflexcheckerboard}[4]{
    \setlength{\flexcheckerboardsize}{#2}
    \pgfdeclarepatterninherentlycolored{#1}
        {\pgfpointorigin}{\pgfqpoint{2\flexcheckerboardsize}{2\flexcheckerboardsize}}
        {\pgfqpoint{2\flexcheckerboardsize}{2\flexcheckerboardsize}}%
        {
           \pgfsetfillcolor{#4}
           \pgfpathrectangle{\pgfpointorigin}{\pgfqpoint{2.1\flexcheckerboardsize}{2.1\flexcheckerboardsize}}
          \pgfusepath{fill}
          \pgfsetfillcolor{#3}
          \pgfpathrectangle{\pgfpointorigin}{\pgfqpoint{\flexcheckerboardsize}{\flexcheckerboardsize}}
          \pgfpathrectangle{\pgfqpoint{\flexcheckerboardsize}{\flexcheckerboardsize}}
            {\pgfqpoint{\flexcheckerboardsize}{\flexcheckerboardsize}}
            \pgfusepath{fill}
        }
}
\def\mydrawopacity{0.1}
\def\myfillopacity{0.4}
\newlength{\flexlinelength}
\newlength{\flexlinewidth}
\newcommand{\definehorrizontallines}[5]{
    \setlength{\flexlinelength}{#2}
    \setlength{\flexlinewidth}{#3}
    \pgfdeclarepatterninherentlycolored{#1}
        {\pgfpointorigin}{\pgfpoint{2*\flexlinewidth}{100pt}}
        {\pgfpoint{2*\flexlinewidth}{100pt}}%
        {
        \pgfsetfillcolor{#4}
        \pgfpathrectangle{\pgfpointorigin}{\pgfpoint{\flexlinelength+\flexlinewidth}{100pt}}
        \pgfusepath{fill}
        \pgfsetfillcolor{#5}
        \pgfpathrectangle{\pgfpoint{\flexlinelength}{0pt}}{\pgfpoint{\flexlinelength+\flexlinewidth}{100pt}}
        \pgfusepath{fill}
    }
}
\pgfplotsset{
    colormap={redwhiteblue}{
        rgb255(0.0)=(59,76,192);
        rgb255(0.5)=(255,255,255);
        rgb255(1.0)=(180,4,38);
    },
    colormap={yellowwhitepurple}{
        rgb255(0.0)=(68,1,84);
        rgb255(0.14)=(70,50,127);
        rgb255(0.29)=(54,92,141);
        rgb255(0.43)=(39,127,142);
        rgb255(0.5)=(255,255,255);
        rgb255(0.57)=(31,161,135);
        rgb255(0.71)=(74,194,109);
        rgb255(0.86)=(159,218,58);
        rgb255(1.0)=(253,231,37);
    },
    colormap={bluegreen}{
        rgb255(0.0)=(24,8,163);
        rgb255(0.2)=(7,69,142);
        rgb255(0.4)=(5,105,105);
        rgb255(0.6)=(7,137,66);
        rgb255(0.8)=(15,168,8);
        rgb255(1.0)=(97,193,9);
    },
    colormap={yellowpurple}{
        rgb255(0.0)=(13,8,135);
        rgb255(0.14)=(84,2,163);
        rgb255(0.29)=(139,10,165);
        rgb255(0.43)=(185,50,137);
        rgb255(0.57)=(219,92,104);
        rgb255(0.71)=(244,136,73);
        rgb255(0.86)=(254,188,43);
        rgb255(1.0)=(240,249,33);
    },
    colormap={yellowblue}{
        rgb255(0.0)=(255,165,0);
        rgb255(0.1)=(255,255,0);
        rgb255(0.4)=(255,255,224);
        rgb255(0.5)=(255,255,255);
        rgb255(0.6)=(173,216,230);
        rgb255(0.9)=(0,0,255);
        rgb255(1.0)=(0,0,139);
    },
    colormap={redgreen}{
        rgb255(0.0)=(139,0,0);
        rgb255(0.1)=(255,0,0);
        rgb255(0.3)=(240,128,128);
        rgb255(0.5)=(255,255,255);
        rgb255(0.6)=(152,251,152);
        rgb255(0.9)=(0,128,0);
        rgb255(1.0)=(0,100,0);
    },
}
\title{Table Inference for Combinatorial Origin-Destination Choices in Agent-Based Population Synthesis}
\date{} 					
\author[*,1]{ { Ioannis Zachos} \orcidlink{0000-0002-7503-2117}}
\author[2,3]{ { Theodoros Damoulas} \orcidlink{0000-0002-7172-4829}} 
\author[1,3]{ { Mark Girolami} \orcidlink{0000-0003-3008-253X}}
\affil[1]{ { Department of Engineering, University of Cambridge, Cambridge, UK.} }
\affil[2]{ { Department of Computer Science and Department of Statistics, University of Warwick, Coventry, UK.} }
\affil[3]{ { The Alan Turing Institute, London, UK.} }
\begin{document}

\maketitle

\begin{abstract}
	A key challenge in agent-based mobility simulations is the synthesis of individual agent socioeconomic profiles. Such profiles include locations of agent activities, which dictate the quality of the simulated travel patterns. These locations are typically represented in origin-destination matrices that are sampled using coarse travel surveys. This is because fine-grained trip profiles are scarce and fragmented due to privacy and cost reasons. The discrepancy between data and sampling resolutions renders agent traits non-identifiable due to the combinatorial space of data-consistent individual attributes. This problem is pertinent to any agent-based inference setting where the latent state is discrete. Existing approaches have used continuous relaxations of the underlying location assignments and subsequent ad-hoc discretisation thereof. We propose a framework to efficiently navigate this space offering improved reconstruction and coverage as well as linear-time sampling of the ground truth origin-destination table. This allows us to avoid factorially growing rejection rates and poor summary statistic consistency inherent in discrete choice modelling. We achieve this by introducing joint sampling schemes for the continuous intensity and discrete table of agent trips, as well as Markov bases that can efficiently traverse this combinatorial space subject to summary statistic constraints. Our framework's benefits are demonstrated in multiple controlled experiments and a large-scale application to agent work trip reconstruction in Cambridge, UK.
\end{abstract}

\keywords{Combinatorial explosion \and Markov bases \and origin-destination matrix \and population synthesis \and spatial interaction models}

\section{Introduction}\label{sec:intro}

Agent-based models (ABMs) are becoming increasingly popular policy-making tools in areas such as epidemic and transportation modelling \citep{bonabeau2002}. The emergent structure arising from ABM simulations relies on the quality of the underlying agent population's demographic and socioeconomic attributes. In transportation ABMs, such as MATSim \citep{axhausen2016}, simulated travel patterns are predominantly governed by the location where agent activities take place (e.g. working, shopping). The trips between activities are summarised in origin-destination matrices (ODMs), which are often either partially or not available a priori. Therefore, \textit{population synthesis} is performed to create artificial agents whose attributes (e.g. workplace location) have the same summary statistics as those described by population averages (e.g. regional job availability). Location choice synthesis translates to reconstructing integer-valued origin-destination matrices whose margins are summary statistics. To this end, coarse/aggregate agent activity surveys by geographical region and activity type are mainly leveraged \citep{fournier2021}. This is because fine-grained individual/disaggregate profiles are scarce and fragmented due to privacy and/or data acquisition cost reasons. Therefore, a discrepancy arises between the spatial resolutions of the data and latent states. Inferring individual agent trips subject to population summary statistics necessitates the exploration of a combinatorial choice space. The size of this space induces identifiability issues since a unique set of agent location choices consistent with the data cannot be recovered. 

A downsampling approach of sampling individual choices is computationally infeasible for any real-world application. Assuming that there are $M$ agents with $L$ available location choices, then computing the likelihood of the aggregate data given individual model parameters requires summing over $L^M$ possible location configurations, many of which are inconsistent with the data. Computational and identifiability issues can be alleviated by appropriately constraining the discrete latent space. The problem of exploring a constrained combinatorial agent state space is pertinent to any agent-based inference setting where the latent state is discrete. 

Although discrete choice models \citep{train2009} are popular candidates for disaggregating agent location choices, they cannot encode aggregate statistic constraints. Therefore, they either accrue errors when reconstructing ODMs or lead to factorially growing rejection rates \citep{desalvo2016} when forced to adhere to discrete constraints in a rejection-type scheme. A suite of greedy optimisation algorithms such as iterative proportional fitting \citep{bishop2007} and combinatorial optimisation \citep{voas2000a} were employed to assimilate summary statistic constraints in continuous and discrete spaces, respectively. These methods suffer from poor convergence to local optima, yielding solutions heavily dependent on good initialisations. Moreover, operating in a continuous probability/intensity space requires an additional sampling step to discretise the ODM, such as stochastic rounding \citep{croci2022}. This is an ad-hoc treatment of the problem and produces errors. The unidentifiable nature of disaggregating agent choices from aggregate data calls for uncertainty quantification in order to give practitioners the ability to interrogate and rank the sampled ODMs according to their probability.

Probabilistic methods have overcome some of the aforementioned limitations \citep{farooq2013,sun2016}, but remain approximate since they operate in the continuous intensity/probability space. In the case of location choice synthesis, ODMs are equivalent to two-way contingency tables of two categorical variables (e.g. origin residential population and destination workforce population) and the joint distribution of the two variables is explored using Gibbs sampling. Table marginal probabilities are elicited by normalising the discrete summary statistics. This approximation incurs information loss and may cause marginal class imbalances in high dimensional tables \citep{fournier2021}, meaning a growing divergence between ground truth and sampled marginal distributions. In addition, partially available data cannot be accommodated in a principled manner and unreasonable conditional independence assumptions are imposed.

\newlength{\oldintextsep}
\setlength{\oldintextsep}{\intextsep}
\setlength\intextsep{0pt}
\newlength{\oldcolumnsep}
\setlength{\oldcolumnsep}{\columnsep}
\setlength{\columnsep}{5pt}
\begin{wraptable}[20]{r}{0.65\textwidth}
    \centering
    \setlength{\tabcolsep}{4pt}
    \begin{tabular}{|
      >{\centering\arraybackslash}m{0.11\textwidth}
      >{\centering\arraybackslash}m{0.13\textwidth}
      >{\centering\arraybackslash}m{0.07\textwidth}
      >{\centering\arraybackslash}m{0.04\textwidth}
      >{\centering\arraybackslash}m{0.04\textwidth}
      >{\centering\arraybackslash}m{0.15\textwidth}
    |}
        \hline
   $\mathcal{C}$ &
    Constrained ODM &
    This work &
    \citepalias{ellam2018} &
    \citepalias{gaskin2023} &
    $\mathbb{P}(\mathbf{\textcolor[HTML]{1E88E5}{T}}\vert \boldsymbol{\textcolor[HTML]{FFC20A}{\Lambda}}, \mathcal{C})$ \\ \hline
  $\bigl\{\textcolor[HTML]{FFC20A}{\Lambda}_{++}\bigl\}$ &
    Totally &
    $\checkmark$ &
    $\checkmark$ &
    $\checkmark$ &
    - \\ \hline
  $\bigl\{\boldsymbol{\textcolor[HTML]{FFC20A}{\Lambda}}_{\cdot +}\bigl\}$ &
    Singly &
    $\checkmark$ &
    $\checkmark$ &
    $\checkmark$ &
    - \\ \hline
  $\bigl\{\textcolor[HTML]{1E88E5}{T}_{++},\textcolor[HTML]{FFC20A}{\Lambda}_{++}\bigl\}$ &
    Totally &
    $\checkmark$ &
    $\times$ &
    $\times$ &
    Multinomial $\; \left(\textcolor[HTML]{1E88E5}{T}_{++},\textcolor[HTML]{FFC20A}{\Lambda}_{++}\right)$ \\ \hline
  $\bigl\{\mathbf{\textcolor[HTML]{1E88E5}{T}}_{\cdot +},\boldsymbol{\textcolor[HTML]{FFC20A}{\Lambda}}_{\cdot +}\bigl\}$ or $\bigl\{\mathbf{\textcolor[HTML]{1E88E5}{T}}_{\cdot +},\textcolor[HTML]{FFC20A}{\Lambda}_{++}\bigl\}$ &
    Singly &
    $\checkmark$ &
    $\times$ &
    $\times$ &
    Product Multinomial $\; \left(\mathbf{\textcolor[HTML]{1E88E5}{T}}_{\cdot +},\frac{\boldsymbol{\textcolor[HTML]{FFC20A}{\Lambda}}_{\cdot +}}{\textcolor[HTML]{FFC20A}{\Lambda}_{++}}\right)$ \\ \hline
  $\bigl\{\mathbf{\textcolor[HTML]{1E88E5}{T}}_{\cdot +},\mathbf{\textcolor[HTML]{1E88E5}{T}}_{+ \cdot},\textcolor[HTML]{FFC20A}{\Lambda}_{++}\bigl\}$ &
    Doubly &
    $\checkmark$ &
    $\times$ &
    $\times$ &
    Fisher's non-central hypergeometric $\; \left(\mathbf{\textcolor[HTML]{1E88E5}{T}}_{\cdot +},\mathbf{\textcolor[HTML]{1E88E5}{T}}_{+\cdot},\frac{\textcolor[HTML]{FFC20A}{\Lambda}_{++} \boldsymbol{\textcolor[HTML]{FFC20A}{\Lambda}}_{\cdot\cdot}}{\boldsymbol{\textcolor[HTML]{FFC20A}{\Lambda}}_{\cdot +}\boldsymbol{\textcolor[HTML]{FFC20A}{\Lambda}}_{+\cdot}}\right)$ \\ \hline
  $\bigl\{\mathbf{\textcolor[HTML]{1E88E5}{T}}_{\cdot +}, \mathbf{\textcolor[HTML]{1E88E5}{T}}_{+\cdot}, \newline  \mathbf{\textcolor[HTML]{1E88E5}{T}}_{\mathcal{X}'}, \textcolor[HTML]{FFC20A}{\Lambda}_{++}, \newline        \forall \; \mathcal{X}'\subseteq\mathcal{P}(\mathcal{X})\bigl\}$ &
    Doubly and cell &
    $\checkmark$ &
    $\times$ &
    $\times$ &
    Constrained Fisher's non-central hypergeometric $\; \left(\mathbf{\textcolor[HTML]{1E88E5}{T}_{\cdot +}},\mathbf{\textcolor[HTML]{1E88E5}{T}}_{+\cdot},\frac{\textcolor[HTML]{FFC20A}{\Lambda}_{++} \boldsymbol{\textcolor[HTML]{FFC20A}{\Lambda}}_{\cdot\cdot}}{\boldsymbol{\textcolor[HTML]{FFC20A}{\Lambda}}_{\cdot +}\boldsymbol{\textcolor[HTML]{FFC20A}{\Lambda}}_{+\cdot}}\right)$ \\ \hline

    \end{tabular}
    \captionsetup{belowskip=0pt,aboveskip=0pt,width=.95\linewidth}
    \caption{Comparison of our method's capabilities against previous works. Agent choices are described by a discrete table ($\mathbf{\textcolor[HTML]{1E88E5}{T}}$) or a continuous intensity ($\boldsymbol{\textcolor[HTML]{FFC20A}{\Lambda}}$). Subscripts define summary statistics: the row and column sums/margins are indexed by ($\cdot,+), (+,\cdot)$, respectively. The cell universe $\mathcal{X}\supseteq \mathcal{X}'$ contains table/intensity indices of an $ I\times J$ matrix.}
    \label{tab:contributions}
\end{wraptable} 

The work of \citep{carvalho2014} endeavoured to address these two problems by adopting a Bayesian paradigm that operates directly on the discrete table space. However, neither the most efficient proposal mechanism nor the available intensity structure were exploited. Instead, a Metropolis-Hastings (MH) scheme for sampling contingency tables was employed that proposes jumps of size at most one in $\mathcal{O}($\# origins $\times$ \# destinations $)$, causing poor mixing in high-dimensional tables. Furthermore, the author argued for a hierarchical construction that jointly learns the constrained discrete ODM and the underlying intensity function. In doing so, they attempted to leverage a family of log-non-linear intensity models known as \textit{spatial interaction models} (SIMs) \citep{wilson1971}. SIMs incorporate summary statistic constraints directly in the continuous intensity space. Despite this effort, a log-linearity assumption was imposed on the SIM to simplify parameter inference. Also, the known dynamics of competition between destination locations \citep{dearden2015} were ignored, effectively stripping SIMs of all their embedded structure. Moreover, additional data were required to calibrate the intensity function, such as seed matrices, which are seldom available, as opposed to regularly observed data on the economic utility of travelling to destination locations. The works of \citep{ellam2018} and \citep{gaskin2023} alleviated this problem by constructing a physics-driven log-non-linear SIM intensity prior. However, both approaches operated strictly in the continuous intensity space and could not explore the discrete table space where population synthesis is performed.

\subsection{Contributions}

Our paper focuses on reconstructing origin-destination agent trip matrices summarising residence-to-workplace location choices. We offer an upsampling Bayesian approach that jointly samples from the discrete table ($\mathbf{\textcolor[HTML]{1E88E5}{T}}$) and continuous intensity ($\boldsymbol{\textcolor[HTML]{FFC20A}{\Lambda}}$) spaces for agent location choice synthesis. Our framework seamlessly assimilates any type of aggregate summary statistic as a constraint, which in turn regularises the space of admissible disaggregate/individual agent choices. We demonstrate an improved reconstruction and coverage of a partially observed origin-destination matrix summarising agent trips from residential to workplace locations in Cambridge, UK.

Contrary to the previous work, we perform a Gibbs step and sample tables in $\mathcal{O}($\# destinations$)$ leveraging the Markov basis machinery in \citep{diaconis1998} to design a Markov Chain Monte Carlo (MCMC) scheme with proposals that allow arbitrarily large jumps in table space without any accept/reject step. Hence, we bypass the problem of marginal distribution imbalances by respecting the exact margin frequencies rather than marginal distributions. We employ SIMs to understand the behavioural mechanism of aggregate location choice in continuous intensity space and relax previously adopted log-linearity assumptions on the intensity model. In the same fashion as \citep{ellam2018,gaskin2023}, we account for the stochastic dynamics of competition between destinations governing agent location choices and enforce an interpretable structure in the SIM intensity prior. A summary of our framework's capabilities relative to the previous works is depicted in Table \ref{tab:contributions}.

\section{Problem setup}\label{sec:problem_setup}

Consider $M$ agents that travel from $I$ origins to $J$ destinations to work. Let the expected number of trips (intensity) of agents between origin $i$ and destination $j$ be denoted by $\Lambda_{ij}$. The residential population in each origin (row sums) is equal to 
\begin{equation}\label{eq:orig_demand_constraint}
    \Lambda_{i+} = \sum_{j=1}^{J} \Lambda_{ij}, \;\;\;\; i=1,\dots,I,
\end{equation}
while the working population at each destination (column sums) is 
\begin{equation}\label{eq:dest_demand_constraint}
    \Lambda_{+j} = \sum_{i=1}^{I} \Lambda_{ij}, \;\;\;\; j=1,\dots,J.
\end{equation}
We assume that the total origin and destination demand are both conserved:
\begin{equation}\label{eq:total_constraint}
    M = \Lambda_{++} = \sum_{i=1}^{I} \Lambda_{i+} = \sum_{j=1}^{J} \Lambda_{+j}.
\end{equation}
This construction defines a totally constrained SIM. The demand for destination zones depends on the destination's attractiveness denoted by $\mathbf{w} \coloneqq (w_1,\dots, w_J) \in \mathbb{R}_{>0}^{J}$. Let the log-attraction be $\mathbf{x} \coloneqq \log(\mathbf{w})$. Between two destinations of similar attractiveness, agents are assumed to prefer nearby zones. Therefore, a cost matrix $\mathbf{C} = (c_{i,j})_{i,j=1}^{I,J}$ is introduced to reflect travel impedance. These two assumptions are justified by economic arguments \cite{pooler1994}. The maximum entropy distribution of agent trips subject to the total number of agents being conserved is derived by maximising
\begin{equation}\label{eq:objective_function}
    \mathcal{E}(\boldsymbol{\Lambda}) = \sum_{i=1}^{I}\sum_{j=1}^J -\Lambda_{ij}\log(\Lambda_{ij}) + \zeta  \left(\sum_{i,j}^{I,J} \Lambda_{ij} - M\right) + \alpha \sum_{j=1}^J x_j\Lambda_{ij} - \beta \sum_{i,j}^{I,J} c_{ij}\Lambda_{ij},
\end{equation}
which yields a closed-form expression for the trip intensity:
\begin{equation}\label{eq:max_entropy_trip}
    \Lambda_{ij} = \frac{\Lambda_{++}\exp(\alpha x_j -\beta c_{ij})}{\sum_{k,m}^{I,J} \exp(\alpha x_m-\beta c_{km})},
\end{equation}
where $\alpha,\beta$ control the two competing forces of attractiveness and deterrence. A higher $\alpha$ relative to $\beta$ characterises a preference over destinations with higher job availability, while the contrary indicates a predilection for closer workplaces. The destination attractiveness $\mathbf{w}$ is governed by the Harris-Wilson \citep{harris1978} system of $J$ coupled ordinary differential equations (ODEs):

\begin{equation}\label{eq:harris_wilson_ode}
    \frac{\dif w_j}{\dif t} = \epsilon w_j \left( \Lambda_{+j} - \kappa w_j + \delta  \right), \; \; \mathbf{w}(0) = \mathbf{w}',
\end{equation}
where $\kappa>0$ is the number of agents competing for one job, $\delta>0$ is the smallest number of jobs a destination can have and $\Lambda_{+j}(t) - \kappa w_j(t)$ is the net job capacity in destination $j$. A positive net job capacity translates to a higher economic activity (more travellers than jobs) and a boost in local employment, and vice versa. In equilibrium, the $J$ stationary points of the above ODE can be computed using

\begin{equation}\label{eq:harris_wilson_stationary_points}
    \kappa w_j - \delta = \frac{\Lambda_{++} w_j^{\alpha}}{\sum_{k,m}^{I,J} w_k^{\alpha}\exp(-\beta c_{km})} \sum_{i=1}^{I}\exp(-\beta c_{ij}).
\end{equation}
The value of $\kappa$ can be elicited by summing the above equation over destinations, which yields
\begin{equation}\label{eq:kappa}
    \kappa = \frac{\delta J + \Lambda_{++}}{\sum_{j=1}^J w_j},
\end{equation}
while $\delta$ corresponds to the case when no agent travels to destination $j'$ ($\Lambda_{+j'} = 0$), i.e.
\begin{equation}\label{eq:delta}
    \delta = \kappa \min_{j}\{w_j\}
\end{equation}

A stochastic perturbation of ~\ref{eq:harris_wilson_ode} incorporates uncertainty in the competition dynamics emerging from the randomness of agents' choice mechanisms. This gives rise to the Harris-Wilson stochastic differential equation (SDE) for the time evolution of log destination attraction $\mathbf{x}$
\begin{equation}\label{eq:harris_wilson_sde_log}
    \dif \mathbf{x} = - \epsilon^{-1} \nabla V(\mathbf{x})\dif t  + \sqrt{2\gamma^{-1}}\dif \mathbf{B}_t,  \;\;\;\; \mathbf{x}(0)=\mathbf{x_0},
\end{equation}
where the potential function $V(\mathbf{x})$ in the drift term is equal to 
\begin{equation}
 \epsilon^{-1}V(\mathbf{x}) = \underbrace{-\alpha^{-1} \sum_{i=1}^{I} O_i \log \left( \sum_{j=1}^{J} \exp (\alpha x_j - \beta c_{ij}) \right)}_{\text{utility potential}} + \underbrace{\kappa \sum_{j=1}^{J} \exp(x_{j})}_{\text{cost potential}} - \underbrace{\delta \sum_{j=1}^{J} x_j}_{\text{additional potential}}, 
 \label{eq:potential_function}
\end{equation}
and $\boldsymbol{\theta} = (\alpha,\beta)$ is the free parameter vector. 
The steady-state distribution of ~\ref{eq:harris_wilson_sde_log} is shown in \citep{ellam2018} to be the Boltzmann-Gibbs measure
\begin{align*}
    p(\mathbf{x}\vert\boldsymbol{\theta}) &= \frac{1}{Z(\boldsymbol{\theta})} \exp\left(-\gamma V_{\boldsymbol{\theta}}(\mathbf{x})\right) \numberthis \label{eq:boltzmann_gibbs_measure} \\
    Z(\boldsymbol{\theta}) &\coloneqq \int_{\mathbb{R}^J} \exp\left(-\gamma V_{\boldsymbol{\theta}}(\mathbf{x})\right) \dif\mathbf{x}. \numberthis \label{eq:boltzmann_gibbs_measure_normalising_constant}
\end{align*}
The observed data $\mathbf{y}$ are assumed to be noisy pertubartions of $\mathbf{x}$, where the error between the two satisfies $\log(\mathbf{e}) \sim \mathcal{N}(\mathbf{0},\sigma_d^2\mathbf{I})$, that is
\begin{equation}\label{eq:noise_model}
    \log(\mathbf{y}) = \mathbf{x} + \log(\mathbf{e}).
\end{equation}

\setlength{\oldintextsep}{\intextsep}
\setlength\intextsep{0pt}
\begin{wrapfigure}[14]{r}{0.3\textwidth}
    \centering
    \includegraphics[width=0.25\textwidth,keepaspectratio]{./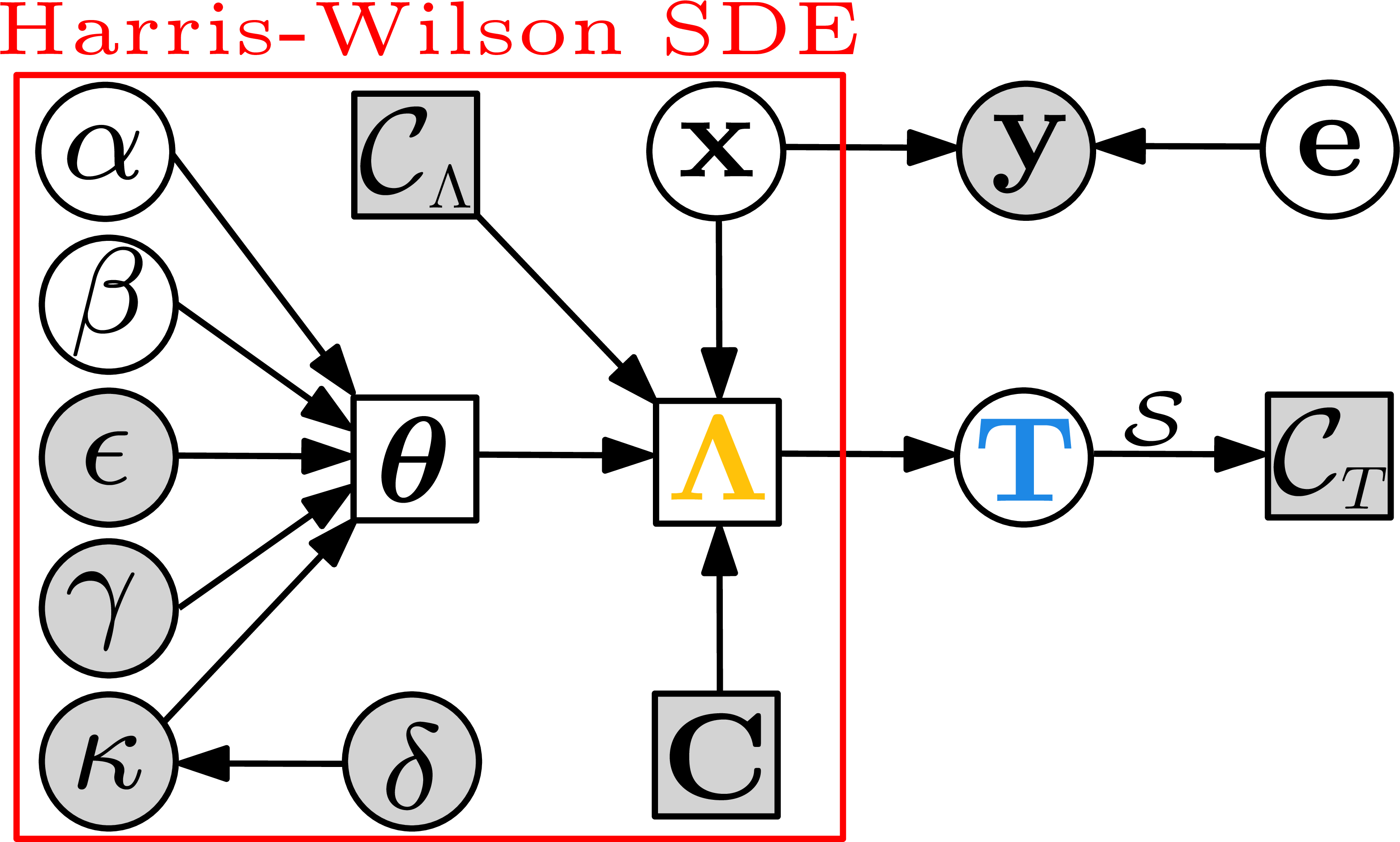}
    \captionsetup{width=.9\linewidth} 
    \caption{Plate diagram of our modelling framework. Rectangular and circular nodes are deterministic and random variables, respectively. Shaded nodes correspond to conditioned quantities.}
    \label{fig:plate_diagram}
\end{wrapfigure}

We introduce a data augmentation step to perform inference at the higher resolution origin-destination table space of agent trips as depicted in Figure \ref{fig:plate_diagram}. Assume that the $I\times J$ discrete contingency table $\mathbf{T}$ summarising the number of agents living in location $i$ and working in location $j$ is Poisson distributed:
\begin{equation}\label{eq:poisson}
    T_{ij}\sim \text{Poisson}\left(\Lambda_{ij}(\mathbf{x},\mathbf{\theta})\right),
\end{equation}
where the $T_{ij}$'s are conditionally independent given the $\Lambda_{ij}$'s. The contingency table inherits constraint ~\ref{eq:total_constraint}. These hard-coded constraints can be viewed as noise-free data on the discrete table space. We abbreviate the vector of row sums, column sums and the scalar total of $\mathbf{T}$ by $\mathbf{T_{\cdot +}}$, $\mathbf{T_{+ \cdot}}$ and $T_{++}$, respectively. Note that $\mathbf{T}$ uniquely determines the rest of the aforementioned random variables and $T_{++} = \Lambda_{++}$. Moreover, the distribution of $\mathbf{x}$ in ~\ref{eq:boltzmann_gibbs_measure} coupled with a prior on $\boldsymbol{\theta}$ jointly induces a prior over the intensity function $\boldsymbol{\Lambda}$.

Performing inference in a discrete higher-resolution table space circumvents challenges associated with enforcing summary statistic constraints in the continuous intensity space. First, the doubly constrained intensity (see Table \ref{tab:contributions}) admits solutions retrieved only through an iterative procedure that converges to poor local optima without any quantification of uncertainty, since the physical model in \eqref{eq:harris_wilson_sde_log} becomes redundant. Second, maximising \eqref{eq:objective_function} subject to individual cell constraints induces discontinuities in the $\Lambda$ space prohibiting SIM parameter calibration. To avoid dealing with discontinuities, a fully observable table is required, which is seldom available and defeats the purpose of ODM reconstruction. Alternatively, more parameters can be introduced, which entails identifiability problems as the number of free parameters becomes $\mathcal{O}(I+J)$ instead of $\mathcal{O}(J)$.  Moreover, augmenting $\mathcal{C}_{\Lambda}$ to match $\mathcal{C}_{T}$ strengthens the dependence between $\mathbf{T}\vert \boldsymbol{\Lambda},\mathcal{C}$ and $\mathbf{y}\vert \mathbf{x}, \mathcal{C}$. As a result, constraints are implicitly weighted (hard $\mathcal{C}_{\Lambda}$ and soft $\mathcal{C}_T$ constraints), which inflicts identifiability issues in $\boldsymbol{\Lambda}$.

\section{Discrete Table Inference}\label{sec:discrete_table_inference}

Let the set of table indices (cells) be $\mathcal{X} = \{(i,j) : 1\leq i\leq I, 1\leq j \leq J \}$ such that $T(x)=T_{ij}$ is the table value of cell $x=(i,j)\in \mathcal{X}$. For any subset $\mathcal{X}_k \subseteq \mathcal{X}$ let $S_k:\mathcal{X}_k \to \mathbb{N}^{I+J}$ be a bijective function that maps every cell $x \in \mathcal{X}_k$ to the $(I+J)$-dimensional binary vector with the $i$-th and $(I+j)$-th entries equal to one and the rest being zero. Define $\mathcal{S}_k: \mathcal{T} \to \mathbb{N}^{I+J}$ to be the summary statistic operator applying a uniquely defined $S_k(\cdot)$ to a table $\mathbf{T}\in \mathcal{T}$ over cells $\mathcal{X}_k'\subseteq \mathcal{X}$ such that $\mathcal{S}_k(\mathbf{T}') = \sum_{x\in\mathcal{X}} \mathbf{T}'(x) S_k(x)$. The ordered collection\footnote{Summary statistics are arranged in increasing order of cell set sizes $\vert\mathcal{X}_k\vert$. \label{foot:ordered_collection}} of summary statistic operators $\bigl\{\mathcal{S}_1(\mathbf{T}'),\dots, \mathcal{S}_K(\mathbf{T}')\bigl\}$ is abbreviated by $\mathbfcal{S}(\mathbf{T}')$. Define a collection of discrete summary statistics $\mathcal{C}_{T} = \bigl\{\mathbf{s}_1,\dots,\mathbf{s}_K\bigl\}$ expressed as constraints on table space, where each $\mathbf{s}_k$ is a realisation of $\mathcal{S}_k$. We leverage the same convention to define continuous constraints $\mathcal{C}_{\Lambda}$ in the intensity space. The union of table and intensity constraints is summarised by $\mathcal{C}$. We sometimes refer to $\mathcal{C}_T$ by $\mathcal{C}$ to avoid notation clutter. In Table \ref{tab:contributions} the singly constrained ODM model corresponds to a given $\mathcal{C}$ as opposed to singly constrained tables and intensities that map to $\mathcal{C}_T$ and $\mathcal{C}_{\Lambda}$, respectively. Equivalently, constrained models are defined by combinations of constrained tables and intensities.

\begin{definition}\label{def:admissibility}
Consider an ordered\footref{foot:ordered_collection} collection of constraints $\mathcal{C}_T$ and table summary statistics operators $\mathbfcal{S}$ with associated functions $\mathbf{S}$. A table $\mathbf{T}'$ is $\mathcal{C}_T$-\textit{admissible} if and only if its summary statistics satisfy all the constraints in $\mathcal{C}_T$, i.e. $\mathcal{S}_k(\mathbf{T}') = \mathbf{s}_k \in \mathcal{C}_T \; \forall \; k=1,\dots, K$.
\end{definition}

We denote the function space of all $\mathcal{C}_T$-admissible contingency tables (origin-destination matrices) of dimension $\dim(\mathbf{T}) = I\times J$ by $\mathcal{T}_{\mathcal{C}_T} = \{\mathbf{T} \in \mathcal{T}: \mathbfcal{S}(\mathbf{T}) = \mathcal{C}_T\}$ and drop the dependence on $T$ for notational convenience. This space contains all agent location choices consistent with the aggregate summary statistics $\mathcal{C}_T$. The set $\mathcal{T}_{\mathcal{C}_k}$ contains all tables that when applied $\mathcal{S}_k$ over cells $\mathcal{X}_k$ satisfy the $k$-th constraint of $\mathcal{C}_T$. In the rest of the paper we set $\mathcal{C}_{\Lambda}=\Bigl\{\Lambda_{++}\Bigl\}$ unless otherwrise stated. Our goal is to sample from $P(\mathbf{T},\mathbf{x},\boldsymbol{\theta}\vert\mathcal{C},\mathbf{y})$\footnote{We denote probability distributions over discrete, continuous and mixed discrete-continuous supports by $\mathbb{P}, p, P$, respectively.}, where
\begin{equation}\label{eq:full_posterior}
P(\mathbf{T},\underbrace{\mathbf{x},\boldsymbol{\theta}}_{\boldsymbol{\Lambda}}\vert\mathcal{C},\mathbf{y}) \propto \mathbb{P}(\mathbf{T}\vert \mathbf{x},\boldsymbol{\theta},\mathcal{C})p(\mathbf{y}\vert\mathbf{x})p(\mathbf{x}\vert\boldsymbol{\theta},\mathcal{C})p(\boldsymbol{\theta})
\end{equation}

We achieve this by devising a Metropolis-Hastings-within-Gibbs scheme to sample from $\mathbb{P}(\mathbf{T}\vert\mathbf{x},\boldsymbol{\theta},\mathcal{C})$, $p(\mathbf{x}\vert\boldsymbol{\theta},\mathbf{T},\mathbf{y})$ and $p(\boldsymbol{\theta}\vert\mathbf{x},\mathbf{T},\mathbf{y})$. The conditional samplers for $\mathbf{x}$ and $\boldsymbol{\theta}$ have acceptance ratios similar to those in \citep{ellam2018} and equal to
\begin{align*}
    p(\mathbf{x}',\mathbf{m}'\vert\mathbf{x},\mathbf{m},\boldsymbol{\theta},\mathbf{T},\mathcal{C},\mathbf{y}) &= 
    \min\left(1, \frac{\mathbb{P}(\mathbf{T}\vert \mathbf{x}',\boldsymbol{\theta},\mathcal{C})p(\mathbf{y}\vert\mathbf{x}')\exp\left(-H_{\theta}(\mathbf{x}')\right)}{\mathbb{P}(\mathbf{T}\vert \mathbf{x},\boldsymbol{\theta},\mathcal{C})p(\mathbf{y}\vert\mathbf{x})\exp\left(-H_{\theta}(\mathbf{x})\right)}\right),\numberthis \label{eq:x_acceptance_ratio} \\
     p(\boldsymbol{\theta}'\vert\boldsymbol{\theta},\mathbf{x},\mathbf{m},\mathbf{T},\mathcal{C},\mathbf{y}) &= \min\left(1,\frac{ \mathbb{P}(\mathbf{T}\vert \mathbf{x},\boldsymbol{\theta}',\mathcal{C})\exp\left(-\gamma V_{\boldsymbol{\theta}'}(\mathbf{x}) \right)Z(\boldsymbol{\theta})p(\boldsymbol{\theta}')}{ \mathbb{P}(\mathbf{T}\vert \mathbf{x},\boldsymbol{\theta},\mathcal{C})\exp\left(-\gamma V_{\boldsymbol{\theta}}(\mathbf{x}) \right)Z(\boldsymbol{\theta}')p(\boldsymbol{\theta})  } \right),\numberthis \label{eq:theta_acceptance_ratio}
\end{align*}
where $H_{\theta}(\mathbf{x}') = -\gamma V_{\boldsymbol{\theta}}(\mathbf{x}') - 1/2 \vert\mathbf{m}'\vert^2$ is the Hamiltonian of state $\mathbf{x}'$ with associated momentum $\mathbf{m}'$. Although a singly constrained intensity can be leveraged here, enforcing hard constraints through $\mathcal{C}_{\Lambda}$ and potentially different soft constraints through $\mathbb{P}(\mathbf{T}\vert\mathbf{x},\boldsymbol{\theta},\mathcal{C})$ would cause identifiability issues in $\mathbf{x}$. We aim to provide a general construction for joint table and intensity inference and employ singly constrained SIMs only when $\mathcal{C}_T=\{\mathbf{T}_{\cdot +}\}$. In the following exposition, we show that the type of summary statistic data available determines whether the constrained table distribution $\mathbb{P}(\mathbf{T}\vert\mathbf{x},\boldsymbol{\theta},\mathcal{C})$ can be sampled directly or indirectly through Markov Chain Monte Carlo (MCMC). 

\subsection{Tractable constrained table sampling}
In this section, we offer closed-form contingency table sampling. Without loss of generality, assume that only one of the two table margins is known; namely $\mathbf{T}_{+\cdot}$ (singly constrained table). Then, any subset $\mathcal{C}_{T}$ of the universe of summary statistic constraints $ \Bigl\{ T_{++}, \mathbf{T_{+\cdot}},\bigl\{ \mathbf{T}_{\mathcal{X}_l} \vert \mathcal{X}_l\subseteq \mathcal{X}, l \in \mathbb{N} \bigl\} \Bigl\}$ yields a closed-form posterior table marginal as shown in Table \ref{tab:contributions}. By the construction in \eqref{eq:poisson}, the case for $\mathcal{C}_T=\emptyset$ is equivalent to unconstrained table sampling conditioned on an intensity model, which in our case is a SIM. Cell constraints $\mathcal{C}_T=\bigl\{ \mathbf{T}_{\mathcal{X}_l} \vert \mathcal{X}_l\subseteq \mathcal{X}, l \in \mathbb{N} \bigl\}$ can be seamlessly incorporated in an unconstrained table without violating the posterior's tractability. Furthermore, leveraging that $\mathbf{T}$ uniquely determines both $T_{++}$ and $\mathbf{T}_{+\cdot}$ and applying Bayes' rule it follows that the models with $\mathcal{C}_T=\{T_{++}\}$ and $\mathcal{C}_T=\{\mathbf{T}_{+\cdot}\}$ yield Multinomial and product Multinomial distributions, respectively (See full derivations in Appendix \ref{app:table_posterior_marginals}). Equivalently,

\begin{equation}\label{eq:Multinomial}
    \mathbb{P}(\mathbf{T}\vert\boldsymbol{\Lambda},T_{++}) = \prod_{i,j}^{I,J} \left( \frac{T_{++}!}{T_{ij}!} \left(\frac{\Lambda_{ij}}{\Lambda_{++}} \right)^{T_{ij}} \right),
\end{equation}
and
\begin{equation}\label{eq:product_Multinomial}
    \mathbb{P}(\mathbf{T}\vert\boldsymbol{\Lambda},\mathbf{T_{\cdot +}}) = \prod_{i,j}^{I,J} \left( \frac{T_{i+}!}{T_{ij}!} \left(\frac{\Lambda_{ij}}{\Lambda_{i+}} \right)^{T_{ij}} \right).
\end{equation}
We obtain $N$ independent samples $\mathbf{T}^{(1:N)}$ from \eqref{eq:poisson}, \eqref{eq:Multinomial} \eqref{eq:product_Multinomial} in closed-form. Samples from the Poisson and product Multinomial distributions can be drawn in parallel. We note that the space complexity of table sampling is $\mathcal{O}(IJ)$ while the time complexity for \eqref{eq:poisson},  \eqref{eq:Multinomial} \eqref{eq:product_Multinomial} is $\mathcal{O}(IJ)$, $\mathcal{O}(1)$ and $\mathcal{O}(I)$, respectively. Moreover, coupling either constraint $T_{++}$ or $\mathbf{T}_{\cdot+}$ with cell constraints leaves the target distribution unchanged but shrinks its support. Hence, the available table margin is updated by subtracting the value of fixed cell constraints from margin statistics and performing inference on the free cells. We present the joint intensity and table sampling algorithm for tractably constrained tables in Algorithm ~\ref{alg:tractable_table_sampling_algorithm}.

\begin{algorithm}
	\caption{Metropolis-within-Gibbs MCMC sampling algorithm for tractably constrained tables.}%
    \label{alg:tractable_table_sampling_algorithm}
	\begin{algorithmic}[1]
        \State \textbf{Inputs}: $\mathbf{C}$, $\mathcal{C}$, $\mathbf{y}$, $N$.
        \State \textbf{Outputs}: $\mathbf{x}^{(1:N)},\boldsymbol{\theta}^{(1:N)},\text{sign}\left(\boldsymbol{\theta}^{(1:N)}\right),\mathbf{T}^{(1:N)}$.
        \State Initialise $\mathbf{x}^{(0)}$, $\boldsymbol{\theta}^{(0)}$, $\mathbf{T}^{(0)}$.
	    \For{$n \in \{1,\dots,N\}$}
	    \State Sample $\mathbf{x}^{(n)}\vert\boldsymbol{\theta}^{(n-1)},\mathbf{T}^{(n-1)}$ using Hamiltonian Monte Carlo \citep{neal2011} with acceptance ~\ref{eq:x_acceptance_ratio}.
	    \State Sample $\boldsymbol{\theta}^{(n)}\vert\mathbf{x}^{(n)},\mathbf{T}^{(n-1)}$ using Random Walk Metropolis-Hastings with acceptance ~\ref{eq:theta_acceptance_ratio}.
	    \State Construct intensity $\boldsymbol{\Lambda}^{(n)}$ using $\mathbf{x}^{(n)}, \boldsymbol{\theta}^{(n)}$ using ~\ref{eq:max_entropy_trip}.
	    \State Sample tables (in parallel) using the relevant closed-form distribution (~\ref{eq:poisson},~\ref{eq:Multinomial},~\ref{eq:product_Multinomial}) $\mathbf{T}^{(n)}\sim \mathbb{P}(\mathbf{T}\vert\boldsymbol{\Lambda}^{(n)},\mathcal{C})$. 
	    \EndFor
	\end{algorithmic}
\end{algorithm}

\subsection{Intractable constrained table sampling}
In this section, we introduce an MCMC scheme for sampling tables subject to any subset of the power set $\mathcal{P}\biggl( \Bigl\{ \mathbf{T}_{\cdot+},\mathbf{T}_{+\cdot},\bigl\{ \mathbf{T}_{\mathcal{X}_l} \vert \mathcal{X}_l\subseteq \mathcal{X}, l \in \mathbb{N} \bigl\} \Bigl\} \biggl)$ excluding those subsets contained in the constraint universe of the previous section. By conditioning on both table margins and leveraging the conditional distributions of $\mathbf{T}_{\cdot+} \vert T_{++}, \boldsymbol{\Lambda}$ and $T_{++} \vert \boldsymbol{\Lambda}$, the induced conditional distribution becomes Fisher's non-central multivariate hypergeometric \citep{agresti2002}:
\begin{equation}\label{eq:fishers_hypergeometric}
    \mathbb{P}(\mathbf{T}\vert\boldsymbol{\Lambda},\mathbf{T_{. +}},\mathbf{T_{\cdot +}}) \propto \frac{\prod_{i=1}^I T_{i+}! \prod_{j=1}^J T_{+j}!}{T_{++}!\prod_{i,j=1}^{I,J} T_{ij}!} \prod_{i,j=1}^{I,J} \left(\frac{\Lambda_{ij} \Lambda_{++}}{\Lambda_{i+} \Lambda_{+j}}\right)^{T_{ij}},
\end{equation}
where $\omega_{ij} = \frac{\Lambda_{ij}\Lambda_{++}}{\Lambda_{i+}\Lambda_{+j}}$ is called the odds ratio and encodes the strength of dependence between row $i$ and column $j$. Complete independence is achieved if and only if $\omega_{ij}=1$. Our choice of intensity model encodes this dependence in the travel cost matrix $\mathbf{C}$. Origin-destination independence is achieved if and only if the travel cost's effect on destination choice is irrelevant ($\beta=0$). Moreover, the normalising constant of \eqref{eq:fishers_hypergeometric} is a partition function defined over the support of all tables satisfying the conditioned margins and can't be efficiently computed by direct enumeration. In Appendix \ref{app:chu_vandermonde_theorem} we prove an extension of Chu-Vandermonde's convolution theorem for Multinomial coefficients \citep{belbachir2014} that facilitates computation of the normalising constant in $\mathcal{O}(1)$. In particular, we show that the following identity holds:
\begin{equation}\label{eq:chu_vandermonde_identity}
    \binom{T_{++}}{T_{+1}\dots T_{+J}} \prod_{i,j}^{J} \omega_{ij}^{T_{+j}} = \sum_{\mathcal{S}\left(\mathbf{T}\right)=\mathcal{C}_T} \prod_{i,j}^{I,J} \binom{T_{+j}}{T_{1j}\dots T_{Ij}} \omega_{ij}^{T_{ij}},
\end{equation}
where $\binom{T_{+j}}{T_{1j}\dots T_{Ij}} = \frac{T_{+j}!}{T_{1j}!\dots T_{Ij}!}$ is the Multinomial coefficient. Shrinking the $\mathcal{T}_{\mathcal{C}}$ space using elements of the constraint universe above requires a Markov Basis (MB) MCMC sampling scheme \citep{diaconis1998} due to the intractability of the induced table posterior. 

\subsubsection{Markov Basis MCMC}
We construct a $\mathcal{C}_T$-admissible table for initialising Markov Basis MCMC using a suite of greedy deterministic algorithms, such as iterative proportional fitting \citep{bishop2007}. We concoct a proposal mechanism on $\mathcal{T}_{\mathcal{C}}$ as follows.

\begin{definition}\label{def:null_admissibility}
    A \textit{null-admissible} table $\mathbf{T}$ is a $\mathcal{C}_T$-admissible table with $\mathcal{C}_T \subseteq \{\mathbf{T}_{+\cdot},\mathbf{T}_{\cdot+}\}$ and $\forall \; \mathbf{s} \in \mathcal{C}_T$ it follows that $\mathbf{s} = \mathbf{0}$.
\end{definition}

\begin{definition}\label{def:markov_basis}: A \textit{Markov basis} is a set of table moves $\mathbf{f}_1,\dots,\mathbf{f}_L: \mathcal{X} \to \mathbb{Z}$ that satisfy the following conditions: 
    \begin{enumerate}
        \item $\mathbf{f}_l$ is a null-admissible table for $1\leq l\leq L$ and 
        \item for any two $\mathcal{C}_T$-admissible $\mathbf{T},\mathbf{T}'$ there are $\mathbf{f}_{l_1},\dots,\mathbf{f}_{l_A}$ with $\eta_l \in \mathbb{N}$ such that $\mathbf{T}'=\mathbf{T}+\sum_{m=1}^A \eta_l \mathbf{f}_{l_m}$ and $\mathbf{T}+\sum_{m=1}^a \eta_l \mathbf{f}_{l_m} \geq 0\;$ for $\; 1\leq a \leq A$.
    \end{enumerate}
\end{definition}

Condition (i) guarantees that all proposed moves do not modify the summary statistics in $\mathcal{C}_T$, while condition (ii) ensures that there exists a path between any two tables such that any table member of the path is $\mathcal{C}_T$-admissible. The collection of constraints $\mathcal{C}_T$ generates a Markov basis $\mathcal{M}$. When $I\times J$ tables satisfy both row and column margins, $\mathcal{M}$ consists of functions $\mathbf{f}_1,\dots, \mathbf{f}_L$ such that $\forall \; x = (i_1,j_1), x' = (i_2,j_2) \in \mathcal{X}$ with $i_{1}\neq i_{2}, j_{1}\neq j_{2},$
\begin{equation}\label{eq:markov_basis_two_margins}
    \mathbf{f}_l(x) = \begin{cases} 
      \eta & \text{if} \; x = (i_1,j_1) \; \text{or} \; x = (i_1,j_2) \\
      -\eta & \text{if} \; x = (i_2,j_2)\; \text{or} \; x = (i_2,j_1) \\
      0 & \text{otherwise}
   \end{cases}
\end{equation}
The case for coupling individual cell constraints with table margins requires a minor modification. Let $\mathcal{X}' \subseteq \mathcal{X}$ and $\mathcal{C}'_T$ be the individual cell admissibility criteria. Then, $\mathcal{M}$ is updated to exclude all basis functions $\mathbf{f}_l$ with $\mathbf{f}_l(x)\neq 0 \;\; \forall \; x \in \mathcal{X}'$. Moreover, $\mathcal{C}_T$ is revised so that $\forall \; \mathbf{s} \in \mathcal{C}_T$, $\mathbf{s}' \in \mathcal{C}'_T$, $\mathbf{s}$ is updated to $\mathbf{s}-\mathbf{s}'$ at every $x \in \mathcal{X}'$. In other words, the constrained cell values are deducted from the rest of the summary statistic constraints in $\mathcal{C}_T$. A Markov Basis Markov chain (MBMC) can now be constructed.

\begin{proposition}\label{prop:markov_basis_mcmc} 
    (Adapted from \citep{diaconis1998}): Let $\mathcal{\mu}$ be a probability measure on $\mathcal{T}_{\mathcal{C}}$. Given a Markov basis $\mathcal{M}$ that satisfies ~\ref{def:markov_basis}, generate a Markov chain in $\mathcal{T}_{\mathcal{C}}$ by sampling $l$ uniformly at random from $\{1,\dots,L\}$. Consider the Markov Basis Metropolis-Hastings (MB-MH) and Gibbs (MB-Gibbs) proposals:
    \begin{enumerate}
        \item MB-MH: Let $\eta \in \{-1,1\}$ and choose $\eta$ from this set with probability $\frac{1}{2}$ independent of $l$. If the chain is at $\mathbf{T} \in \mathcal{T}_{\mathcal{C}}$ it will move to $\mathbf{T}' = \mathbf{T} + \eta \mathbf{f}_{l}$ with probability $$ \min \biggl\{ \frac{\mu(\mathbf{T} + \eta \mathbf{f}_{l})}{\mu(\mathbf{T})},1  \biggl\}$$ provided $\mathbf{T}' \geq 0$. In all other cases, the chain stays at $\mathbf{T}$. 
        \item MB-Gibbs: Let $\eta \in \mathbb{Z}$. If the chain is at $\mathbf{T} \in \mathcal{T}_{\mathcal{C}}$, determine the set of $\eta $ such that $\mathbf{T} + \eta \mathbf{f}_l \geq 0$. Choose $$ \mathbb{P}(\eta) \propto \prod_{x\in \{x\in \mathcal{X}: \mathbf{f}_l(x) \neq 0 \}} \frac{1}{\mu\biggl(n(x) + \eta \mathbf{f}_l(x) \biggl)} $$ and move to $\mathbf{T}' = \mathbf{T} + \eta \mathbf{f}_l \geq 0$.
    \end{enumerate}
    In both cases an aperiodic, reversible, connected Markov chain in $\mathcal{T}_{\mathcal{C}}$ is constructed with stationary distribution proportional to $\mu(\mathbf{T})$.
\end{proposition}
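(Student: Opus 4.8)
The plan is to verify the four properties claimed for the chain produced by Proposition~\ref{prop:markov_basis_mcmc} — it stays in $\mathcal{T}_{\mathcal{C}}$, it is connected (irreducible), it is aperiodic, and it is reversible with stationary law proportional to $\mu$ — by checking three things: (a) every executed move keeps the table in $\mathcal{T}_{\mathcal{C}}$; (b) irreducibility and aperiodicity; (c) detailed balance with respect to $\mu$. Property (a) is immediate: each $\mathbf{f}_l$ is null-admissible, and $\mathcal{S}_k$ is linear, so $\mathcal{S}_k(\mathbf{T}+\eta\mathbf{f}_l)=\mathcal{S}_k(\mathbf{T})+\eta\,\mathcal{S}_k(\mathbf{f}_l)=\mathcal{S}_k(\mathbf{T})$ for every $k$ and every $\eta\in\mathbb{Z}$ (when cell constraints are present one uses the modified basis of the preceding paragraph, whose moves by construction vanish on the constrained cells); and both proposals execute a move only when the target satisfies $\mathbf{T}'\ge 0$, so the updated table lies in $\mathcal{T}_{\mathcal{C}}$. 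Throughout I use $\mu>0$ on $\mathcal{T}_{\mathcal{C}}$, which is automatic since here $\mu$ is one of the strictly positive table posteriors of \eqref{eq:poisson}, \eqref{eq:Multinomial}, \eqref{eq:product_Multinomial}, \eqref{eq:fishers_hypergeometric}; and in the present regime both margins are fixed, so $\mathcal{T}_{\mathcal{C}}$ is finite, which upgrades the conclusion to uniqueness of $\mu$ and convergence to it.

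For irreducibility I would invoke Definition~\ref{def:markov_basis}(ii): any two $\mathcal{C}_T$-admissible $\mathbf{T},\mathbf{T}'$ are linked by $\mathbf{T}'=\mathbf{T}+\sum_{m=1}^{A}\eta_l\mathbf{f}_{l_m}$ with all partial sums nonnegative. To realise this as a path of \emph{unit} MB-MH steps I split each summand $\eta_l\mathbf{f}_{l_m}$ into $\eta_l$ applications of $\mathbf{f}_{l_m}$ (using $-\mathbf{f}_{l_m}$, available because $\eta\in\{-1,1\}$); the intermediate tables remain nonnegative because each $\mathbf{f}_{l_m}$ is integer-valued with a fixed sign pattern on its support, so along a segment $\mathbf{T}_a+t\mathbf{f}_{l_m}$, $0\le t\le\eta_l$, every cell value is monotone in $t$ and hence bounded below by its value at an endpoint, both of which are $\ge0$. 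Each such unit step has probability at least $\tfrac{1}{2L}\min\{1,\mu(\mathbf{T}')/\mu(\mathbf{T})\}>0$, so all states communicate; since the MB-Gibbs conditional places positive mass on every feasible $\eta$, the identical path works there. Aperiodicity follows from a self-loop: in MB-Gibbs $\eta=0$ is always feasible and receives positive probability, so $P(\mathbf{T},\mathbf{T})>0$; in MB-MH, at any boundary table (one with a zero entry) some proposed move exits $\mathcal{T}_{\mathcal{C}}$ and the chain holds, again giving $P(\mathbf{T},\mathbf{T})>0$ — combined with irreducibility this yields aperiodicity.

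The substantive step is detailed balance. For MB-MH the proposal $q(\mathbf{T}\to\mathbf{T}+\eta\mathbf{f}_l)=\tfrac{1}{2L}$ (summed over the $(l,\eta)$ pairs producing a given target) is symmetric, since the move $\mathbf{T}\to\mathbf{T}+\eta\mathbf{f}_l$ is undone by $\mathbf{T}+\eta\mathbf{f}_l\to(\mathbf{T}+\eta\mathbf{f}_l)+(-\eta)\mathbf{f}_l$ with $-\eta\in\{-1,1\}$, and flipping $\eta$ is a bijection between the pairs realising $\mathbf{T}\to\mathbf{T}'$ and those realising $\mathbf{T}'\to\mathbf{T}$; hence $q(\mathbf{T}\to\mathbf{T}')=q(\mathbf{T}'\to\mathbf{T})$. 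With a symmetric proposal the Metropolis acceptance gives $\mu(\mathbf{T})\,q(\mathbf{T}\to\mathbf{T}')\min\{1,\mu(\mathbf{T}')/\mu(\mathbf{T})\}=q(\mathbf{T}\to\mathbf{T}')\min\{\mu(\mathbf{T}),\mu(\mathbf{T}')\}=\mu(\mathbf{T}')\,q(\mathbf{T}'\to\mathbf{T})\min\{1,\mu(\mathbf{T})/\mu(\mathbf{T}')\}$, which is detailed balance (transitions to $\mathbf{T}'\notin\mathcal{T}_{\mathcal{C}}$ are never made, so there is nothing to check there). For MB-Gibbs, fix $l$ and let $K_l$ be the sub-kernel that draws $\eta$ from the displayed one-dimensional conditional and moves to $\mathbf{T}+\eta\mathbf{f}_l$: since all tables on the fibre $\{\mathbf{T}+\eta\mathbf{f}_l\ge0\}$ agree on every cell outside $\supp\mathbf{f}_l$ and on every constraint value, $K_l$ is a full-conditional Gibbs update of the single degree of freedom $\eta$ — provided one checks (the only genuine computation here) that the displayed $\mathbb{P}(\eta)$ coincides, up to an $\eta$-independent factor, with the restriction of $\mu$ to that fibre; this uses the product structure of the table posteriors so that the factors attached to cells outside $\supp\mathbf{f}_l$ cancel in the normalisation. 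A full-conditional update is $\mu$-reversible and $\mu$-invariant, and averaging $K=\tfrac{1}{L}\sum_l K_l$ over $l$ preserves both. Finally, an irreducible, aperiodic chain on a finite state space satisfying detailed balance with respect to $\mu$ has the normalisation of $\mu$ as its unique stationary distribution, which completes the argument.

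The main obstacle I expect is the connectivity step for MB-MH: Definition~\ref{def:markov_basis}(ii) only guarantees a path built from moves of possibly large integer multiplicity with nonnegative \emph{partial} sums, and one must argue with care that refining these into unit $\pm\mathbf{f}_l$ moves keeps every intermediate table nonnegative — this is exactly where the ``two nonzero rows, two nonzero columns'' sign pattern of the basis functions in \eqref{eq:markov_basis_two_margins}, and of its cell-constrained modification, is used. Everything else either follows the classical Diaconis--Sturmfels argument \citep{diaconis1998} verbatim or reduces to the short fibre-conditional computation flagged above.
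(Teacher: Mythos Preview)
The paper does not prove Proposition~\ref{prop:markov_basis_mcmc} at all: immediately after stating it, the authors write ``The proof of Proposition~\ref{prop:markov_basis_mcmc} is provided in \citep{diaconis1998}'' and move on. Your proposal is therefore not competing with a proof in the paper but reconstructing the cited Diaconis--Sturmfels argument, and it does so correctly: the three ingredients you isolate --- admissibility via linearity of $\mathcal{S}_k$ and null-admissibility of each $\mathbf{f}_l$, irreducibility from Definition~\ref{def:markov_basis}(ii) refined to unit steps by cellwise monotonicity along $t\mapsto\mathbf{T}_a+t\mathbf{f}_{l_m}$, and detailed balance via proposal symmetry for MB-MH and the fibre-conditional interpretation for MB-Gibbs --- are exactly the classical ones. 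The one computation you flag as ``the only genuine computation here'', namely that the displayed $\mathbb{P}(\eta)$ coincides with the restriction of $\mu$ to the one-dimensional fibre, is precisely what the paper works out separately in Corollary~\ref{corollary:markov_basis_proposal_distribution} for the doubly-constrained case; so your instinct that this is the single place requiring actual calculation beyond standard Markov-chain bookkeeping is on target.
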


The proof of Proposition ~\ref{prop:markov_basis_mcmc} is provided in \citep{diaconis1998}. Theoretical guarantees of Markov Basis MCMC convergence on $\mathcal{T}_{\mathcal{C}}$ show that the MB-MH scheme in ~\ref{prop:markov_basis_mcmc} mixes slowly and is not scalable to high-dimensional $I\times J$ tables for large $T_{++}$. Instead, a Gibbs sampler can be constructed as detailed in the same proposition (MB-Gibbs). 

In doubly constrained tables, $\eta$ is distributed according to Fisher's non-central hypergeometric distribution for $2\times 2$ tables. The derivation of this result is provided in in Corollary \ref{corollary:markov_basis_proposal_distribution} of Appendix \ref{app:table_posterior_marginals}. The overhead of generating $\mathcal{M}$ for any constrained table is at most $\mathcal{O}(I^2J^2)$ in both time and space. This overhead can be easily overcome by amortising the construction of $\mathcal{M}$ prior to sampling. The sampling procedure for a constrained model with an intractable table marginal distribution and underlying SIM intensity model is summarised in Algorithm ~\ref{alg:intractable_sampling_algorithm}. The time complexity of proposing a move in $\mathcal{T}_{\mathcal{C}}$ is $\mathcal{O}(1)$ and $\mathcal{O}(\max \bigl\{\max(\mathbf{s}) \; \big\vert \; \mathbf{s} \in \mathcal{C} \bigl\})$ for MB-MH and MB-Gibbs, respectively. The corresponding space complexities are both $\mathcal{O}(IJ)$.

\begin{algorithm}
	\caption{Metropolis-within-Gibbs Markov Basis MCMC sampling algorithm for intractably constrained tables.}
	\label{alg:intractable_sampling_algorithm}
	\begin{algorithmic}[1]
        \State \textbf{Inputs}: $\mathbf{C}$, $\mathcal{C}$, $\mathbf{y}$,  $\mathcal{M}$, $\mu$, $N$.
        \State \textbf{Outputs}: $\mathbf{x}^{(1:N)},\boldsymbol{\theta}^{(1:N)},\text{sign}\left(\boldsymbol{\theta}^{(1:N)}\right),\mathbf{T}^{(1:N)}$.
        \State Initialise $\mathbf{x}^{(0)}$, $\boldsymbol{\theta}^{(0)}$, $\mathbf{T}^{(0)}$.
	    \For{$m \in \{1,\dots,M\}$}
        \State Sample $\mathbf{x}^{(n)}\vert\boldsymbol{\theta}^{(n-1)},\mathbf{T}^{(n-1)}$ using Hamiltonian Monte Carlo \citep{neal2011} with acceptance ~\ref{eq:x_acceptance_ratio}.
	    \State Sample $\boldsymbol{\theta}^{(n)}\vert\mathbf{x}^{(n)},\mathbf{T}^{(n-1)}$ using Random Walk Metropolis-Hastings with acceptance ~\ref{eq:theta_acceptance_ratio}.
	    \State Construct intensity $\boldsymbol{\Lambda}^{(n)}$ using $\mathbf{x}^{(n)}, \boldsymbol{\theta}^{(n)}$ using ~\ref{eq:max_entropy_trip}.
	    \State Sample $l$ uniformly at random from $\{1,\dots,L\}$.
        \State Find the valid $\eta$ support yielding $\mathcal{C}_T$-admissible tables.
        \State Use MB-Gibbs in case 2 of ~\ref{prop:markov_basis_mcmc} to sample valid $\eta$ with specified $\mu$.
        \State Obtain $\mathbf{T}^{(n)} = \mathbf{T}^{(n-1)} + \eta \mathbf{f}_{l}$.
	    \EndFor
	\end{algorithmic}
\end{algorithm}

The curse of dimensionality prohibits the use of any standard convergence diagnosis techniques, such as the Gelman and Rubin criterion \citep{gelman1992}. Therefore, we employ the $l_1$ norm to empirically assess the convergence of sample summary statistics and establish convergence in probability. Furthermore, we assume the underlying intensity function is known a priori, which acts as a ground truth. In the case of Fisher's non-central hypergeometric distribution, exact moments are not available \citep{mccullagh2019}. These are approximated by the moments of a product Multinomial kernel derived in Appendix \ref{app:table_posterior_marginals}.

\section{Experimental Results and Discussion}\label{sec:experimental_results_and_discussion}

We showcase table sampling convergence results based on a fixed synthetic intensity across different numbers of origins $I$, destinations $J$ and agents $M=T_{++}$. Figure \ref{fig:table_size_agent_number_curse_of_dimensionality} depicts empirical convergence rates based on a total of $10^3$ chains each run for $10^3$ steps. Sparse tables (\ref*{subplot:dim_33x33_total_100_experiment_title_row_margin}) induce multimodal distributions in $\mathcal{T}_{\mathcal{C}}$ and mix slowly compared to their dense counterparts (\ref*{subplot:dim_33x33_total_5000_experiment_title_row_margin}). Convergence is decelerated more by a larger number of agents rather than higher table dimensionality. The number of agents grows as fast as the diameter of the chain's state space and bounds the number of MCMC steps required to reach the stationary distribution. This observation agrees with the theoretical bounds obtained in \citep{diaconis1998}, although the latter bounds are derived based on a uniform measure over $\mathcal{T}_{\mathcal{C}}$ explored using MB-MH. Despite this discrepancy, theoretical results provide an upper bound for our case of direct sampling, as evidenced by Figure \ref{fig:mixing_times_mcmc_proposals}. Direct sampling from the closed-form table posterior achieves the fastest convergence, and we use it to benchmark against Markov basis MCMC. Any doubly constrained table can be explored using either MB-MH (\ref*{subplot:proposal_degree_one_experiment_title_both_margins}) or MB-Gibbs (\ref*{subplot:proposal_degree_higher_experiment_title_both_margins}). Encoding additional constraints in $\mathcal{T}$ to contract the posterior entails the overhead of using MCMC, introducing a tradeoff between convergence rate and distribution contraction in the presence of more summary statistic constraints $\mathcal{C}_{T}$.

\begin{figure}[!ht]
  \centering
  \begin{minipage}[t]{0.5\textwidth}
    \centering
      \captionsetup{width=.9\linewidth}
\begin{tikzpicture}

  \begin{axis}[
  legend style={fill opacity=0.8, draw opacity=1, text opacity=1, draw=none},
  every axis plot/.append style={ultra thick},
  tick pos=both,
  x grid style={darkgray176},
  xlabel={MCMC Iteration},
  restrict x to domain=0:1000,
  xtick distance={200},
  y grid style={darkgray176},
  ylabel={$\bigg\| {\mathbb{{E}}[\mathbb{{E}}[\mathbf{T}\vert\mathbf{T}_{\cdot +},\boldsymbol{\Lambda}]]}  - \mathbf{T}_{\cdot +}\boldsymbol{\Lambda}/\boldsymbol{\Lambda}_{\cdot +} \bigg\|_1^{2} / \bigg\| \mathbf{T}_{\cdot +}\boldsymbol{\Lambda}/\boldsymbol{\Lambda}_{\cdot +} \bigg\|_1^{2}$},
  restrict y to domain=0.0:0.05,
  ytick distance={0.01},
  unbounded coords=jump
  ]

    \addplot [colorblind_red]
    table [each nth point={1}] {./tex_tables/intensity_and_table_convergence/dim_2x3_total_100_experiment_title_row_margin.dat};
    \label{subplot:dim_2x3_total_100_experiment_title_row_margin}
  \addplot [colorblind_blue] 
    table [each nth point={1}] {./tex_tables/intensity_and_table_convergence/dim_2x3_total_5000_experiment_title_row_margin.dat};
    \label{subplot:dim_2x3_total_5000_experiment_title_row_margin}
  \addplot [colorblind_yellow]
    table [each nth point={1}] {./tex_tables/intensity_and_table_convergence/dim_33x33_total_5000_experiment_title_row_margin.dat};
    \label{subplot:dim_33x33_total_5000_experiment_title_row_margin}
  \addplot [colorblind_green]
    table [each nth point={1}] {./tex_tables/intensity_and_table_convergence/dim_33x33_total_100_experiment_title_row_margin.dat};
    \label{subplot:dim_33x33_total_100_experiment_title_row_margin}

\end{axis}

\matrix [
  draw,
  matrix of nodes,
  column sep=0.0,
  fill opacity=0.0,
  draw opacity=0.0,
  text opacity=1.0,
  anchor=north west,
  node font=\small,
  column 1/.style={anchor=west}
] at (2.5,5.6) {
  \ref*{subplot:dim_2x3_total_100_experiment_title_row_margin} $\dim({\mathbf{T}})=(2,3), T_{++} = 100$ \\  
  \ref*{subplot:dim_2x3_total_5000_experiment_title_row_margin} $\dim({\mathbf{T}})=(2,3), T_{++} = 5000$\\
  \ref*{subplot:dim_33x33_total_5000_experiment_title_row_margin} $\dim({\mathbf{T}})=(33,33), T_{++} = 5000$ \\
  \ref*{subplot:dim_33x33_total_100_experiment_title_row_margin} $\dim({\mathbf{T}})=(33,33), T_{++} = 100$ \\
};

\end{tikzpicture}
      \caption{$l_1$ error norm of $\mathbb{E}[\mathbf{T}\vert\mathbf{y},\mathbf{T}_{\cdot +}]$ across table sizes $\dim(\mathbf{T})$ and number of agents $T_{++}$ using Algorithm \ref{alg:tractable_table_sampling_algorithm}. Convergence is slower for sparse tables (\ref*{subplot:dim_33x33_total_100_experiment_title_row_margin}) that induce multimodal distributions. As $T_{++}$ grows (\ref*{subplot:dim_2x3_total_5000_experiment_title_row_margin},\ref*{subplot:dim_33x33_total_5000_experiment_title_row_margin}) convergence is decelerated by a factor inversely proportional to the table size, which agrees with the theoretical bounds established in \citep{diaconis1998}.}%
      \label{fig:table_size_agent_number_curse_of_dimensionality}
  \end{minipage}%
  \begin{minipage}[t]{0.5\textwidth}
    \centering
    \captionsetup{width=.9\linewidth}
\begin{tikzpicture}
  
  \begin{axis}[
  legend cell align={left},
  legend style={fill opacity=0.8, draw opacity=1, text opacity=1, draw=none},
  every axis plot/.append style={ultra thick},
  tick pos=both,
  xlabel={MCMC Iteration},
  restrict x to domain=0:10000,
  xtick={0,1000,2000,...,9000,10000},
  ylabel={$\bigg\| {\mathbb{{E}}[\mathbb{{E}}[\mathbf{T}|\mathcal{C}_T,\boldsymbol{\Lambda}]]}  - g(\boldsymbol{\Lambda}) \bigg\|_1^{2}$},
  restrict y to domain=0:0.5,
  ytick distance={0.1},
  ]

  \addplot [colorblind_blue]
  table [each nth point={1},filter discard warning={false}, unbounded coords={discard}]{./tex_tables/intensity_and_table_convergence/proposal_direct_sampling_experiment_title_row_margin.dat};
  \label{subplot:proposal_direct_sampling_experiment_title_row_margin}
  \addplot [colorblind_green]
  table [each nth point={1},filter discard warning={false}, unbounded coords={discard}]{./tex_tables/intensity_and_table_convergence/proposal_degree_one_experiment_title_row_margin.dat};
  \label{subplot:proposal_degree_one_experiment_title_row_margin}
  \addplot [colorblind_green,dashed] 
    table [each nth point={1},filter discard warning={false}, unbounded coords={discard}] {./tex_tables/intensity_and_table_convergence/proposal_degree_one_experiment_title_both_margins.dat};
  \label{subplot:proposal_degree_one_experiment_title_both_margins}
  \addplot [colorblind_red] 
    table [each nth point={1},filter discard warning={false}, unbounded coords={discard}] {./tex_tables/intensity_and_table_convergence/proposal_degree_higher_experiment_title_row_margin.dat};
  \label{subplot:proposal_degree_higher_experiment_title_row_margin}
  \addplot [colorblind_red,dashed] 
    table [each nth point={1},filter discard warning={false}, unbounded coords={discard}] {./tex_tables/intensity_and_table_convergence/proposal_degree_higher_experiment_title_both_margins.dat};
  \label{subplot:proposal_degree_higher_experiment_title_both_margins}

  \addlegendimage{draw=black, fill=black, mark=none} 
  \label{subplot:row_margin}
  \addlegendimage{draw=black, fill=black, mark=none, dashed} 
  \label{subplot:both_margins}

  \end{axis}

  \matrix [
        draw,
        matrix of nodes,
        column sep=0.0,
        fill opacity=0.0,
        draw opacity=0.0,
        text opacity=1.0,
        anchor=north west,
        node font=\small,
        column 1/.style={anchor=west}
    ] at (3.655,5.6) {
        \ref*{subplot:row_margin} $\mathcal{C}_T=\left\{\mathbf{T}_{\cdot+}\right\}$ \\  
        \ref*{subplot:both_margins} $\mathcal{C}_T=\left\{\mathbf{T}_{\cdot+},\mathbf{T}_{+\cdot}\right\}$ \\  
        \ref*{subplot:proposal_direct_sampling_experiment_title_row_margin} Direct sampling \\
        \ref*{subplot:proposal_degree_one_experiment_title_row_margin},\ref*{subplot:proposal_degree_one_experiment_title_both_margins}  Metropolis Hastings \\
        \ref*{subplot:proposal_degree_higher_experiment_title_row_margin},\ref*{subplot:proposal_degree_higher_experiment_title_both_margins}  Gibbs \\
    };

\end{tikzpicture}
  
      \caption{$l_1$ error norm of $\mathbb{E}[\mathbf{T}\vert\mathbf{y},\mathcal{C}_T]$ for a $33\times 33$ table with $5000$ agents in the singly (\ref*{subplot:proposal_direct_sampling_experiment_title_row_margin},\ref*{subplot:proposal_degree_one_experiment_title_row_margin},\ref*{subplot:proposal_degree_higher_experiment_title_row_margin}) and doubly (\ref*{subplot:proposal_degree_one_experiment_title_both_margins}, \ref*{subplot:proposal_degree_higher_experiment_title_both_margins}) constrained tables. MB-Gibbs has a substantially faster convergence rate than MB-MH and mixes reasonably slower compared to direct sampling. Ground truth averages $g(\boldsymbol{\Lambda})$ are approximate for doubly constrained tables (\ref*{subplot:proposal_degree_one_experiment_title_both_margins}, \ref*{subplot:proposal_degree_higher_experiment_title_both_margins}).}%
      \label{fig:mixing_times_mcmc_proposals}
  \end{minipage}
\end{figure}
\begin{figure*}[!ht]
  \begin{subfigure}[b]{0.475\textwidth}
    \centering
  \begin{tikzpicture}[
    hatch distance/.store in=\hatchdistance,
    hatch distance=10pt,
    hatch thickness/.store in=\hatchthickness,
    hatch thickness=2pt
]

\pgfplotsset{
    table/search path={./tex_tables/2d_sample_projections/},
}

\begin{axis}[
%
xtick pos=bottom,
restrict x to domain=-100:120,
xmin=-100,xmax=120,
x grid style={darkgray176},
xlabel={Projected dimension 1},
%
ytick pos=left,
restrict y to domain=-100:150,
ymin=-100,ymax=150,
y grid style={darkgray176},
ylabel={Projected dimension 2},
]






\addplot[
    only marks,
    mark=triangle*,
    pattern=flexcheckerboard_cdarkblue,
    draw=cdarkblue,
    draw opacity=\mydrawopacity,
    fill opacity=\myfillopacity,
] table {table_JointTableSIMLatentMCMC_row_margin_low_noise.dat};
\label{subplot:table_row_margin_low_noise}

\addplot [
    only marks,
    mark=triangle*,
    draw=cdarkblue,
    fill=cdarkblue,
    draw opacity=\mydrawopacity,
    fill opacity=\myfillopacity,
]
table {table_JointTableSIMLatentMCMC_row_margin_low_noise.dat};
\label{subplot:table_row_margin_high_noise}


\addplot[
    only marks,
    mark=square*,
    pattern=flexcheckerboard_cgrassgreen,
    draw=cgrassgreen,
    draw opacity=\mydrawopacity,
    fill opacity=\myfillopacity,
] table {table_JointTableSIMLatentMCMC_both_margins_low_noise.dat};
\label{subplot:table_both_margins_low_noise}

\addplot [
    only marks,
    mark=square*,
    draw=cgrassgreen,
    fill=cgrassgreen,
    draw opacity=\mydrawopacity,
    fill opacity=\myfillopacity,
]
table {table_JointTableSIMLatentMCMC_both_margins_high_noise.dat};
\label{subplot:table_both_margins_high_noise}


\addplot[
    only marks,
    mark=pentagon*,
    pattern=flexcheckerboard_corange,
    draw=corange,
    draw opacity=\mydrawopacity,
    fill opacity=\myfillopacity,
] table {table_JointTableSIMLatentMCMC_both_margins_permuted_cells_10percent_low_noise.dat};
\label{subplot:table_both_margins_permuted_cells_10percent_low_noise}

\addplot [
    only marks,
    mark=pentagon*,
    draw=corange,
    fill=corange,
    draw opacity=\mydrawopacity,
    fill opacity=\myfillopacity,
]
table {table_JointTableSIMLatentMCMC_both_margins_permuted_cells_10percent_high_noise.dat};
\label{subplot:table_both_margins_permuted_cells_10percent_high_noise}


\addplot[
    only marks,
    mark=diamond*,
    pattern=flexcheckerboard_cpurple,
    draw=cpurple,
    draw opacity=\mydrawopacity,
    fill opacity=\myfillopacity,
] table {table_JointTableSIMLatentMCMC_both_margins_permuted_cells_20percent_low_noise.dat};
\label{subplot:table_both_margins_permuted_cells_20percent_low_noise}

\addplot [
    only marks,
    mark=diamond*,
    draw=cpurple,
    fill=cpurple,
    draw opacity=\mydrawopacity,
    fill opacity=\myfillopacity,
]
table {table_JointTableSIMLatentMCMC_both_margins_permuted_cells_20percent_high_noise.dat};
\label{subplot:table_both_margins_permuted_cells_20percent_high_noise}


\addplot [
    only marks,
    mark=mystar, 
    draw=cdarkred, 
    fill=cdarkred, 
]
table {ground_truth.dat};
\label{subplot:ground_truth_table}

\addlegendimage{
    only marks,
    draw=black, 
    fill=black,  
    pattern=flexcheckerboard_black,
    opacity=\mydrawopacity, 
    fill opacity=\myfillopacity,
} 
\label{subplot:table_low_noise}
\addlegendimage{
    only marks,
    draw=black, 
    fill=black,
    opacity=\mydrawopacity, 
    fill opacity=\myfillopacity,
} 
\label{subplot:table_high_noise}

\end{axis}

\matrix [
        draw,
        matrix of nodes,
        column sep=0.0,
        fill opacity=0.0,
        draw opacity=0.0,
        text opacity=1.0,
        anchor=north west,
        node font=\small,
        column 1/.style={anchor=west},
        column 2/.style={anchor=west}
    ] at (0.1,5.7) {
        \ref*{subplot:ground_truth_table} $\mathbf{T}^\mathcal{D}$ \ref*{subplot:table_low_noise} $\gamma = 10^4$ \ref*{subplot:table_high_noise} $\gamma = 10^2$ \\
        \ref*{subplot:table_row_margin_low_noise},\ref*{subplot:table_row_margin_high_noise} $\mathcal{C}_T=\left\{\mathbf{T}_{\cdot+}\right\}$ 
        \ref*{subplot:table_both_margins_low_noise},
        \ref*{subplot:table_both_margins_high_noise}
        $\mathcal{C}_T=\left\{\mathbf{T}_{\cdot+},\mathbf{T}_{+,\cdot}\right\}$ \\
        \ref*{subplot:table_both_margins_permuted_cells_10percent_low_noise},
        \ref*{subplot:table_both_margins_permuted_cells_10percent_high_noise}
        $\mathcal{C}_T=\left\{\mathbf{T}_{\cdot+},\mathbf{T}_{+,\cdot},\mathbf{T}_{\mathcal{X}_1}\right\}$
        \ref*{subplot:table_both_margins_permuted_cells_20percent_low_noise},
        \ref*{subplot:table_both_margins_permuted_cells_20percent_high_noise}:
        $\mathcal{C}_T=\left\{\mathbf{T}_{\cdot+},\mathbf{T}_{+,\cdot},\mathbf{T}_{\mathcal{X}_2}\right\}$ \\
    };

\end{tikzpicture}
    \label{fig:lower_dim_embedding_table}
  \end{subfigure}
  \begin{subfigure}[b]{0.475\textwidth}
    \centering    \begin{tikzpicture}

    \pgfplotsset{
        table/search path={./tex_tables/2d_sample_projections/},
    }
    
    \begin{axis}[
    %
    xtick pos=bottom,
    restrict x to domain=-100:120,
    xmin=-100,xmax=120,
    x grid style={darkgray176},
    xlabel={Projected dimension 1},
    %
    ytick pos=left,
    restrict y to domain=-100:180,
    ymin=-100,ymax=180,
    y grid style={darkgray176},
    ylabel={Projected dimension 2},
    ]
    

    \definecolor{cdarkblue}{RGB}{0,53,102}
    \definecolor{corange}{RGB}{230,85,13}
    \definecolor{cgrassgreen}{RGB}{0,128,0}
    \definecolor{cskyblue}{RGB}{0,155,219}
    \definecolor{cyellow}{RGB}{240,228,66}
    \definecolor{cdarkgreen}{RGB}{0,100,0}
    \definecolor{cpurple}{RGB}{145,30,180}
    \definecolor{clightgreen}{RGB}{153,180,51}
    \definecolor{cdarkred}{RGB}{153,0,0}
    

    \addplot [
        only marks,
        mark=triangle*, 
        pattern=flexcheckerboard_cyellow,
        draw=cyellow,
        opacity=\mydrawopacity, 
        fill opacity=\myfillopacity,
    ]
    table {intensity_SIMLatentMCMC_row_margin_low_noise.dat};
    \label{subplot:intensity_SIMLatentMCMC_row_margin_low_noise}
    
    \addplot [
        only marks,
        mark=triangle*, 
        fill=cyellow,
        draw=cyellow,
        opacity=\mydrawopacity, 
        fill opacity=\myfillopacity,
    ]
    table {intensity_SIMLatentMCMC_row_margin_high_noise.dat};
    \label{subplot:intensity_SIMLatentMCMC_row_margin_high_noise}
    

    \addplot [
        only marks,
        mark=triangle*, 
        pattern=flexcheckerboard_cskyblue,
        draw=cskyblue, 
        opacity=\mydrawopacity, 
        fill opacity=\myfillopacity,
    ]
    table {intensity_NeuralABM_row_margin_low_noise.dat};
    \label{subplot:intensity_NeuralABM_row_margin_low_noise}
    
    \addplot [
        only marks,
        mark=triangle*, 
        fill=cskyblue,
        draw=cskyblue, 
        opacity=\mydrawopacity, 
        fill opacity=\myfillopacity,
    ]
    table {intensity_NeuralABM_row_margin_high_noise.dat};
    \label{subplot:intensity_NeuralABM_row_margin_high_noise}

    \addplot [
        only marks,
        mark=triangle*, 
        pattern=horrizontallines_cskyblue,
        draw=cskyblue,
        opacity=\mydrawopacity, 
        fill opacity=\myfillopacity,
    ]
    table {intensity_NeuralABM_row_margin_learned_noise.dat};
    \label{subplot:intensity_NeuralABM_row_margin_learned_noise}
        

    \addplot [
        only marks,
        mark=triangle*,
        pattern=flexcheckerboard_cdarkblue,
        draw=cdarkblue,
        opacity=\mydrawopacity, 
        fill opacity=\myfillopacity,
    ]
    table {intensity_JointTableSIMLatentMCMC_row_margin_low_noise.dat};
    \label{subplot:intensity_JointTableSIMLatentMCMC_row_margin_low_noise}
    
    \addplot [
        only marks,
        mark=triangle*, 
        draw=cdarkblue, 
        fill=cdarkblue, 
        opacity=\mydrawopacity, 
        fill opacity=\myfillopacity,
    ]
    table {intensity_JointTableSIMLatentMCMC_row_margin_high_noise.dat};
    \label{subplot:intensity_JointTableSIMLatentMCMC_row_margin_high_noise}
    

    \addplot [
        only marks,
        mark=square*, 
        pattern=flexcheckerboard_cgrassgreen,
        draw=cgrassgreen, 
        opacity=\mydrawopacity, 
        fill opacity=\myfillopacity,
    ]
    table {intensity_JointTableSIMLatentMCMC_both_margins_low_noise.dat};
    \label{subplot:intensity_JointTableSIMLatentMCMC_both_margins_low_noise}
    
    \addplot [
        only marks,
        mark=square*, 
        fill=cgrassgreen, 
        draw=cgrassgreen, 
        opacity=\mydrawopacity, 
        fill opacity=\myfillopacity,
    ]
    table {intensity_JointTableSIMLatentMCMC_both_margins_high_noise.dat};
    \label{subplot:intensity_JointTableSIMLatentMCMC_both_margins_high_noise}

    
    \addplot [
        only marks,
        mark=pentagon*, 
        pattern=flexcheckerboard_corange,
        draw=corange,
        opacity=\mydrawopacity, 
        fill opacity=\myfillopacity,
    ]
    table {intensity_JointTableSIMLatentMCMC_both_margins_permuted_cells_10percent_low_noise.dat};
    \label{subplot:intensity_JointTableSIMLatentMCMC_both_margins_permuted_cells_10percent_low_noise}
    
    \addplot [
        only marks,
        mark=pentagon*, 
        fill=corange,
        draw=corange, 
        opacity=\mydrawopacity, 
        fill opacity=\myfillopacity,
    ]
    table {intensity_JointTableSIMLatentMCMC_both_margins_permuted_cells_10percent_high_noise.dat};
    \label{subplot:intensity_JointTableSIMLatentMCMC_both_margins_permuted_cells_10percent_high_noise}
    
    
    \addplot [
        only marks,
        mark=diamond*,
        pattern=flexcheckerboard_cpurple, 
        draw=cpurple, 
        opacity=\mydrawopacity, 
        fill opacity=\myfillopacity,
    ]
    table {intensity_JointTableSIMLatentMCMC_both_margins_permuted_cells_20percent_low_noise.dat};
    \label{subplot:intensity_JointTableSIMLatentMCMC_both_margins_permuted_cells_20percent_low_noise}
    
    \addplot [
        only marks,
        mark=diamond*, 
        fill=cpurple, 
        draw=cpurple, 
        opacity=\mydrawopacity, 
        fill opacity=\myfillopacity,
    ]
    table {intensity_JointTableSIMLatentMCMC_both_margins_permuted_cells_20percent_high_noise.dat};
    \label{subplot:intensity_JointTableSIMLatentMCMC_both_margins_permuted_cells_20percent_high_noise}

    \addplot [
        only marks,
        mark=mystar,
        fill=cdarkred, 
        draw=cdarkred,
    ]
    table {ground_truth.dat};
    \label{subplot:ground_truth_intensity}

    \addlegendimage{
        only marks,
        mark=*, 
        pattern=flexcheckerboard_black,
        draw=black,
        opacity=\mydrawopacity, 
        fill opacity=\myfillopacity,
    } 
    \label{subplot:intensity_low_noise}
    \addlegendimage{
        only marks,
        mark=*, 
        draw=black, 
        fill=black,
        opacity=\mydrawopacity, 
        fill opacity=\myfillopacity,
    } 
    \label{subplot:intensity_high_noise}
    \addlegendimage{
        only marks,
        mark=*,
        pattern=horrizontallines_black, 
        draw=black,
        opacity=\mydrawopacity, 
        fill opacity=\myfillopacity,
    } 
    \label{subplot:variable_noise}
    
    \end{axis}

        \matrix [
            draw,
            matrix of nodes,
            column sep=0.0,
            fill opacity=0.0,
            draw opacity=0.0,
            text opacity=1.0,
            anchor=north west,
            node font=\small,
            column 1/.style={anchor=west},
            column 2/.style={anchor=west}
        ] at  (0.1,5.7) {
            \ref*{subplot:ground_truth_table} $\mathbf{T}^\mathcal{D}$ \ref*{subplot:intensity_low_noise} $\gamma = 10^4$ \; \ref*{subplot:intensity_high_noise} $\gamma = 10^2$\; \ref*{subplot:variable_noise} $\gamma =\;$ free\;
            \ref*{subplot:intensity_JointTableSIMLatentMCMC_row_margin_low_noise},\ref*{subplot:intensity_JointTableSIMLatentMCMC_row_margin_high_noise} $\mathcal{C}_T=\left\{\mathbf{T}_{\cdot+}\right\}$ \\
            \ref*{subplot:intensity_SIMLatentMCMC_row_margin_low_noise},\ref*{subplot:intensity_SIMLatentMCMC_row_margin_high_noise} \citepalias{ellam2018} \ref*{subplot:intensity_NeuralABM_row_margin_low_noise},\ref*{subplot:intensity_NeuralABM_row_margin_high_noise},\ref*{subplot:intensity_NeuralABM_row_margin_learned_noise} \citepalias{gaskin2023} $\mathcal{C}_{\Lambda}=\left\{\boldsymbol{\Lambda}_{\cdot+}\right\}$ \;  \ref*{subplot:intensity_JointTableSIMLatentMCMC_both_margins_low_noise},\ref*{subplot:intensity_JointTableSIMLatentMCMC_both_margins_high_noise} $\mathcal{C}_T=\left\{\mathbf{T}_{\cdot+},\mathbf{T}_{+\cdot}\right\}$ \\
        };
        
        \matrix [
            draw,
            matrix of nodes,
            column sep=0.0,
            fill opacity=0.0,
            draw opacity=0.0,
            text opacity=1.0,
            anchor=north west,
            node font=\small,
            column 1/.style={anchor=west},
            column 2/.style={anchor=west}
        ] at  (0.1,4.9) {
            \ref*{subplot:intensity_JointTableSIMLatentMCMC_both_margins_permuted_cells_10percent_low_noise},
            \ref*{subplot:intensity_JointTableSIMLatentMCMC_both_margins_permuted_cells_10percent_high_noise}
            $\mathcal{C}_T=\left\{\mathbf{T}_{\cdot+}\;,\mathbf{T}_{+,\cdot}\;,\mathbf{T}_{\mathcal{X}_1}\right\}$ 
            \ref*{subplot:intensity_JointTableSIMLatentMCMC_both_margins_permuted_cells_20percent_low_noise},
            \ref*{subplot:intensity_JointTableSIMLatentMCMC_both_margins_permuted_cells_20percent_high_noise}:
            $\mathcal{C}_T=\left\{\mathbf{T}_{\cdot+}\;,\mathbf{T}_{+,\cdot}\;,\mathbf{T}_{\mathcal{X}_2}\right\}$ \\
        };

    \end{tikzpicture}
    
    \label{fig:lower_dim_embedding_intensity}
  \end{subfigure}
  \caption{Visualisation of the table (left) and intensity (right) samples projected in 2D using T-distributed stochastic neighbour embedding \citep{hinton2002}. Samples are coloured by the constraint sets in Table \ref{tab:metrics} for low (\ref*{subplot:intensity_low_noise}), high (\ref*{subplot:intensity_high_noise}) and variable (\ref*{subplot:variable_noise}) noise regimes. The ground truth table (\ref*{subplot:ground_truth_table}) is better covered by the discrete table posterior regardless of $\mathcal{C}$, and the table distribution becomes increasingly concentrated around the ground truth table in light of more data $\mathcal{C}_T$. Intensity samples are weakly informed through $\mathcal{C}_{\Lambda}$ and $\mathbb{P}(\mathbf{T}\vert \mathbf{x},\boldsymbol{\theta},\mathcal{C})$ and more distant from the ground truth.}
    \label{fig:lower_dim_embedding_samples}
\end{figure*}

Furthermore, we present a large-scale application of discrete ODM reconstruction to Cambridge commuting patterns from residence to workplace locations, using the ODM models in Table \ref{tab:contributions}. The precise experimental setup mimics that of \citep{ellam2018} and is provided in the Supporting information. In light of new summary statistics $\mathcal{C}_T$ (e.g. \ref*{subplot:table_both_margins_high_noise}, \ref*{subplot:table_both_margins_permuted_cells_10percent_high_noise},\ref*{subplot:table_both_margins_permuted_cells_20percent_high_noise}), the table posterior contracts and its high mass region concentrates around the ground truth table (\ref*{subplot:ground_truth_table}), as shown in Figure \ref{fig:lower_dim_embedding_samples}. The fact that the low-noise table samples (\ref*{subplot:table_row_margin_low_noise},\ref*{subplot:table_both_margins_low_noise},\ref*{subplot:table_both_margins_permuted_cells_10percent_low_noise},\ref*{subplot:table_both_margins_permuted_cells_20percent_low_noise}) are nearby their high-noise counterparts (\ref*{subplot:table_row_margin_high_noise},\ref*{subplot:table_both_margins_high_noise},\ref*{subplot:table_both_margins_permuted_cells_10percent_high_noise},\ref*{subplot:table_both_margins_permuted_cells_20percent_high_noise}) indicates a more dominant effect of the table likelihood on the posterior relative to that of the intensity SDE prior, which enforces the confidence in our reconstructed ODM. The intensity samples of \citep{gaskin2023} (\ref*{subplot:intensity_NeuralABM_row_margin_low_noise},\ref*{subplot:intensity_NeuralABM_row_margin_high_noise},\ref*{subplot:intensity_NeuralABM_row_margin_learned_noise}) have the highest variance amongst the sampled intensities due to the random initialisations of the Neural ODE solver in \citep{gaskin2023}. Despite this, the intensity distributions in \citep{ellam2018} and \citep{gaskin2023} have insufficient $\mathcal{C}_{\Lambda}$ constraints and a higher divergence from the ground truth table region than table samples. Our intensity samples are also distant from the ground truth table (\ref*{subplot:intensity_JointTableSIMLatentMCMC_row_margin_low_noise},\ref*{subplot:intensity_JointTableSIMLatentMCMC_both_margins_low_noise},\ref*{subplot:intensity_JointTableSIMLatentMCMC_both_margins_permuted_cells_10percent_low_noise},\ref*{subplot:intensity_JointTableSIMLatentMCMC_both_margins_permuted_cells_20percent_low_noise},\ref*{subplot:intensity_JointTableSIMLatentMCMC_row_margin_high_noise},\ref*{subplot:intensity_JointTableSIMLatentMCMC_both_margins_high_noise},\ref*{subplot:intensity_JointTableSIMLatentMCMC_both_margins_permuted_cells_10percent_high_noise},\ref*{subplot:intensity_JointTableSIMLatentMCMC_both_margins_permuted_cells_20percent_high_noise}) because they are informed strongly by $\mathcal{C}_{\Lambda}$ and weakly by $\mathcal{C}_{T}$ (See Figure \ref{fig:plate_diagram}), where the former set is smaller than the latter. 

\begin{table}[!ht]
  \centering
  \begin{adjustbox}{max width=\textwidth,center}
    \begin{tabular}{
      V{3.0}
      >{\centering\arraybackslash}m{0.23\textwidth}V{3.0}
      >{\centering\arraybackslash}m{0.18\textwidth}
      >{\centering\arraybackslash}m{0.05\textwidth}
      >{\centering\arraybackslash}m{0.03\textwidth}
      >{\centering\arraybackslash}m{0.05\textwidth}
      >{\centering\arraybackslash}m{0.05\textwidth}
      >{\centering\arraybackslash}m{0.16\textwidth}
      >{\centering\arraybackslash}m{0.14\textwidth}
      >{\centering\arraybackslash}m{0.18\textwidth}
      >{\centering\arraybackslash}m{0.16\textwidth}
      V{3.0}
      }
%

  \hlineB{3.0}
  \textit{Constrained ODM} &
    $\mathcal{C}$ &
    \textit{Method} &
    $\gamma$ &
    \textit{Quantity} &
    $\vert \mathcal{M} \vert$ &
    \textit{$\mathbb{E} \left[\text{MDB}\left(\mathbf{T}^{(1:N)},\mathbf{T}^{\mathcal{D}}\right)\right]$} &
    \multicolumn{1}{l}{\textit{$\text{SSI}\left(\mathbb{E} \left[\mathbf{T}^{(1:N)}\right],\mathbf{T}^{\mathcal{D}}\right)$}} &
    \textit{$\text{SRMSE}\left(\mathbb{E} \left[\mathbf{T}^{(1:N)}\right],\mathbf{T}^{\mathcal{D}}\right)$} &
    \textit{$\text{CP}_{99}(\mathbf{T},\mathbf{T}^{\mathcal{D}})$} \\ \hlineB{3.0}
     &
     &
     &
     &
    $\boldsymbol{\textcolor[HTML]{FFC20A}{\Lambda}}\vert\mathcal{C},\mathbf{y}$ &
    - &
    - &
    $\mathbf{0.72}$ &
    $0.71$ &
    $0.22$ \\
   &
     &
     &
    \multirow{-2}{*}{$10^4$} &
    $\mathbf{\textcolor[HTML]{1E88E5}{T}}\vert\mathcal{C},\mathbf{y}$ &
    - &
    $9986$ &
    $0.54$ &
    $0.95$ &
    $0.70$ \\
   &
     &
     &
     &
    $\boldsymbol{\textcolor[HTML]{FFC20A}{\Lambda}}\vert\mathcal{C},\mathbf{y}$ &
    - &
    - &
    $0.48$ &
    $0.83$ &
    $0.37$ \\
   &
    \multirow{-4}{*}{$\textcolor[HTML]{FFC20A}{\Lambda}_{++}$} &
     &
    \multirow{-2}{*}{$10^2$} &
    $\mathbf{\textcolor[HTML]{1E88E5}{T}}\vert\mathcal{C},\mathbf{y}$ &
    - &
    $12684$ &
    $0.70$ &
    $0.85$ &
    $\mathbf{0.94}^{\mbox{\large $*$}}$ \\
   &
     &
     &
     &
    $\boldsymbol{\textcolor[HTML]{FFC20A}{\Lambda}}\vert\mathcal{C},\mathbf{y}$ &
    - &
    - &
    $\mathbf{0.72}$ &
    $0.73$ &
    $0.20$ \\
   &
     &
     &
    \multirow{-2}{*}{$10^4$} &
    $\mathbf{\textcolor[HTML]{1E88E5}{T}}\vert\mathcal{C},\mathbf{y}$ &
    $0$ &
    $\mathbf{7385}$ &
    $\mathbf{0.71}$ &
    $0.73$ &
    $0.65$ \\
   &
     &
     &
     &
    $\boldsymbol{\textcolor[HTML]{FFC20A}{\Lambda}}\vert\mathcal{C},\mathbf{y}$ &
    - &
    - &
    $0.70$ &
    $0.70$ &
    $0.41$ \\
   &
    \multirow{-4}{*}{$\textcolor[HTML]{1E88E5}{T}_{++},\textcolor[HTML]{FFC20A}{\Lambda}_{++}$} &
    \multirow{-8}{*}{This work} &
    \multirow{-2}{*}{$10^2$} &
    $\mathbf{\textcolor[HTML]{1E88E5}{T}}\vert\mathcal{C},\mathbf{y}$ &
    $0$ &
    $7600$ &
    $0.69$ &
    $0.70$ &
    $0.68$ \\ \Xcline{2-10}{0.01pt}
   &
     &
    \multicolumn{1}{l}{} &
    $10^4$ &
     &
    - &
    - &
    $\mathbf{0.72}$ &
    $0.74$ &
    $0.21$ \\
   &
    \multirow{-2}{*}{$\textcolor[HTML]{FFC20A}{\Lambda}_{++}$} &
    \multicolumn{1}{l}{\multirow{-2}{*}{\citepalias{ellam2018}}} &
    $10^2$ &
    \multirow{-2}{*}{$\boldsymbol{\textcolor[HTML]{FFC20A}{\Lambda}}\vert\mathcal{C},\mathbf{y}$} &
    - &
    - &
    $0.70$ &
    $0.70$ &
    $0.40$ \\ \Xcline{2-10}{0.01pt}
   &
     &
    \multicolumn{1}{l}{} &
    $10^4$ &
     &
    - &
    - &
    $0.70$ &
    $\mathbf{0.67}$ &
    $0.77$ \\
   &
     &
    \multicolumn{1}{l}{} &
    $10^2$ &
     &
    - &
    - &
    $0.65$ &
    $\mathbf{0.68}$ &
    $0.85$ \\
  \multirow{-13}{*}{Totally} &
    \multirow{-3}{*}{$\textcolor[HTML]{FFC20A}{\Lambda}_{++}$} &
    \multicolumn{1}{l}{\multirow{-3}{*}{\citepalias{gaskin2023}}} &
    learned &
    \multirow{-3}{*}{$\boldsymbol{\textcolor[HTML]{FFC20A}{\Lambda}}\vert\mathcal{C},\mathbf{y}$} &
    - &
    - &
    $0.64$ &
    $0.70$ &
    $0.84$ \\ \hlineB{3.0}
   &
     &
     &
     &
    $\boldsymbol{\textcolor[HTML]{FFC20A}{\Lambda}}\vert\mathcal{C},\mathbf{y}$ &
    - &
    - &
    $0.72$ &
    $0.74$ &
    $0.20$ \\
   &
     &
     &
    \multirow{-2}{*}{$10^4$} &
    $\mathbf{\textcolor[HTML]{1E88E5}{T}}\vert\mathcal{C},\mathbf{y}$ &
    $0$ &
    $\mathbf{6806}$ &
    $0.72$ &
    $0.69$ &
    $0.67$ \\
   &
     &
     &
     &
    $\boldsymbol{\textcolor[HTML]{FFC20A}{\Lambda}}\vert\mathcal{C},\mathbf{y}$ &
    - &
    - &
    $0.70$ &
    $0.71$ &
    $0.39$ \\
   &
    \multirow{-4}{*}{$\mathbf{\textcolor[HTML]{1E88E5}{T}}_{\cdot +},\textcolor[HTML]{FFC20A}{\Lambda}_{++}$} &
     &
    \multirow{-2}{*}{$10^2$} &
    $\mathbf{\textcolor[HTML]{1E88E5}{T}}\vert\mathcal{C},\mathbf{y}$ &
    $0$ &
    $7067$ &
    $0.71$ &
    $0.63$ &
    $0.69$ \\
   &
     &
     &
     &
    $\boldsymbol{\textcolor[HTML]{FFC20A}{\Lambda}}\vert\mathcal{C},\mathbf{y}$ &
    - &
    - &
    $\mathbf{0.74}$ &
    $0.69$ &
    $0.23$ \\
   &
     &
     &
    \multirow{-2}{*}{$10^4$} &
    $\mathbf{\textcolor[HTML]{1E88E5}{T}}\vert\mathcal{C},\mathbf{y}$ &
    $0$ &
    $6819$ &
    $0.72$ &
    $0.69$ &
    $0.68$ \\
   &
     &
     &
     &
    $\boldsymbol{\textcolor[HTML]{FFC20A}{\Lambda}}\vert\mathcal{C},\mathbf{y}$ &
    - &
    - &
    $0.72$ &
    $\mathbf{0.62}$ &
    $0.42$ \\
   &
    \multirow{-4}{*}{$\mathbf{\textcolor[HTML]{1E88E5}{T}}_{\cdot +},\textcolor[HTML]{FFC20A}{\Lambda}_{++},\boldsymbol{\textcolor[HTML]{FFC20A}{\Lambda}}_{+\cdot}$} &
    \multirow{-8}{*}{This work} &
    \multirow{-2}{*}{$10^2$} &
    $\mathbf{\textcolor[HTML]{1E88E5}{T}}\vert\mathcal{C},\mathbf{y}$ &
    $0$ &
    $6944$ &
    $0.71$ &
    $\mathbf{0.62}$ &
    $0.71$ \\ \Xcline{2-10}{0.01pt}
   &
     &
    \multicolumn{1}{l}{} &
    $10^4$ &
     &
    - &
    - &
    $\mathbf{0.74}$ &
    $0.69$ &
    $0.22$ \\
   &
    \multirow{-2}{*}{$\boldsymbol{\textcolor[HTML]{FFC20A}{\Lambda}}_{+\cdot}$} &
    \multicolumn{1}{l}{\multirow{-2}{*}{\citepalias{ellam2018}}} &
    $10^2$ &
    \multirow{-2}{*}{$\boldsymbol{\textcolor[HTML]{FFC20A}{\Lambda}}\vert\mathcal{C},\mathbf{y}$} &
    - &
    - &
    $0.72$ &
    $\mathbf{0.62}$ &
    $0.41$ \\ \Xcline{2-10}{0.01pt}
   &
     &
    \multicolumn{1}{l}{} &
    $10^4$ &
     &
    - &
    - &
    $0.71$ &
    $\mathbf{0.61}$ &
    $0.79$ \\
   &
     &
    \multicolumn{1}{l}{} &
    $10^2$ &
     &
    - &
    - &
    $0.66$ &
    $\mathbf{0.62}$ &
    $\mathbf{0.87}$ \\
  \multirow{-13}{*}{Singly} &
    \multirow{-3}{*}{$\boldsymbol{\textcolor[HTML]{FFC20A}{\Lambda}}_{+\cdot}$} &
    \multicolumn{1}{l}{\multirow{-3}{*}{\citepalias{gaskin2023}}} &
    learned &
    \multirow{-3}{*}{$\boldsymbol{\textcolor[HTML]{FFC20A}{\Lambda}}\vert\mathcal{C},\mathbf{y}$} &
    - &
    - &
    $0.65$ &
    $0.65$ &
    $\mathbf{0.86}$ \\ \hlineB{3.0}
   &
     &
     &
     &
    $\boldsymbol{\textcolor[HTML]{FFC20A}{\Lambda}}\vert\mathcal{C},\mathbf{y}$ &
    - &
    - &
    $0.73$ &
    $0.71$ &
    $0.19$ \\
   &
     &
     &
    \multirow{-2}{*}{$10^4$} &
    $\mathbf{\textcolor[HTML]{1E88E5}{T}}\vert\mathcal{C},\mathbf{y}$ &
    $182988$ &
    $\mathbf{5912}$ &
    $\mathbf{0.76}$ &
    $\mathbf{0.59}$ &
    $0.68$ \\
   &
     &
     &
     &
    $\boldsymbol{\textcolor[HTML]{FFC20A}{\Lambda}}\vert\mathcal{C},\mathbf{y}$ &
    - &
    - &
    $0.61$ &
    $1.12$ &
    $0.17$ \\
  \multirow{-4}{*}{Doubly} &
    \multirow{-4}{*}{$\mathbf{\textcolor[HTML]{1E88E5}{T}}_{\cdot +},\mathbf{\textcolor[HTML]{1E88E5}{T}}_{+ \cdot},\textcolor[HTML]{FFC20A}{\Lambda}_{++}$} &
    \multirow{-4}{*}{This work} &
    \multirow{-2}{*}{$10^2$} &
    $\mathbf{\textcolor[HTML]{1E88E5}{T}}\vert\mathcal{C},\mathbf{y}$ &
    $182988$ &
    $6370$ &
    $0.73$ &
    $\mathbf{0.59}$ &
    $\mathbf{0.80}$ \\ \hlineB{3.0}
   &
     &
     &
     &
    $\boldsymbol{\textcolor[HTML]{FFC20A}{\Lambda}}\vert\mathcal{C},\mathbf{y}$ &
    - &
    - &
    $0.72$ &
    $0.71$ &
    $0.19$ \\
   &
     &
     &
    \multirow{-2}{*}{$10^4$} &
    $\mathbf{\textcolor[HTML]{1E88E5}{T}}\vert\mathcal{C},\mathbf{y}$ &
    $119421$ &
    $\mathbf{5314}$ &
    $\mathbf{0.78}$ &
    $\mathbf{0.55}$ &
    $\mathbf{0.79}$ \\
   &
     &
     &
     &
    $\boldsymbol{\textcolor[HTML]{FFC20A}{\Lambda}}\vert\mathcal{C},\mathbf{y}$ &
    - &
    - &
    $0.63$ &
    $1.10$ &
    $0.13$ \\
  \multirow{-4}{*}{Doubly and 10\% cell} &
    \multirow{-4}{*}{$\mathbf{\textcolor[HTML]{1E88E5}{T}}_{\cdot +},\mathbf{\textcolor[HTML]{1E88E5}{T}}_{+\cdot},\mathbf{\textcolor[HTML]{1E88E5}{T}}_{\mathcal{X}_1},\textcolor[HTML]{FFC20A}{\Lambda}_{++}$} &
    \multirow{-4}{*}{This work} &
    \multirow{-2}{*}{$10^2$} &
    $\mathbf{\textcolor[HTML]{1E88E5}{T}}\vert\mathcal{C},\mathbf{y}$ &
    $119421$ &
    $5449$ &
    $\mathbf{0.77}$ &
    $\mathbf{0.56}$ &
    $\mathbf{0.78}$ \\ \hlineB{3.0}
   &
     &
     &
     &
    $\boldsymbol{\textcolor[HTML]{FFC20A}{\Lambda}}\vert\mathcal{C},\mathbf{y}$ &
    - &
    - &
    $0.72$ &
    $0.71$ &
    $0.19$ \\
   &
     &
     &
    \multirow{-2}{*}{$10^4$} &
    $\mathbf{\textcolor[HTML]{1E88E5}{T}}\vert\mathcal{C},\mathbf{y}$ &
    $74314$ &
    $\mathbf{4521}^{\mbox{\large $*$}}$ &
    $\mathbf{0.81}^{\mbox{\large $*$}}$ &
    $\mathbf{0.51}^{\mbox{\large $*$}}$ &
    $0.77$ \\
   &
     &
     &
     &
    $\boldsymbol{\textcolor[HTML]{FFC20A}{\Lambda}}\vert\mathcal{C},\mathbf{y}$ &
    - &
    - &
    $0.64$ &
    $1.05$ &
    $0.21$ \\
  \multirow{-4}{*}{Doubly and 20\% cell} &
    \multirow{-4}{*}{$\mathbf{\textcolor[HTML]{1E88E5}{T}}_{\cdot +},\mathbf{\textcolor[HTML]{1E88E5}{T}}_{+\cdot},\mathbf{\textcolor[HTML]{1E88E5}{T}}_{\mathcal{X}_2},\textcolor[HTML]{FFC20A}{\Lambda}_{++}$} &
    \multirow{-4}{*}{This work} &
    \multirow{-2}{*}{$10^2$} &
    $\mathbf{\textcolor[HTML]{1E88E5}{T}}\vert\mathcal{C},\mathbf{y}$ &
    $74314$ &
    $4543$ &
    $\mathbf{0.80}$ &
    $\mathbf{0.51}^{\mbox{\large $*$}}$ &
    $\mathbf{0.81}$ \\ \hlineB{3.0}
    \end{tabular}
  \end{adjustbox}
  \caption{Ground truth table validation metrics comparing our method against \citep{ellam2018} and \citep{gaskin2023} in the continuous intensity and discrete table levels across noise regimes $\gamma$ and constraint sets $\mathcal{C}$. Our method achieves the best error-coverage tradeoff in the doubly and 20\% cell constrained ODM as well as the best reconstruction errors (MDB, SRMSE), data fit (SSI) and ground truth coverage (CP) across all ODM models, as indicated by \mbox{\large $*$}. As the size of $\mathcal{C}$ increases, the ground truth table is better reconstructed (reduced SRMSE, increased SSI) and covered by the table posterior. In the cases of total and singly constrained ODMs, we attain a similar SSI and coverage to \citep{gaskin2023}. We note that the last three ODM models cannot be handled by \citep{ellam2018} or \citep{gaskin2023}.}
  \label{tab:metrics}
  \end{table}

The ODM validation results summarised in Table \ref{tab:metrics} affirm that reasoning at the discrete table level accomplishes greater error reductions and enhanced ground truth coverage. Data fitness and posterior prediction errors are computed using the Sorensen similarity index (SSI), standardised root mean square error (SRMSE) and Markov Basis distance (MBD). Uncertainty quantification is evaluated based on the coverage probability (CP) of ground truth table cells contained in the $99\%$ highest posterior mass (HPM) region. We elucidate each of these metrics in the Supporting information. The best error-coverage tradeoff, lowest SRMSE, MBD, and highest SSI are attained in the doubly and $20\%$ cell constrained model due to it having the richest constraint set $\mathcal{C}$. Our doubly constrained models account for an SRMSE reduction of $16\%$ relative to the singly constrained model while sustaining an acceptable ground truth coverage equal to approximately $80\%$. The apparent increase in the mean intensity SRMSE across all doubly constrained models potentially alludes to the SIM's lack of expressivity. This may be because $\mathcal{C}_T$ and $\mathbf{y}$ give rise to conflicting SIM parameter configurations in the limit of large $\mathcal{C}_T$. The MBD decrease in the growth of $\mathcal{C}$ indicates that the expected upper bound on the number of Markov Basis moves required to exactly match $\mathbf{T}^{\mathcal{D}}$ is reduced. In the totally and singly constraint models, our table posterior mean matches or outperforms the intensity mean of \citep{ellam2018} and \citep{gaskin2023} in terms of data fit (SSI) and SRMSE. The highest ground truth cell coverage probability ($94\%$) is achieved by the most relaxed table, namely the unconstrained table, but entails a high bias. A lower SRMSE ($0.67$ instead of $0.85$) is attained by the intensity field of the totally constrained model in \citep{gaskin2023}, at the expense of a coverage probability drop from $94\%$ to $85\%$ and a discretisation error accrued for population synthesis.

Our framework's benefits also extend to SIM parameter estimation. In Figure \ref{fig:posterior_mean_x_predictions} we show that the log destination attraction prediction $R^2$ increases for larger constraint sets $\mathcal{C}_T$ from $0.77$ to $0.84$. This allows us to explain the evolved destination employment by informing the data-generating process through $\mathcal{C}$ instead of increasing the diffusivity of the SDE prior in \eqref{eq:boltzmann_gibbs_measure}. Therefore, we mitigate the identifiability issues of the multimodal $\boldsymbol{\theta}$ posterior emerging in the high noise regime. The $\mathbf{x}$ predictions are further improved in the high noise regime ($R^2=0.99$) compared to the low noise counterpart ($R^2=0.84$), which favours the hypothesis of a stochastic growth in destination employment. In the high noise regime, unbiased estimators of $\boldsymbol{\theta}$ are devised based on a more disperse SDE prior on $\boldsymbol{\Lambda}$ \ref{eq:boltzmann_gibbs_measure}. Increased prior diffusivity steers the $\mathbf{x}$ posterior marginal towards a larger region of plausible SDE solutions in the vicinity of $\mathbf{y}$, which improves the quality of $\mathbf{x}$ predictions. Additionally, we recover the $\mathbf{x}$ and $\boldsymbol{\theta}$ posterior marginals obtained in \citep{ellam2018} at a fraction of additional computational cost.

\begin{figure}[!ht]
  \centering
  \begin{subfigure}{.5\textwidth}
    \centering
\begin{tikzpicture}

\pgfplotsset{
    table/search path={./tex_tables/log_destination_attraction_predictions/},
}
\pgfdeclareplotmark{mystar}{
    \node[star,star point ratio=2.25,minimum size=5pt,inner sep=0pt,opacity=0.75,draw=cpurple,solid,fill=cpurple] {};
  }

\begin{axis}[
legend style={
  fill opacity=0.8,
  draw opacity=1,
  text opacity=1,
  at={(0.03,0.97)},
  anchor=north west,
  draw=none
},
tick pos=both,
x grid style={darkgray176},
xlabel={\(\displaystyle \mathbb{{E}}[\mathbf{x}\vert\mathbf{y}]\)},
xmin=-4.5, xmax=-0.5,
xtick={-4.5,-4,-3.5,-3,-2.5,-2,-1.5,-1,-0.5},
y grid style={darkgray176},
ylabel={\(\displaystyle \log{\mathbf{y}}\)},
ymin=-4.5, ymax=-0.5,
ytick={-4.5,-4,-3.5,-3,-2.5,-2,-1.5,-1,-0.5},
]


\addplot [fill=cskyblue, draw=cskyblue, opacity=1.0, fill opacity=0.0, mark=*, only marks]
table {type_JointTableSIMLatentMCMC_noise_regime_low_experiment_title_grand_total.dat};
\label{subplot:type_JointTableSIMLatentMCMC_noise_regime_low_experiment_title_grand_total}


\addplot [fill=cdarkblue, draw=cdarkblue, opacity=1.0, fill opacity=0.0, mark=triangle*, only marks]
table {type_JointTableSIMLatentMCMC_noise_regime_low_experiment_title_row_margin.dat};
\label{subplot:type_JointTableSIMLatentMCMC_noise_regime_low_experiment_title_row_margin}


\addplot [fill=cgrassgreen, draw=cgrassgreen, opacity=1.0, fill opacity=0.0, mark=square*, only marks]
table {type_JointTableSIMLatentMCMC_noise_regime_low_experiment_title_both_margins.dat};
\label{subplot:type_JointTableSIMLatentMCMC_noise_regime_low_experiment_title_both_margins}


\addplot [fill=corange, draw=corange, opacity=1.0, fill opacity=0.0, mark=pentagon*, only marks]
table {type_JointTableSIMLatentMCMC_noise_regime_low_experiment_title_both_margins_permuted_cells_10percent.dat};
\label{subplot:type_JointTableSIMLatentMCMC_noise_regime_low_experiment_title_both_margins_permuted_cells_10percent}


\addplot [fill=cpurple, draw=cpurple, opacity=1.0, fill opacity=0.0, mark=diamond*, only marks]
table {type_JointTableSIMLatentMCMC_noise_regime_low_experiment_title_both_margins_permuted_cells_20percent.dat};
\label{subplot:type_JointTableSIMLatentMCMC_noise_regime_low_experiment_title_both_margins_permuted_cells_20percent}

\draw[variable=\x]  plot ({\x}, {\x});

\end{axis}

\matrix [
            draw,
            matrix of nodes,
            column sep=0.0,
            fill opacity=0.0,
            draw opacity=0.0,
            text opacity=1.0,
            anchor=north west,
            node font=\small,
            column 1/.style={anchor=west}
            ] at (0.1,5.6) {
                \ref*{subplot:type_JointTableSIMLatentMCMC_noise_regime_low_experiment_title_both_margins_permuted_cells_20percent}
                $\mathcal{C}_T=\left\{\mathbf{T}_{\cdot+}\;,\mathbf{T}_{+,\cdot}\;,\mathbf{T}_{\mathcal{X}_2}\right\}\; R^2 = 0.84$ \\
                \ref*{subplot:type_JointTableSIMLatentMCMC_noise_regime_low_experiment_title_both_margins_permuted_cells_10percent} $\mathcal{C}_T=\left\{\mathbf{T}_{\cdot+}\;,\mathbf{T}_{+,\cdot}\;,\mathbf{T}_{\mathcal{X}_1}\right\} \; R^2 = 0.84$ \\
                \ref*{subplot:type_JointTableSIMLatentMCMC_noise_regime_low_experiment_title_both_margins} $\mathcal{C}_T=\left\{\mathbf{T}_{\cdot+}\;,\mathbf{T}_{+,\cdot}\right\} \; R^2 = 0.83$ \\
                \ref*{subplot:type_JointTableSIMLatentMCMC_noise_regime_low_experiment_title_row_margin} $\mathcal{C}_T=\left\{\mathbf{T}_{\cdot+}\right\}\; R^2 = 0.77$ \\
                \ref*{subplot:type_JointTableSIMLatentMCMC_noise_regime_low_experiment_title_grand_total} $\mathcal{C}_T=\left\{\mathbf{T}_{++}\right\}\; R^2 = 0.77$ \\
            };
                
\end{tikzpicture}
      \label{fig:posterior_mean_x_predictions_low_noise}
  \end{subfigure}%
  \begin{subfigure}{.5\textwidth}
    \centering
\begin{tikzpicture}

  \pgfplotsset{
      table/search path={./tex_tables/log_destination_attraction_predictions/},
  }
  
  \begin{axis}[
  legend style={
    fill opacity=0.8,
    draw opacity=1,
    text opacity=1,
    at={(0.03,0.97)},
    anchor=north west,
    draw=none
  },
  tick pos=both,
  x grid style={darkgray176},
  xlabel={\(\displaystyle \mathbb{{E}}[\mathbf{x}\vert\mathbf{y}]\)},
  xmin=-4.5, xmax=-0.5,
  xtick={-4.5,-4,-3.5,-3,-2.5,-2,-1.5,-1,-0.5},
  y grid style={darkgray176},
  ylabel={\(\displaystyle \log{\mathbf{y}}\)},
  ymin=-4.5, ymax=-0.5,
  ytick={-4.5,-4,-3.5,-3,-2.5,-2,-1.5,-1,-0.5},
  ]

  
  \addplot [draw=cskyblue, fill=cskyblue, mark=*, opacity=1.0, fill opacity=0.0, only marks]
  table {type_JointTableSIMLatentMCMC_noise_regime_high_experiment_title_grand_total.dat};
  \label{subplot:type_JointTableSIMLatentMCMC_noise_regime_high_experiment_title_grand_total}
  
  
  \addplot [draw=cdarkblue, fill=cdarkblue, mark=triangle, opacity=1.0, fill opacity=0.0, only marks]
  table {type_JointTableSIMLatentMCMC_noise_regime_high_experiment_title_row_margin.dat};
  \label{subplot:type_JointTableSIMLatentMCMC_noise_regime_high_experiment_title_row_margin}
  
  
  \addplot [draw=cgrassgreen, fill=cgrassgreen, mark=square, opacity=1.0, fill opacity=0.0, only marks]
  table {type_JointTableSIMLatentMCMC_noise_regime_high_experiment_title_both_margins.dat};
  \label{subplot:type_JointTableSIMLatentMCMC_noise_regime_high_experiment_title_both_margins}
  
  
  \addplot [draw=corange, fill=corange, mark=pentagon, opacity=1.0, fill opacity=0.0, only marks]
  table {type_JointTableSIMLatentMCMC_noise_regime_high_experiment_title_both_margins_permuted_cells_10percent.dat};
  \label{subplot:type_JointTableSIMLatentMCMC_noise_regime_high_experiment_title_both_margins_permuted_cells_10percent}
  
  
  \addplot [draw=cpurple, fill=cpurple, mark=diamond, opacity=1.0, fill opacity=0.0, only marks]
  table {type_JointTableSIMLatentMCMC_noise_regime_high_experiment_title_both_margins_permuted_cells_20percent.dat};
  \label{subplot:type_JointTableSIMLatentMCMC_noise_regime_high_experiment_title_both_margins_permuted_cells_20percent}
  
  \draw[variable=\x]  plot ({\x}, {\x});

  \end{axis}

  \matrix [
            draw,
            matrix of nodes,
            column sep=0.0,
            fill opacity=0.0,
            draw opacity=0.0,
            text opacity=1.0,
            anchor=north west,
            node font=\small,
            column 1/.style={anchor=west}
        ] at (0.1,5.6) {
          \ref*{subplot:type_JointTableSIMLatentMCMC_noise_regime_high_experiment_title_both_margins_permuted_cells_20percent}
          $\mathcal{C}_T=\left\{\mathbf{T}_{\cdot+}\;,\mathbf{T}_{+,\cdot}\;,\mathbf{T}_{\mathcal{X}_2}\right\}\; R^2 = 0.99$ \\
          
          \ref*{subplot:type_JointTableSIMLatentMCMC_noise_regime_high_experiment_title_both_margins_permuted_cells_10percent}
          $\mathcal{C}_T=\left\{\mathbf{T}_{\cdot+}\;,\mathbf{T}_{+,\cdot}\;,\mathbf{T}_{\mathcal{X}_1}\right\}\; R^2 = 0.99$ \\
          
          \ref*{subplot:type_JointTableSIMLatentMCMC_noise_regime_high_experiment_title_both_margins} $\mathcal{C}_T=\left\{\mathbf{T}_{\cdot+},\mathbf{T}_{+\cdot}\right\}\; R^2 = 0.99$ \\
          \ref*{subplot:type_JointTableSIMLatentMCMC_noise_regime_high_experiment_title_row_margin} $\mathcal{C}_T=\left\{\mathbf{T}_{\cdot+}\right\}\; R^2 = 0.99$ \\
          \ref*{subplot:type_JointTableSIMLatentMCMC_noise_regime_high_experiment_title_grand_total} $\mathcal{C}_T=\left\{\mathbf{T}_{++}\right\}\; R^2 = 0.99$ \\
        };
  
  \end{tikzpicture}
  
    \label{fig:posterior_mean_x_predictions_high_noise}
  \end{subfigure}
  \caption{Posterior predictions of $\mathbb{E}[\mathbf{x}\vert\mathbf{y},\mathcal{C}]$ against observed log employment data $\mathbf{y}$ using Algorithms \ref{alg:tractable_table_sampling_algorithm} and \ref{alg:intractable_sampling_algorithm}. The high noise regime (right) achieves a more plausible data fit than its low noise counterpart (left) due to improved uncertainty quantification. This is attributed to the unbiased estimation of the normalising constant \eqref{eq:boltzmann_gibbs_measure_normalising_constant} obtained in the former case compared to biased estimation and subsequent collapse of the $\boldsymbol{\theta}$ posterior to a Dirac mass in the latter case. Enriching $\mathcal{C}_T$ from \ref*{subplot:type_JointTableSIMLatentMCMC_noise_regime_low_experiment_title_grand_total},\ref*{subplot:type_JointTableSIMLatentMCMC_noise_regime_low_experiment_title_row_margin} to \ref*{subplot:type_JointTableSIMLatentMCMC_noise_regime_low_experiment_title_both_margins},\ref*{subplot:type_JointTableSIMLatentMCMC_noise_regime_low_experiment_title_both_margins_permuted_cells_10percent},\ref*{subplot:type_JointTableSIMLatentMCMC_noise_regime_low_experiment_title_both_margins_permuted_cells_20percent} increases $R^2$, illustrating the added benefits of inference on a higher resolution table level.}%
  \label{fig:posterior_mean_x_predictions}
\end{figure}

In conclusion, performing population synthesis directly on the discrete high-resolution space of agent attributes bears tangible empirical benefits. These include improved reconstruction and coverage of the ground truth ODM, as well as table posterior contraction in the limit of constraint data $\mathcal{C}_T$. If population synthesis is not of interest, SIM parameters can be adequately estimated using competitive approaches such as \citep{gaskin2023}. Combining such optimisation methods with Markov Basis MCMC in a naive Bayes scheme can be promising, as it exploits the advantages of both optimisation and MCMC techniques. Regardless, the apparent shortcomings of SIMs call for a comparative study of various intensity model classes, such as discrete choice models \citep{train2009}. Finally, the multi-faceted nature of population synthesis opens up future avenues of research beyond ODM reconstruction, where more convoluted dependency structures can be exploited.

\section*{Software and Data}
Trip and employment data for Cambridge, UK are obtained from  \citep{ons_trip_data,ons_employment_data}. Individual home and work facility locations are extracted from\citep{osm_data}. Our codebase has been released on \href{https://github.com/YannisZa/ticodm}{https://github.com/YannisZa/ticodm}.


\printbibliography

\section*{Appendix}\label{sec:appendices}

\appendix

\section{Table posterior marginals}\label{app:table_posterior_marginals}
In this section, we derive all conditional table posteriors outlined in Table \ref{tab:contributions}. Assume that the trip intensity $\boldsymbol{\Lambda}$ is known. 

\begin{remark}\label{remark:table_statistic_distributions}
    Assuming conditional independence of $T_{ij}$ and $T_{km}$ given $\boldsymbol{\Lambda} \;\; \forall \; i\neq k,\;j\neq m$ standard results \citep{bishop2007} dictate that the following hold:
    \begin{align*}
        \mathbf{T}_{\cdot +}\vert \boldsymbol{\Lambda} &\sim \text{Poisson}\left(\boldsymbol{\Lambda}_{\cdot +}\right) \\
        T_{++}\vert \boldsymbol{\Lambda} &\sim \text{Poisson}\left(\Lambda_{++}\right)
    \end{align*}
\end{remark}

\begin{corollary}\label{corollary:table_margins}
The random variable $\mathbf{T}_{\cdot +} \vert T_{++}, \boldsymbol{\Lambda}$ follows a Multinomial distribution with $T_{++}$ number of trials and $\frac{\boldsymbol{\Lambda}_{\cdot +}}{\Lambda_{++}}$ event probabilities.
\end{corollary}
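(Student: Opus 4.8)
The plan is to reduce the statement to the classical fact that independent Poisson counts, conditioned on their sum, are jointly Multinomial. First I would invoke Remark~\ref{remark:table_statistic_distributions} to record that, given $\boldsymbol{\Lambda}$, each row sum satisfies $T_{i+}\sim\mathrm{Poisson}(\Lambda_{i+})$ and the grand total satisfies $T_{++}\sim\mathrm{Poisson}(\Lambda_{++})$. The structural point I would state explicitly is that the row sums $T_{1+},\dots,T_{I+}$ are mutually independent given $\boldsymbol{\Lambda}$: since $T_{i+}=\sum_{j=1}^J T_{ij}$ depends only on the cells in row $i$, and the cells $\{T_{ij}\}$ are conditionally independent given $\boldsymbol{\Lambda}$ by~\eqref{eq:poisson}, sums taken over the disjoint index blocks $\{(i,j):j=1,\dots,J\}$, $i=1,\dots,I$, remain independent.

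Next I would compute the conditional mass function directly. For $\mathbf{t}=(t_1,\dots,t_I)\in\mathbb{N}^I$, Bayes' rule gives
\[
\mathbb{P}\bigl(\mathbf{T}_{\cdot +}=\mathbf{t}\,\big\vert\, T_{++}=n,\boldsymbol{\Lambda}\bigr)
=\frac{\mathbb{P}\bigl(\mathbf{T}_{\cdot +}=\mathbf{t},\,T_{++}=n\mid\boldsymbol{\Lambda}\bigr)}{\mathbb{P}\bigl(T_{++}=n\mid\boldsymbol{\Lambda}\bigr)}.
\]
The numerator vanishes unless $\sum_{i=1}^I t_i=n$; when that equality holds the event $\{T_{++}=n\}$ is automatic, so by the independence just noted the numerator equals $\prod_{i=1}^I e^{-\Lambda_{i+}}\Lambda_{i+}^{t_i}/t_i!$, while the denominator equals $e^{-\Lambda_{++}}\Lambda_{++}^{n}/n!$ by the second line of Remark~\ref{remark:table_statistic_distributions}.

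Finally I would simplify the ratio. Using $\sum_{i=1}^I\Lambda_{i+}=\Lambda_{++}$, which follows from~\eqref{eq:orig_demand_constraint} together with~\eqref{eq:total_constraint}, the exponential prefactors cancel and regrouping the factorials yields, for $\sum_i t_i=n$,
\[
\mathbb{P}\bigl(\mathbf{T}_{\cdot +}=\mathbf{t}\,\big\vert\, T_{++}=n,\boldsymbol{\Lambda}\bigr)
=\binom{n}{t_1\dots t_I}\prod_{i=1}^I\left(\frac{\Lambda_{i+}}{\Lambda_{++}}\right)^{t_i},
\]
and zero otherwise, which is precisely the $\mathrm{Multinomial}\bigl(n,\boldsymbol{\Lambda}_{\cdot +}/\Lambda_{++}\bigr)$ probability mass function; taking $n=T_{++}$ proves the corollary. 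There is no genuine obstacle here: the one step worth a careful sentence is the conditional independence of the row sums, as everything else is the textbook Poisson-to-Multinomial conditioning computation. The same argument applied column-by-column within each row additionally yields the product-Multinomial form~\eqref{eq:product_Multinomial}.
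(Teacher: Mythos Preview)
Your proposal is correct and follows essentially the same route as the paper: apply Bayes' rule to $\mathbb{P}(\mathbf{T}_{\cdot +}\mid T_{++},\boldsymbol{\Lambda})$, use the Poisson laws recorded in Remark~\ref{remark:table_statistic_distributions} for numerator and denominator, and cancel to obtain the Multinomial kernel. If anything, you are more careful than the paper, which writes the product $\prod_i$ in the numerator without explicitly justifying the mutual independence of the $T_{i+}$; your sentence deriving this from the disjointness of the row index blocks fills that small gap.
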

\begin{proof}
    Apply Baye's rule and notice that $\mathbf{T}$ fully determines any member of $\mathcal{C}_T$. It follows that 
    \begin{align*}
        \mathbb{P}\left(\mathbf{T}_{\cdot +} \vert T_{++}, \boldsymbol{\Lambda}\right) &=  \prod_{i}^{I} \frac{\Lambda_{ij}^{T_{i+}}\exp(-\Lambda_{i+})}{T_{i+}!} \left( \frac{\Lambda_{ij}^{T_{++}}\exp(-\Lambda_{++})}{T_{++}!} \right)^{-1} \\
        &= \prod_{i}^{I} \frac{T_{++}!}{T_{i+}!} \left(\frac{\Lambda_{i+}}{\Lambda_{++}}\right)^{T_{i+}}.
    \end{align*}
    This is the kernel of a Multinomial distribution on a one-way contingency table $\mathbf{T}_{\cdot +}$ with the desired parameters.
\end{proof}

\begin{lemma}
Let $\mathcal{C}_T=\left\{T_{++}\right\}$. Then, $\mathbf{T} \vert \mathcal{C},\boldsymbol{\Lambda}$ is a Multinomially distributed random variable with $T_{++}$ number of trials and $\frac{\boldsymbol{\Lambda}}{\Lambda_{++}}$ event probabilities.
\end{lemma}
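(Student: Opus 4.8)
The plan is to mirror the argument of Corollary~\ref{corollary:table_margins}, but applied to the full collection of $IJ$ cells rather than only the row sums. First I would invoke Bayes' rule to write
\[
\mathbb{P}(\mathbf{T}\mid T_{++},\boldsymbol{\Lambda}) = \frac{\mathbb{P}(\mathbf{T},T_{++}\mid\boldsymbol{\Lambda})}{\mathbb{P}(T_{++}\mid\boldsymbol{\Lambda})},
\]
and then use the fact, noted after \eqref{eq:poisson}, that the table $\mathbf{T}$ deterministically fixes $T_{++}=\sum_{i,j}T_{ij}$. Hence on the event $\{\sum_{i,j}T_{ij}=T_{++}\}$ the numerator collapses to $\mathbb{P}(\mathbf{T}\mid\boldsymbol{\Lambda})$, which by the conditional independence assumption in \eqref{eq:poisson} factorises as the product of cell Poisson masses $\prod_{i,j}\Lambda_{ij}^{T_{ij}}e^{-\Lambda_{ij}}/T_{ij}!$; outside that event the conditional probability is zero, so the distribution is supported on $\{\mathbf{T}\in\mathcal{T}:\sum_{i,j}T_{ij}=T_{++}\}$.

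For the denominator I would appeal to Remark~\ref{remark:table_statistic_distributions}, which gives $T_{++}\mid\boldsymbol{\Lambda}\sim\mathrm{Poisson}(\Lambda_{++})$ since the total is a sum of independent Poissons. Substituting both expressions and forming the ratio, the factor $e^{-\sum_{i,j}\Lambda_{ij}}$ in the numerator cancels $e^{-\Lambda_{++}}$ in the denominator because the conserved-total assumption \eqref{eq:total_constraint} gives $\sum_{i,j}\Lambda_{ij}=\Lambda_{++}$; likewise $\prod_{i,j}\Lambda_{ij}^{T_{ij}}/\Lambda_{++}^{T_{++}}=\prod_{i,j}(\Lambda_{ij}/\Lambda_{++})^{T_{ij}}$ since $\sum_{i,j}T_{ij}=T_{++}$ on the support. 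What remains is
\[
\mathbb{P}(\mathbf{T}\mid T_{++},\boldsymbol{\Lambda}) = \frac{T_{++}!}{\prod_{i,j}T_{ij}!}\prod_{i,j}\left(\frac{\Lambda_{ij}}{\Lambda_{++}}\right)^{T_{ij}},
\]
i.e.\ the Multinomial probability mass function of \eqref{eq:Multinomial} with $T_{++}$ trials and event probabilities $\Lambda_{ij}/\Lambda_{++}$ (which sum to one, again by \eqref{eq:total_constraint}).

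I do not anticipate a genuine obstacle: the only point requiring care is the bookkeeping of the support, namely making explicit that the conditional distribution lives on tables with $\sum_{i,j}T_{ij}=T_{++}$ and that the two Poisson normalisers cancel \emph{exactly} because of the conserved-total assumption, so the expression above is already correctly normalised and no separate partition-function computation is needed. An alternative one-line route is to note that $\mathcal{C}_T=\{T_{++}\}$ is the zero-margin-count special case of the general tractable construction, but the direct Bayes-rule calculation is the cleanest self-contained argument.
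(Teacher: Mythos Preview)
Your proposal is correct and follows essentially the same approach as the paper: apply Bayes' rule, use that $\mathbf{T}$ determines $T_{++}$ so the numerator is the product of Poisson cell masses, take the denominator from Remark~\ref{remark:table_statistic_distributions}, and cancel to obtain the Multinomial kernel of \eqref{eq:Multinomial}. If anything, your version is slightly more explicit about the support and the normalisation than the paper's own proof.
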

\begin{proof}
    Apply Baye's rule, notice that $\mathbf{T}$ fully determines any member of $\mathcal{C}_T$ and leverage Remark \ref{remark:table_statistic_distributions}. It follows that 
    \begin{align*}
        \mathbb{P}\left(\mathbf{T} \vert \mathcal{C},\boldsymbol{\Lambda}\right) &= \frac{\mathbb{P}\left(\mathbf{T}\vert \mathcal{C}_{\Lambda}, \boldsymbol{\Lambda}\right)}{\mathbb{P}\left(T_{++} \vert \mathcal{C}_{\Lambda}, \boldsymbol{\Lambda}\right)}\\
        &= \prod_{i,j}^{I,J} \frac{\Lambda_{ij}^{T_{ij}} \exp(-\Lambda_{ij})}{T_{ij}!} \left( \frac{\Lambda_{ij}^{T_{++}} \exp(-\Lambda_{++})}{T_{++}!}  \right)^{-1} \\
        &= \prod_{i,j}^{I,J} \frac{T_{++}!}{T_{ij}!} \left(\frac{\Lambda_{ij}}{\Lambda_{++}}\right)^{T_{ij}},
    \end{align*}
    which resembles the kernel of a Multinomial distribution on a two-way contingency table $\mathbf{T}$ with the desired parameters.
\end{proof}

\begin{lemma}
    Let $\mathcal{C}_T=\left\{\mathbf{T}_{\cdot+}\right\}$. Then, $\mathbf{T} \vert \mathcal{C},\boldsymbol{\Lambda}$ is a random variable distributed according to a product Multinomial law with $\mathbf{T}_{\cdot+}$ number of trials and $\frac{\boldsymbol{\Lambda}}{\boldsymbol{\Lambda}_{\cdot +}}$ event probabilities for each table row.
\end{lemma}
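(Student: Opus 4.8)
The plan is to mirror the argument used for the $\mathcal{C}_T=\{T_{++}\}$ case, replacing the scalar total by the vector of row sums. Since $\boldsymbol{\Lambda}$ is conditioned upon, the intensity constraint $\mathcal{C}_\Lambda=\{\Lambda_{++}\}$ is automatically satisfied, so for the purposes of this computation $\mathcal{C}$ reduces to $\mathcal{C}_T=\{\mathbf{T}_{\cdot+}\}$. First I would invoke Bayes' rule together with the observation (already used in Corollary~\ref{corollary:table_margins} and the preceding lemma) that a table $\mathbf{T}$ deterministically fixes its row-sum vector $\mathbf{T}_{\cdot+}$; hence
\[
\mathbb{P}(\mathbf{T}\mid\mathcal{C},\boldsymbol{\Lambda}) = \frac{\mathbb{P}(\mathbf{T}\mid\mathcal{C}_\Lambda,\boldsymbol{\Lambda})}{\mathbb{P}(\mathbf{T}_{\cdot+}\mid\mathcal{C}_\Lambda,\boldsymbol{\Lambda})}
\]
on the support $\{\mathbf{T}:\sum_{j}T_{ij}=T_{i+},\ i=1,\dots,I\}$ and zero elsewhere.

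Next I would substitute the two ingredients. The numerator is the conditionally independent Poisson likelihood from \eqref{eq:poisson}, $\mathbb{P}(\mathbf{T}\mid\boldsymbol{\Lambda})=\prod_{i,j}^{I,J}\Lambda_{ij}^{T_{ij}}\exp(-\Lambda_{ij})/T_{ij}!$, while the denominator is supplied by Remark~\ref{remark:table_statistic_distributions}, namely $\mathbf{T}_{\cdot+}\mid\boldsymbol{\Lambda}\sim\text{Poisson}(\boldsymbol{\Lambda}_{\cdot+})$, i.e. $\mathbb{P}(\mathbf{T}_{\cdot+}\mid\boldsymbol{\Lambda})=\prod_{i}^{I}\Lambda_{i+}^{T_{i+}}\exp(-\Lambda_{i+})/T_{i+}!$. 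Forming the ratio, the exponential factors cancel row by row because $\sum_{j}\Lambda_{ij}=\Lambda_{i+}$ forces $\prod_{j}\exp(-\Lambda_{ij})=\exp(-\Lambda_{i+})$, and collecting the remaining factorial and power terms gives
\[
\mathbb{P}(\mathbf{T}\mid\mathcal{C},\boldsymbol{\Lambda}) = \prod_{i=1}^{I}\left(\frac{T_{i+}!}{\prod_{j=1}^{J}T_{ij}!}\prod_{j=1}^{J}\left(\frac{\Lambda_{ij}}{\Lambda_{i+}}\right)^{T_{ij}}\right),
\]
which is exactly \eqref{eq:product_Multinomial}: a product over rows of Multinomial kernels, each with $T_{i+}$ trials and probability vector $(\Lambda_{i1}/\Lambda_{i+},\dots,\Lambda_{iJ}/\Lambda_{i+})$. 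Since these entries are nonnegative and sum to one within each row, the expression is a bona fide probability mass function on the stated support, completing the identification of the product Multinomial law.

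There is essentially no hard step here; the only points requiring care are (a) recording the support restriction correctly so that the Bayes-rule quotient is claimed only where $\mathbf{T}_{\cdot+}$ is consistent with the conditioned value, and (b) noting that conditioning additionally on $\mathcal{C}_\Lambda$ is vacuous once $\boldsymbol{\Lambda}$ is fixed — both are immediate. If one prefers a self-contained derivation rather than citing Remark~\ref{remark:table_statistic_distributions}, the single extra ingredient needed is the standard fact that a finite sum of independent Poisson variables is Poisson, which yields the row-margin law $\mathbf{T}_{\cdot+}\mid\boldsymbol{\Lambda}\sim\text{Poisson}(\boldsymbol{\Lambda}_{\cdot+})$ directly from \eqref{eq:poisson}.
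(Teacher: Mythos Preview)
Your proof is correct and follows essentially the same route as the paper: apply Bayes' rule using that $\mathbf{T}$ determines $\mathbf{T}_{\cdot+}$, then form the ratio of the Poisson likelihood to the row-margin law and simplify to the product Multinomial. The only cosmetic difference is that the paper factors the denominator further as $\mathbb{P}(\mathbf{T}_{\cdot+}\mid T_{++},\boldsymbol{\Lambda})\,\mathbb{P}(T_{++}\mid\boldsymbol{\Lambda})$ via Corollary~\ref{corollary:table_margins}, whereas you go straight to $\mathbf{T}_{\cdot+}\mid\boldsymbol{\Lambda}\sim\text{Poisson}(\boldsymbol{\Lambda}_{\cdot+})$ from Remark~\ref{remark:table_statistic_distributions}; your version is slightly more direct but the substance is identical.
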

\begin{proof}
    Apply Baye's rule and Corollary \ref{corollary:table_margins}. It follows that 
    \begin{align*}
        \mathbb{P}\left(\mathbf{T} \vert \mathcal{C},\boldsymbol{\Lambda}\right) &= \frac{\mathbb{P}\left(\mathbf{T}\vert \mathcal{C}_{\Lambda}, \boldsymbol{\Lambda}\right)}{\mathbb{P}\left(\mathbf{T}_{\cdot+} \vert T_{++},\mathcal{C}_{\Lambda}, \boldsymbol{\Lambda}\right)\mathbb{P}\left(T_{++} \vert \mathcal{C}_{\Lambda}, \boldsymbol{\Lambda}\right)}\\
        &= \prod_{i,j}^{I,J} \frac{\Lambda_{ij}^{T_{ij}} \exp(-\Lambda_{ij})}{T_{ij}!} \left( \prod_{k}^{I} \frac{T_{++}!}{T_{k+}!} \left(\frac{\Lambda_{k+}}{\Lambda_{++}}\right)^{T_{k+}} \frac{\Lambda_{ij}^{T_{++}} \exp(-\Lambda_{++})}{T_{++}!} \right)^{-1} \\
        &= \prod_{i,j}^{I,J} \frac{T_{i+}!}{T_{ij}!} \left(\frac{\Lambda_{ij}}{\Lambda_{i+}}\right)^{T_{ij}},
    \end{align*}
    which is proportional to the desired distribution.
\end{proof}

\begin{lemma}\label{lemma:fisher_hypergeometric}
    Let $\mathcal{C}_T=\left\{\mathbf{T}_{\cdot+},\mathbf{T}_{+\cdot}\right\}$. Then, $\mathbf{T} \vert \mathcal{C},\boldsymbol{\Lambda}$ is a random variable distributed according to Fisher's non-central hypergeometric law with $\mathbf{T}_{\cdot+}, \mathbf{T}_{\cdot+}, T_{++}$ table margins and odds ratios equal to $\boldsymbol{\omega} = \frac{\Lambda_{++} \boldsymbol{\Lambda}}{\boldsymbol{\Lambda}_{\cdot +}\boldsymbol{\Lambda}_{+\cdot}}$.
\end{lemma}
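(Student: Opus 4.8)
The plan is to follow the same Bayes'-rule template as the preceding lemmas, but now conditioning on the \emph{pair} of margins $(\mathbf{T}_{\cdot+},\mathbf{T}_{+\cdot})$ rather than on a single margin. Keeping $\mathcal{C}_\Lambda=\{\Lambda_{++}\}$ as before, I would start from
\[
\mathbb{P}(\mathbf{T}\vert\mathcal{C},\boldsymbol{\Lambda}) = \frac{\mathbb{P}(\mathbf{T}\vert\mathcal{C}_\Lambda,\boldsymbol{\Lambda})}{\mathbb{P}(\mathbf{T}_{\cdot+},\mathbf{T}_{+\cdot}\vert\mathcal{C}_\Lambda,\boldsymbol{\Lambda})},
\]
using that $\mathbf{T}$ fully determines both margins. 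The numerator is the product-Poisson law \eqref{eq:poisson}, while the denominator is the sum of that same product-Poisson mass over the finite set of tables sharing the conditioned margins. The key simplification is that the factor $\prod_{i,j}\exp(-\Lambda_{ij})=\exp(-\Lambda_{++})$ is common to the numerator and to every summand of the denominator, so it cancels, leaving
\[
\mathbb{P}(\mathbf{T}\vert\mathcal{C},\boldsymbol{\Lambda}) \;\propto\; \prod_{i,j}^{I,J}\frac{\Lambda_{ij}^{T_{ij}}}{T_{ij}!},
\]
with a proportionality constant depending only on the fixed margins.

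The second step is to rewrite this kernel in the advertised Fisher form. Since $T_{i+}$, $T_{+j}$ and $T_{++}$ are held fixed by the conditioning, the combinatorial prefactor $\frac{\prod_i T_{i+}!\,\prod_j T_{+j}!}{T_{++}!}$ is a constant on $\mathcal{T}_{\mathcal{C}}$ and may be inserted freely; it then remains only to check $\prod_{i,j}\Lambda_{ij}^{T_{ij}}\propto\prod_{i,j}\omega_{ij}^{T_{ij}}$ with $\omega_{ij}=\Lambda_{ij}\Lambda_{++}/(\Lambda_{i+}\Lambda_{+j})$. Expanding and summing exponents row- and column-wise gives $\prod_{i,j}\bigl(\Lambda_{++}/(\Lambda_{i+}\Lambda_{+j})\bigr)^{T_{ij}}=\Lambda_{++}^{T_{++}}\prod_i\Lambda_{i+}^{-T_{i+}}\prod_j\Lambda_{+j}^{-T_{+j}}$, again a function of the fixed margins only. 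Hence $\prod_{i,j}\omega_{ij}^{T_{ij}}$ and $\prod_{i,j}\Lambda_{ij}^{T_{ij}}$ differ by a $\mathbf{T}$-independent constant on the admissible set, yielding
\[
\mathbb{P}(\mathbf{T}\vert\mathcal{C},\boldsymbol{\Lambda}) \;\propto\; \frac{\prod_{i=1}^I T_{i+}!\,\prod_{j=1}^J T_{+j}!}{T_{++}!\,\prod_{i,j=1}^{I,J}T_{ij}!}\prod_{i,j=1}^{I,J}\omega_{ij}^{T_{ij}},
\]
the kernel of Fisher's non-central multivariate hypergeometric law \eqref{eq:fishers_hypergeometric} with margins $\mathbf{T}_{\cdot+},\mathbf{T}_{+\cdot},T_{++}$ and odds ratios $\boldsymbol{\omega}=\Lambda_{++}\boldsymbol{\Lambda}/(\boldsymbol{\Lambda}_{\cdot+}\boldsymbol{\Lambda}_{+\cdot})$.

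I expect no serious obstacle: the only real subtlety, in contrast to the single-margin cases, is that the normalising constant here is a partition function over $\mathcal{T}_{\mathcal{C}}$ that is not available by elementary manipulation, so the whole argument must be phrased in terms of proportionality rather than an explicit closed form (its evaluation being deferred to the Chu--Vandermonde extension of Appendix \ref{app:chu_vandermonde_theorem}). The one bookkeeping point to be careful about is that the exponent identities $\sum_j T_{ij}=T_{i+}$, $\sum_i T_{ij}=T_{+j}$ and $\sum_{i,j}T_{ij}=T_{++}$ hold precisely because we are restricted to the admissible set, so the quantities discarded as ``constants'' genuinely are constant there. As an optional cross-check one could instead compose the conditionals $T_{++}\vert\boldsymbol{\Lambda}$ and $\mathbf{T}_{\cdot+}\vert T_{++},\boldsymbol{\Lambda}$ from Remark \ref{remark:table_statistic_distributions} and Corollary \ref{corollary:table_margins} with $\mathbf{T}_{+\cdot}\vert\mathbf{T}_{\cdot+},T_{++},\boldsymbol{\Lambda}$, but the direct cancellation above is shorter and self-contained.
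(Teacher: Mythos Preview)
Your proof is correct and follows the same Bayes'-rule template as the paper, but with a cleaner handling of the denominator. The paper factors $\mathbb{P}(\mathbf{T}_{\cdot+},\mathbf{T}_{+\cdot}\vert\boldsymbol{\Lambda})$ as $\mathbb{P}(\mathbf{T}_{\cdot+}\vert T_{++},\boldsymbol{\Lambda})\,\mathbb{P}(\mathbf{T}_{+\cdot}\vert T_{++},\boldsymbol{\Lambda})\,\mathbb{P}(T_{++}\vert\boldsymbol{\Lambda})$, invoking ``conditional independence between $\mathbf{T}_{\cdot+}$ and $\mathbf{T}_{+\cdot}$ given $\boldsymbol{\Lambda}$'', and then simplifies the ratio in one algebraic step to reach the Fisher kernel directly. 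That independence claim is actually false for product-Poisson tables (for instance $\mathrm{Cov}(T_{i+},T_{+j})=\Lambda_{ij}>0$); the paper's argument is rescued only by the $\propto$ sign, since the factored expression, although not equal to the true joint margin probability, is still a function of the fixed margins alone. Your route---treating the denominator as an opaque normalising constant from the outset and then inserting the combinatorial prefactor and the factor $\Lambda_{++}^{T_{++}}\prod_i\Lambda_{i+}^{-T_{i+}}\prod_j\Lambda_{+j}^{-T_{+j}}$ as two separate margin-dependent constants---is more transparent about what is being claimed and sidesteps the dubious independence assertion entirely. The cost is one extra line of bookkeeping; the benefit is that every step is literally true rather than true only up to proportionality.
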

\begin{proof}
    Apply Baye's rule, Remark \ref{remark:table_statistic_distributions}, Corollary \ref{corollary:table_margins} and exploit the conditional independence between $\mathbf{T}_{\cdot +}$ and $\mathbf{T}_{+\cdot}$ given $\boldsymbol{\Lambda}$. It follows that 
    \begin{align*}
        \mathbb{P}\left(\mathbf{T} \vert \mathcal{C},\boldsymbol{\Lambda}\right) &\propto \frac{\mathbb{P}\left(\mathbf{T}\vert \mathcal{C}_{\Lambda}, \boldsymbol{\Lambda}\right)}{\mathbb{P}\left(\mathbf{T}_{\cdot+} \vert T_{++},\mathcal{C}_{\Lambda}, \boldsymbol{\Lambda}\right)\mathbb{P}\left(\mathbf{T}_{+\cdot} \vert T_{++},\mathcal{C}_{\Lambda}, \boldsymbol{\Lambda}\right)\mathbb{P}\left(T_{++} \vert \mathcal{C}_{\Lambda}, \boldsymbol{\Lambda}\right)}\\
        &= \prod_{i,j}^{I,J} \frac{\Lambda_{ij}^{T_{ij}} \exp(-\Lambda_{ij})}{T_{ij}!} \left( \prod_{k}^{I} \frac{T_{++}!}{T_{k+}!} \left(\frac{\Lambda_{k+}}{\Lambda_{++}}\right)^{T_{k+}} \prod_{m}^{J} \frac{T_{++}!}{T_{+m}!} \left(\frac{\Lambda_{+m}}{\Lambda_{++}}\right)^{T_{+m}} \frac{\Lambda_{ij}^{T_{++}} \exp(-\Lambda_{++})}{T_{++}!} \right)^{-1} \\
        &= \prod_{i,j}^{I,J} \frac{T_{i+}!T_{+j}!}{T_{ij}!T_{++}!} \left(\frac{\Lambda_{ij}\Lambda_{++}}{\Lambda_{i+}\Lambda_{+j}}\right)^{T_{ij}},
    \end{align*}
    by virtue of Remark \ref{remark:table_statistic_distributions}. This resembles the kernel of Fisher's non-central hypergeometric distribution on a two-way contingency table $\mathbf{T}$ with the desired parameters.
\end{proof}

\begin{corollary}\label{corollary:markov_basis_proposal_distribution}
In doubly constrained tables, the Markov Basis step size proposal distribution $\mathbb{P}(\eta)$ is proportional to Fisher's non-central hypergeometric distribution for $2\times 2$ contingency tables with table margins $T_{i_1+}, T_{i_2+}, T_{+j_1}, T_{+j_2}$ and odds ratio $\omega_{ij} = \frac{\Lambda_{i_1j_1}\Lambda_{i_2j_2}}{\Lambda_{i_1j_2}\Lambda_{i_2j_1}}$. 
\end{corollary}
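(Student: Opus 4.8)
The plan is to substitute the doubly‑constrained target of Lemma~\ref{lemma:fisher_hypergeometric} into the MB‑Gibbs full conditional of Proposition~\ref{prop:markov_basis_mcmc} and simplify. A Markov move $\mathbf{f}_l$ for doubly‑constrained tables (equation~\eqref{eq:markov_basis_two_margins}) perturbs only the four cells of the $2\times 2$ sub‑table on rows $i_1\neq i_2$ and columns $j_1\neq j_2$, adding $\eta$ to the diagonal pair and subtracting $\eta$ from the off‑diagonal pair. Since the target factorises over cells, the MB‑Gibbs conditional is $\mathbb{P}(\eta)\propto\mu(\mathbf{T}+\eta\mathbf{f}_l)$, restricted to those $\eta$ with $\mathbf{T}+\eta\mathbf{f}_l\ge 0$; in this ratio the factors of $\mu$ coming from the other $IJ-4$ cells, together with the fixed‑margin prefactor $\prod_i T_{i+}!\,\prod_j T_{+j}!/T_{++}!$ of~\eqref{eq:fishers_hypergeometric}, are independent of $\eta$ and cancel.

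First I would write $a=T_{i_1j_1}$, $b=T_{i_1j_2}$, $c=T_{i_2j_1}$, $d=T_{i_2j_2}$, so that after the move these become $a+\eta$, $b-\eta$, $c-\eta$, $d+\eta$. Using $\mu(\mathbf{T})\propto\prod_{i,j}\omega_{ij}^{T_{ij}}/T_{ij}!$ with $\omega_{ij}=\Lambda_{ij}\Lambda_{++}/(\Lambda_{i+}\Lambda_{+j})$, and pulling out the $\eta$‑independent constant $\omega_{i_1j_1}^{a}\omega_{i_1j_2}^{b}\omega_{i_2j_1}^{c}\omega_{i_2j_2}^{d}$, this gives
\[
\mathbb{P}(\eta)\;\propto\;\frac{1}{(a+\eta)!\,(b-\eta)!\,(c-\eta)!\,(d+\eta)!}\left(\frac{\omega_{i_1j_1}\,\omega_{i_2j_2}}{\omega_{i_1j_2}\,\omega_{i_2j_1}}\right)^{\!\eta}.
\]
Substituting $\omega_{ij}=\Lambda_{ij}\Lambda_{++}/(\Lambda_{i+}\Lambda_{+j})$ into the combined ratio, the $\Lambda_{++}$, $\Lambda_{i_1+}\Lambda_{i_2+}$ and $\Lambda_{+j_1}\Lambda_{+j_2}$ factors cancel between numerator and denominator, leaving precisely $\bar\omega:=\Lambda_{i_1j_1}\Lambda_{i_2j_2}/(\Lambda_{i_1j_2}\Lambda_{i_2j_1})$, the cell odds ratio in the statement.

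Finally I would identify the law. Reparametrising by $x:=a+\eta$ (the post‑move count in cell $(i_1,j_1)$) and noting that the sub‑table row sums $a+b$, $c+d$ and column sums $a+c$, $b+d$ are invariant under $\mathbf{f}_l$, one rewrites $b-\eta$, $c-\eta$, $d+\eta$ through $x$ and these sums to obtain
\[
\mathbb{P}(x)\;\propto\;\binom{a+b}{x}\binom{c+d}{(a+c)-x}\,\bar\omega^{\,x},\qquad \max\{0,\,(a+c)-(c+d)\}\le x\le\min\{a+b,\,a+c\},
\]
which is exactly the probability mass function of Fisher's non‑central hypergeometric distribution for a $2\times 2$ table with the sub‑table margins $a+b,c+d,a+c,b+d$ (the quantities written $T_{i_1+},T_{i_2+},T_{+j_1},T_{+j_2}$ in the statement's shorthand) and odds ratio $\bar\omega$; hence $\eta=x-a$ is that variable shifted by $a$, as asserted, and its normalising sum is finite so $\mathbb{P}(\eta)$ is well defined.

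There is no deep obstacle here; the work is bookkeeping. The point most prone to error is the truncation of the $\eta$‑support: one must check that the four non‑negativity constraints $a+\eta\ge 0$, $b-\eta\ge 0$, $c-\eta\ge 0$, $d+\eta\ge 0$ translate, under $x=a+\eta$, into exactly the index range of the $2\times 2$ non‑central hypergeometric pmf, and that the full margins of $\mathbf{T}$ are genuinely constant along the move so they may be discarded. I would also fix the sign convention in~\eqref{eq:markov_basis_two_margins} so that the $+\eta$ pair is the diagonal one, which is what makes the odds ratio come out as $\Lambda_{i_1j_1}\Lambda_{i_2j_2}/(\Lambda_{i_1j_2}\Lambda_{i_2j_1})$ rather than its reciprocal.
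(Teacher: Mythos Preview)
Your proposal is correct and follows essentially the same route as the paper: restrict the doubly‑constrained target of Lemma~\ref{lemma:fisher_hypergeometric} to the four cells touched by the Markov basis move, discard $\eta$‑independent factors, and collapse the ratio of $\omega_{ij}$'s to the single cell odds ratio $\Lambda_{i_1j_1}\Lambda_{i_2j_2}/(\Lambda_{i_1j_2}\Lambda_{i_2j_1})$. Your version is in fact a bit more careful than the paper's — you write the factorial arguments as $(a+\eta)!$, $(b-\eta)!$, etc.\ explicitly, carry out the reparametrisation $x=a+\eta$ to exhibit the standard $2\times 2$ hypergeometric pmf, and check that the non‑negativity constraints yield exactly its support — none of which the paper spells out.
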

\begin{proof}
    Without loss of generality consider an arbitrary Markov Basis move illustrated in Figure \ref{fig:markov_basis}. All entries in the table remain constant except for entries at cells $(i_1,j_1),(i_1,j_2),(i_2,j_1),(i_2,j_2)$. Let the updated table be $\mathbf{T}'$. Expand the distribution in Lemma \ref{lemma:fisher_hypergeometric} and simplify terms depending only on $\mathbf{T},\boldsymbol{\Lambda}$:
    \InsertBoxR{0}{\begin{minipage}{0.3\linewidth}
        \centering
        \includegraphics[width=0.6\linewidth]{./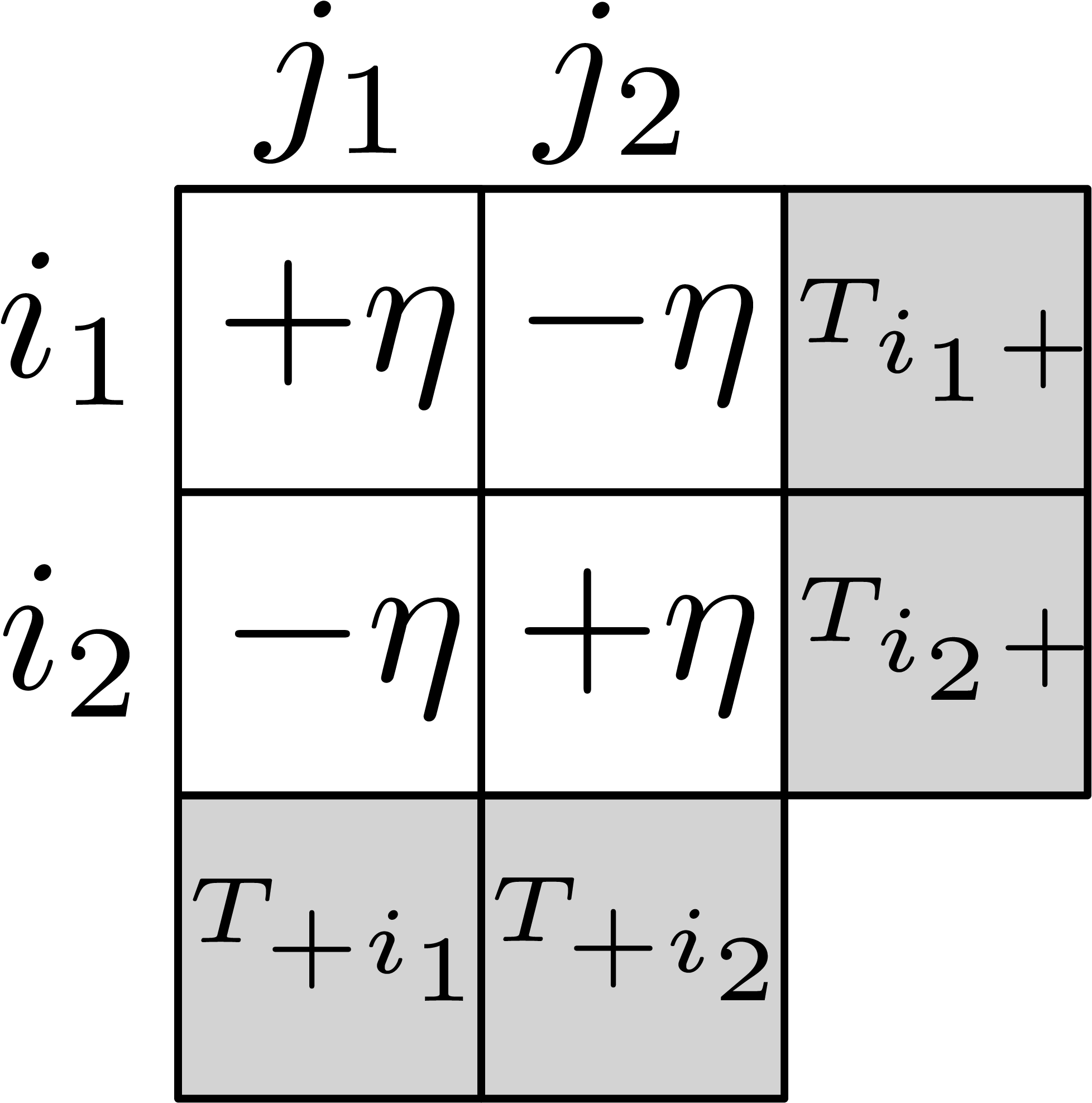}
        \captionsetup{width=.9\linewidth}
        \captionof{figure}{Markov basis move on the space of $I\times J$ contingency tables with $\mathcal{C}_T=\{\mathbf{T}_{\cdot+},\mathbf{T}_{+\cdot}\}$.}
        \label{fig:markov_basis}
    \end{minipage}%
    }[10]
    \begin{align*}
        \mathbb{P}\left(\eta\right) &= \frac{T_{i_1+}!T_{i_2+}!T_{+j_1}!T_{+j_2}!}{T_{i_1j_1}!T_{i_1j_2}!T_{i_2j_1}!T_{i_2j_2}!} 
        \left(\frac{\Lambda_{++}\Lambda_{i_1j_1}}{\Lambda_{i_1+}\Lambda_{+j_1}}\right)^{(T_{i_1j_1}+\eta)} 
        \left(\frac{\Lambda_{++}\Lambda_{i_2j_2}}{\Lambda_{i_2+}\Lambda_{+j_2}}\right)^{(T_{i_2j_2}+\eta)} \\
        & \qquad
        \left(\frac{\Lambda_{++}\Lambda_{i_2j_1}}{\Lambda_{i_2+}\Lambda_{+j_1}}\right)^{(T_{i_2j_1}-\eta)}
        \left(\frac{\Lambda_{++}\Lambda_{i_1j_2}}{\Lambda_{i_1+}\Lambda_{+j_2}}\right)^{(T_{i_1j_2}-\eta)} \\
        &\propto \frac{1}{T_{i_1j_1}!T_{i_1j_2}!T_{i_2j_1}!T_{i_2j_2}!} \left(\frac{\Lambda_{++}\Lambda_{i_1j_1}}{\Lambda_{i_1+}\Lambda_{+j_1}}\right)^{\eta} 
        \left(\frac{\Lambda_{++}\Lambda_{i_2j_2}}{\Lambda_{i_2+}\Lambda_{+j_2}}\right)^{\eta} 
        \left(\frac{\Lambda_{++}\Lambda_{i_2j_1}}{\Lambda_{i_2+}\Lambda_{+j_1}}\right)^{-\eta}
        \left(\frac{\Lambda_{++}\Lambda_{i_1j_2}}{\Lambda_{i_1+}\Lambda_{+j_2}}\right)^{-\eta} \\
        &= \frac{1}{T_{i_1j_1}!T_{i_1j_2}!T_{i_2j_1}!T_{i_2j_2}!} \left(\frac{\Lambda_{i_1j_1}\Lambda_{i_2j_2}}{\Lambda_{i_2j_1}\Lambda_{i_1j_2}}\right)^{\eta},
    \end{align*}
    which is the kernel of Fisher's non-central hypergeometric on a $2\times 2$ table with the desired parameters.
\end{proof}

\section{An extension to Chu Vandermonde's theorem for Multinomial coefficients}\label{app:chu_vandermonde_theorem}

    \begin{theorem}\label{thm:chu_vandermonde_extension}
    Let $\mathcal{T}_{\mathcal{C}}$ be the space of admissible tables satisfying $\mathcal{C}_T=\{\mathbf{T}_{+\cdot},\mathbf{T}_{\cdot +}\}$ with fixed odds ratios $\boldsymbol{\omega} \in \mathbb{R}_{\geq 0}$. Denote the subsets of $\mathcal{C}_T$-admissible tables with $\mathcal{C}_T=\{T_{++}\}$, $\mathcal{C}_T=\{\mathbf{T}_{+\cdot}\}$ and $\mathcal{C}_T=\{\mathbf{T}_{\cdot+}\}$ by $\mathcal{T}_{++}$, $\mathcal{T}_{+\cdot}$ and $\mathcal{T}_{\cdot+}$, respectively. Then, for any $\mathcal{C}_T$-admissible $I\times J$ table $\mathbf{T}$ the following statement holds:
    $$\binom{T_{++}}{T_{+1}\dots T_{+J}} \prod_{j}^{J} \omega_{+j}^{T_{+j}} = \sum_{\mathbf{T}\in \mathcal{T}_{+\cdot}} \prod_{j}^{J} \binom{T_{+j}}{T_{1j}\dots T_{Ij}} \prod_{i,j}^{I,J} \omega_{ij}^{T_{ij}}.$$
    This is an extension of the Chu-Vandermonde theorem for Multinomial coefficients \citep{belbachir2014} to polynomials with Multinomial coefficients.
    \end{theorem}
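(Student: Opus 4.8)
The plan is to read the claim as a formal polynomial identity in the indeterminates $\{\omega_{ij}\}$ and prove it by the multinomial theorem, using the single structural fact that fixing the column margins $\mathbf{T}_{+\cdot}$ renders the $J$ columns independent. Nonnegativity of $\boldsymbol{\omega}$ plays no role in the derivation; it matters only when the identity is later read off as a normalising constant.

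First I would decouple the right-hand sum column by column. A table $\mathbf{T}'\in\mathcal{T}_{+\cdot}$ is precisely an independent choice, for each $j$, of a nonnegative integer vector $(T'_{1j},\dots,T'_{Ij})$ with $\sum_i T'_{ij}=T_{+j}$, and the summand $\prod_{j}\binom{T_{+j}}{T'_{1j}\dots T'_{Ij}}\prod_{i,j}^{I,J}\omega_{ij}^{T'_{ij}}$ factorises over $j$; therefore
\begin{align*}
\sum_{\mathbf{T}'\in\mathcal{T}_{+\cdot}}\ \prod_{j=1}^{J}\binom{T_{+j}}{T'_{1j}\dots T'_{Ij}}\prod_{i,j}^{I,J}\omega_{ij}^{T'_{ij}}
&=\prod_{j=1}^{J}\ \Biggl(\sum_{T'_{1j}+\dots+T'_{Ij}=T_{+j}}\binom{T_{+j}}{T'_{1j}\dots T'_{Ij}}\prod_{i=1}^{I}\omega_{ij}^{T'_{ij}}\Biggr)\\
&=\prod_{j=1}^{J}\Bigl(\sum_{i=1}^{I}\omega_{ij}\Bigr)^{T_{+j}}=\prod_{j=1}^{J}\omega_{+j}^{T_{+j}},
\end{align*}
where the middle step is the multinomial theorem applied to each column and the last recalls $\omega_{+j}=\sum_{i}\omega_{ij}$. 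To produce the leading coefficient $\binom{T_{++}}{T_{+1}\dots T_{+J}}$ I would then invoke the nesting identity for multinomial coefficients, $\binom{T_{++}}{T_{11}\dots T_{IJ}}=\binom{T_{++}}{T_{+1}\dots T_{+J}}\prod_{j}\binom{T_{+j}}{T_{1j}\dots T_{Ij}}$ (distributing $T_{++}$ agents over all $IJ$ cells $=$ first allocating column totals, then filling each column): expand $\bigl(\sum_{i,j}\omega_{ij}\bigr)^{T_{++}}$ by the full multinomial theorem, partition the resulting monomials by the column-margin vector they induce, and substitute the per-column collapse above; the block indexed by a fixed $\mathbf{T}_{+\cdot}$ then equals $\binom{T_{++}}{T_{+1}\dots T_{+J}}\prod_{j}\omega_{+j}^{T_{+j}}$ on one reckoning and $\binom{T_{++}}{T_{+1}\dots T_{+J}}\sum_{\mathbf{T}'\in\mathcal{T}_{+\cdot}}\prod_{j}\binom{T_{+j}}{T'_{1j}\dots T'_{Ij}}\prod_{i,j}\omega_{ij}^{T'_{ij}}$ on the other, and matching blocks (disjoint across distinct column-margin vectors) yields the statement.

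I do not anticipate a genuine obstacle: the whole argument is the multinomial theorem plus the column decoupling, so the only thing demanding care is the index bookkeeping --- checking that grouping monomials by induced column margin is an honest partition of the expansion of $\bigl(\sum_{i,j}\omega_{ij}\bigr)^{T_{++}}$, and lining up the indices in the nesting identity. The one point I would double-check is the exact placement of the prefactor $\binom{T_{++}}{T_{+1}\dots T_{+J}}$, since the bare column-wise identity already reads $\prod_j\bigl(\sum_i\omega_{ij}\bigr)^{T_{+j}}=\sum_{\mathbf{T}'\in\mathcal{T}_{+\cdot}}\prod_j\binom{T_{+j}}{T'_{1j}\dots T'_{Ij}}\prod_{i,j}\omega_{ij}^{T'_{ij}}$; I would expect the final identity to carry $\binom{T_{++}}{T_{+1}\dots T_{+J}}$ on both sides (equivalently, to have the within-column $\binom{T_{+j}}{\cdot}$ on the right replaced by the global $\binom{T_{++}}{\cdot}$). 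Specialising to $\omega_{ij}\equiv1$, which gives $\prod_j I^{T_{+j}}=I^{T_{++}}$, or to $\omega_{ij}=\Lambda_{i+}/\Lambda_{++}$, which normalises the column sums of $\boldsymbol{\omega}$ to one, pins down the constant and recovers the classical multinomial Chu--Vandermonde convolution of \citep{belbachir2014}.
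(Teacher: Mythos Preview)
Your argument is correct and uses essentially the same engine as the paper: both are the multinomial theorem, with the paper wrapping it inside a generating function in auxiliary indeterminates $x_1,\dots,x_J$ (sum the LHS against $\prod_m x_m^{T_{+m}}$ over all $\mathbf{T}_{+\cdot}\in\mathcal{T}_{++}$, collapse to $\bigl(\sum_j\omega_{+j}x_j\bigr)^{T_{++}}$, re-expand column by column, and match coefficients of $\prod_j x_j^{T_{+j}}$), while you strip away the auxiliary variables and decouple the columns directly. Your route is shorter and makes the bookkeeping visible.

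Your closing suspicion about the prefactor is exactly on target. The bare column-wise multinomial computation you wrote down gives
\[
\sum_{\mathbf{T}'\in\mathcal{T}_{+\cdot}}\ \prod_{j}\binom{T_{+j}}{T'_{1j}\dots T'_{Ij}}\prod_{i,j}\omega_{ij}^{T'_{ij}}\;=\;\prod_{j}\omega_{+j}^{T_{+j}},
\]
with no factor $\binom{T_{++}}{T_{+1}\dots T_{+J}}$ on the left; your $\omega_{ij}\equiv1$ check (both sides equal $I^{T_{++}}$) already exposes this. In the paper's generating-function derivation the factor is silently lost at the step where $\bigl(\sum_j\omega_{+j}x_j\bigr)^{T_{++}}$ is rewritten as $\prod_j\bigl(\sum_i\omega_{ij}x_j\bigr)^{T_{+j}}$: the coefficient of $\prod_j x_j^{T_{+j}}$ on the left carries $\binom{T_{++}}{T_{+1}\dots T_{+J}}$, whereas on the right it does not. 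So the identity as displayed should have the multinomial prefactor on both sides (equivalently, on neither), precisely as you flagged; your direct approach makes this transparent where the generating-function packaging obscures it.
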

    \begin{proof}
        \allowdisplaybreaks
        We proceed with an algebraic proof. Let $[x]_{j=1}^{J}=1+x+\cdots+x^J$ be a polynomial of order $J$. By writing the left-hand side in polynomial form and expanding it we get
        \begin{align*}
            \sum_{\mathbf{T}_{+\cdot}\in \mathcal{T}_{++}} \underbrace{\left( \binom{T_{++}}{T_{+1}\dots T_{+J}} \prod_{j}^{J} \omega_{+j}^{T_{+j}} \right)}_{\text{LHS}} \prod_{m}^{J} x_{m}^{T_{+m}} 
            &= \sum_{\mathbf{T}_{+\cdot}\in \mathcal{T}_{++}} \binom{T_{++}}{T_{+1}\dots T_{+J}} \prod_{j}^{J} \left(\omega_{+j}x_j\right)^{T_{+j}} \\
            &= \left(\sum_{j}^J \omega_{+j}x_j\right)^{T_{++}} \\
            &= \prod_{j}^J \left(\sum_{j}^J \omega_{+j}x_j\right)^{T_{+j}} \\
            &= \prod_{j}^J \sum_{\mathbf{T}\in \mathcal{T}_{+j}} \binom{T_{+j}}{T_{1j}\dots T_{Ij}} \prod_{i}^{I} (\omega_{ij}x_j)^{T_{ij}} \\
            &= \prod_{j}^J \left(\sum_{\mathbf{T}\in \mathcal{T}_{+j}} \binom{T_{+j}}{T_{1j}\dots T_{Ij}} \prod_{i}^{I} \omega_{ij}^{T_{ij}}\right) x_j^{T_{+j}}\\
            &= \sum_{\mathbf{T}_{+\cdot}\in \mathcal{T}_{++}} \underbrace{\left(\sum_{\mathbf{T}\in \mathcal{T}_{+\cdot}} \prod_{j}^J \binom{T_{+j}}{T_{1j}\dots T_{Ij}} \prod_{i}^{I} \omega_{ij}^{T_{ij}}\right)}_{\text{RHS}} \prod_{j}^J x_j^{T_{+j}},
        \end{align*}
        where the exchange of product and sum in the last line is permitted due to the grouping of terms in the sum by column $\mathbf{T}_{\cdot j} \; \forall \; j=1, \dots, J$. The second and fourth equalities follow by direct application of the Multinomial theorem \citep{berge1971principles}. This completes the proof.
    \end{proof}

The Theorem above allows us to compute the normalising constant of Fisher's non-central hypergeometric distribution in Lemma \ref{lemma:fisher_hypergeometric}. We note that summing over the support $\mathcal{T}_{+\cdot}$ yields
\begin{align*}
    \frac{\prod_{i}^{I} T_{i+}!}{T_{++}!} \sum_{\mathbf{T} \in \mathcal{T}_{+\cdot}} \prod_{i,j}^{I,J} \frac{T_{+j}!}{T_{ij}!} \omega_{ij}^{T_{ij}} &= \frac{\prod_{i}^{I} T_{i+}!}{T_{++}!} \prod_{j}^{J} 
 \frac{T_{++}!}{T_{+j}!} \omega_{+j}^{T_{+j}} \\
 &= \prod_{i,j}^{I,J} \frac{T_{i+}!}{T_{+j}!} \omega_{+j}^{T_{+j}}.
\end{align*}
Subsequently, the normalised Fisher's non-central hypergeometric distribution is equal to 
\begin{equation}\label{eq:fisher_hypergeometric_normalised}
    \prod_{j}^{J} \frac{T_{+j}!T_{+j}!}{T_{++}!T_{ij}!} \prod_{i}^{I} \left(\frac{\omega_{ij}}{\omega_{+j}}\right)^{T_{ij}} =     \prod_{j}^{J} \frac{\binom{T_{+j}}{T_{1j}, \dots, T_{Ij}}}{\binom{T_{++}}{T_{+1}, \dots, T_{+J}}} \prod_{i}^{I} \left(\frac{\omega_{ij}}{\omega_{+j}}\right)^{T_{ij}},
\end{equation} 
which is similar to the kernel of a product Multinomial with $\mathbf{T}_{+\cdot}$ number of trials, $\frac{\boldsymbol{\omega}}{\boldsymbol{\omega}_{+\cdot}}$ event probabilities. We adopt this approximation for computing ground truth moments in our synthetic doubly constrained model experiments of Figure \ref{fig:mixing_times_mcmc_proposals}.

\section{Table posterior gradients}\label{app:table_posterior_gradients}

The Hamiltonian Monte Carlo update of $\mathbf{x}$ necessitates the use of the table log-likelihood gradients inside the leapfrog integrator. We list these gradients for the totally and singly constrained intensities with $\mathcal{C}_{\Lambda}=\{\Lambda_{++}\}$ and $\mathcal{C}_{\Lambda}=\{\boldsymbol{\Lambda}_{+\cdot}\}$, respectively. The total derivative of the table log-likelihood is equal to 
\begin{equation*}
    \frac{\partial \log\left(\mathbb{P}\left(\mathbf{T}\vert \mathcal{C},\boldsymbol{\Lambda}\right)\right)}{\partial x_m} = \sum_{i,j} \frac{\partial \log\left(\mathbb{P}\left(\mathbf{T}\vert \mathcal{C},\boldsymbol{\Lambda}\right)\right)}{\partial \Lambda_{ij}} \frac{\partial\Lambda_{ij}}{\partial x_m}.
\end{equation*}

\begin{remark}\label{remark:total_constrained_gradient}
    The totally constrained intensity $\Lambda_{ij} = \frac{\Lambda_{++} \exp\left(\alpha x_j - \beta c_{ij}\right)}{\sum_{k,m}^{I,J} \exp\left(\alpha x_m - \beta c_{km}\right)}$ has 
    \begin{align*}
        \frac{\partial \Lambda_{ij}}{\partial x_m} = \begin{cases}
            \alpha\Lambda_{ij}\left(1-\frac{\Lambda_{+j}}{\Lambda_{++}}\right) & \text{if} \; m=j, \\
            -\alpha \frac{\Lambda_{+m}}{\Lambda_{++}} & \text{if} \; \text{otherwise}.
        \end{cases}
    \end{align*}
\end{remark}

\begin{remark}\label{remark:singly_constrained_gradient}
    The singly constrained intensity $\Lambda_{ij} = \frac{\Lambda_{i+} \exp\left(\alpha x_j - \beta c_{ij}\right)}{\sum_{m}^{J} \exp\left(\alpha x_m - \beta c_{im}\right)}$ has 
    \begin{align*}
        \frac{\partial \Lambda_{ij}}{\partial x_m} = \begin{cases}
            \alpha\Lambda_{ij}\left(1-\frac{\Lambda_{ij}}{\Lambda_{i+}}\right) & \text{if} \; m=j, \\
            -\alpha \frac{\Lambda_{im}\Lambda_{ij}}{\Lambda_{i+}} & \text{if} \; \text{otherwise}.
        \end{cases}
    \end{align*}
\end{remark}

\begin{remark}\label{remark:odds_ratio_gradient}
    The odds ratio $\omega_{ij} = \frac{\Lambda_{ij}\Lambda_{++}}{\Lambda_{i+}\Lambda_{+j}}$ has
    \begin{equation*}
        \frac{\partial \omega_{km}}{\partial \Lambda_{ij}} = \begin{cases}
            \frac{\omega_{ij}}{\Lambda_{ij}} + \frac{\omega_{ij}}{\Lambda_{ij}} - \frac{\omega_{ij}}{\Lambda_{+j}} - \frac{\omega_{ij}}{\Lambda_{i+}} & \text{if} \; k=i,m=j, \\
            \frac{\omega_{kj}}{\Lambda_{++}} - \frac{\omega_{kj}}{\Lambda_{++}\Lambda_{+j}} & \text{if} \; k\neq i,m=j, \\
            \frac{\omega_{im}}{\Lambda_{++}} - \frac{\omega_{im}}{\Lambda_{++}\Lambda_{i+}} & \text{if} \; k=i,m\neq j, \\
            \frac{\omega_{km}}{\Lambda_{++}} & \text{if} \; \text{otherwise}.
        \end{cases}
    \end{equation*}
\end{remark}

\begin{remark}
    Let $\mathcal{C}_T = \emptyset$. Then, 
    $$\frac{\partial \log\left(\mathbb{P}\left(\mathbf{T}\vert \mathcal{C},\boldsymbol{\Lambda}\right)\right)}{\partial \Lambda_{ij}} = \frac{T_{ij}}{\Lambda_{ij}} - \Lambda_{i+}.$$
\end{remark}

\begin{remark}
    Let $\mathcal{C}_T=\left\{T_{++}\right\}$. Then, 
    $$\frac{\partial \log\left(\mathbb{P}\left(\mathbf{T}\vert \mathcal{C},\boldsymbol{\Lambda}\right)\right)}{\partial \Lambda_{ij}} = \frac{T_{ij}}{\Lambda_{ij}} - \frac{T_{++}}{\Lambda_{++}}.$$
\end{remark}

\begin{remark}
    Let $\mathcal{C}_T=\left\{\mathbf{T}_{\cdot +}\right\}$. Then, 
    $$\frac{\partial \log\left(\mathbb{P}\left(\mathbf{T}\vert \mathcal{C},\boldsymbol{\Lambda}\right)\right)}{\partial \Lambda_{ij}} = \frac{T_{ij}}{\Lambda_{ij}} - \frac{T_{+j}}{\Lambda_{+j}}.$$
\end{remark}

\begin{remark}
    Let $\mathcal{C}_T=\left\{\mathbf{T}_{\cdot +},\mathbf{T}_{+\cdot}\right\}$. Then, 
    \begin{equation*}
        \frac{\partial \log\left(\mathbb{P}\left(\mathbf{T}\vert \mathcal{C},\boldsymbol{\Lambda}\right)\right)}{\partial \Lambda_{ij}} = \sum_{k,m}^{I,J} \frac{\partial \log\left(\mathbb{P}\left(\mathbf{T}\vert \mathcal{C},\boldsymbol{\Lambda}\right)\right)}{\partial \omega_{km}} \frac{\omega_{km}}{\Lambda_{ij}},
    \end{equation*}
    where 
    \begin{equation*}
    \frac{\partial \log\left(\mathbb{P}\left(\mathbf{T}\vert \mathcal{C},\boldsymbol{\Lambda}\right)\right)}{\partial \omega_{km}} =  \frac{T_{ij}}{\omega_{ij}} - \frac{T_{+j}}{\omega_{+j}},
    \end{equation*}
    and the second derivative term is calculated using Remark ~\ref{remark:odds_ratio_gradient}.
\end{remark}

\section{Experimental setup}\label{app:experimental_setup}

In this section, we elucidate the experimental setup used for synthetic experiments and the large-scale application to commuting patterns in Cambridge, UK. 

\subsection{Synthetic experiments}

Regarding synthetic experiments with fixed $\boldsymbol{\Lambda}$, Figure \ref{fig:table_size_agent_number_curse_of_dimensionality} is produced by running $10^3$ instances of line 7 from Algorithm \ref{alg:tractable_table_sampling_algorithm} for $10^3$ steps. Synthetic tables with dimensions $2\times 3$, $33\times 33$ and a total number of agents $10^2$ and $5\times 10^3$ are created. The constraint set is $\mathcal{C}_T=\{\mathbf{T}_{\cdot+}\}$. Figure \ref{fig:mixing_times_mcmc_proposals} is run on the same synthetic datasets with $\mathcal{C}_T=\{\mathbf{T}_{\cdot+}\}$ and $\mathcal{C}_T=\{\mathbf{T}_{\cdot+},\mathbf{T}_{+\cdot}\}$ using lines 7 and 7-10 of Algorithms \ref{alg:tractable_table_sampling_algorithm} and \ref{alg:intractable_sampling_algorithm}, respectively. Each one of the $10^3$ MCMC instances is run for $10^4$ steps using both MB-MH and MB-Gibbs to compare convergence rates. The ground truth table average $g(\boldsymbol{\Lambda})$ is computed using the moments of the product Multinomial distribution with known probabilities. In doubly constrained models, this ground truth constitutes an approximation introduced in Appendix \ref{app:chu_vandermonde_theorem}, since the moments of Fisher's non-central hypergeometric cannot be computed in closed form.

\subsection{Real-world application to Cambridge, UK}

The Cambridge home-to-work trip dataset contains $69$ origins and $13$ destinations at lower super output area (LSOA) and middle super output area (MSOA) spatial resolutions, respectively. The total number of agents whose work trips are recorded is $33,704$. Job availability data from \citep{ons_trip_data,ons_employment_data} is leveraged as observed log destination attraction $\mathbf{y}$. A cost matrix of travel deterrence is computed as follows. Residence and workplace facilities and Cambridge's entire transportation network are extracted from \citep{osm_data}. Separate clusters of origin and destination facilities are created for each origin LSOA and destination MSOA with each cluster size being equal to $20$ facilities to limit computational cost. Dijkstra's shortest path algorithm \citep{dijkstra2022} is used to compute the distance between all pairwise origin-destination facility pairs. Spatial effects arising from unobserved facilities near the spatial boundary cause overestimation of travel costs from and to boundary LSOAs and MSOAs. To mitigate this effect, Ripley's k function \citep{dixon2001} is computed to estimate the spatial density of home and work facilities in a radius of $500$ meters around the boundary. Then, the cost matrix is multiplied by the percentage of facilities within both the $500$ meter radius and the modelled LSOAs and MSOAs of interest relative to the total number of facilities within the radius. The rationale behind this edge correction is that a larger proportion of unaccounted-for facilities in the cost matrix computation should reduce the travel cost from/to that origin/destination.

The spatial interaction model parameters are fixed to $\epsilon=1$, $\kappa = 1.025$, $\delta=0.0128$ and $\sigma_d=3\%\times \log(13)$  in the same fashion as \citep{ellam2018} to ensure that our results are comparable. The joint MCMC samplers \ref{alg:tractable_table_sampling_algorithm} and \ref{alg:intractable_sampling_algorithm} are run for $N=10^5$ steps and their samples are utilised to generate Figures \ref{fig:lower_dim_embedding_samples}, \ref{fig:posterior_mean_x_predictions} and Table \ref{tab:metrics}. In both samplers, $\mathbf{x}$-acceptance is at least $90\%$, ensuring numerical stability of the leapfrog integrator in HMC, and $\boldsymbol{\theta}$-acceptance ranges from $30\%$ to $70\%$ depending on the size of $\mathcal{C}$ and the corresponding concentration of measure. In the high-noise regime, importance sampling estimates for the normalising constant in \eqref{eq:boltzmann_gibbs_measure_normalising_constant} comprises of $100$ particles and a uniform temperature schedule of $50$ inverse temperatures. We ensure that at least $75\%$ of the signs are positive in order to obtain sufficiently representative samples of the normalising constant. The first $10^4$ $\mathbf{T},\mathbf{x},\boldsymbol{\theta}$ samples are discarded and a thinning of $90$ is applied to ensure that $10^3$ independent samples from the joint posterior are obtained. We note that in the case of Algorithm \ref{alg:tractable_table_sampling_algorithm} independent samples of $\mathbf{T}$ are directly accessible by virtue of closed-form sampling. In all intractable table sampling experiments, MB-Gibbs is employed to explore the table posterior marginal, due to its superior efficiency relative to MB-MH. The conditional expectations $\mathbb{E}[\mathbf{x}\vert \mathbf{y},\mathcal{C}]$ and $\mathbb{E}[\boldsymbol{\theta}\vert \mathbf{y},\mathcal{C}]$ are computed using the unbiased estimator in \citep{ellam2018}:
\begin{equation*}
    \mathbb{E}[\phi(\mathbf{x},\boldsymbol{\theta})\vert \mathbf{y},\mathcal{C}] = \frac{\sum_{n=1}^N \text{sign}(\boldsymbol{\theta}^{(n)}) \phi(\mathbf{x}^{(n)},\boldsymbol{\theta}^{(n)})}{\sum_{n=1}^N \text{sign}(\boldsymbol{\theta}^{(n)})},
\end{equation*}
where $\text{sign}(\boldsymbol{\theta}^{(n)})$ is the sign of sampled $\boldsymbol{\theta}^{(n)}$. 

The MCMC samples of the non-joint scheme in \citep{ellam2018} are retrieved by using the same specification as above. In order to draw a comparison between our method and the Neural network approach in \citep{gaskin2023}, we initialise $10^2$ random seeds and run the neural net for $10^3$ epochs giving a total of $10^5$ pseudo-samples. Similarly to the setup above, we apply the same burn-in and thinning to these pseudo-samples. In the free/variable noise regime, we learn the SDE noise together with $\mathbf{x}$ and $\boldsymbol{\theta}$. The SIM parameters are conditioned as depicted in Table \ref{fig:plate_diagram} using the same values as above.

In Figure \ref{fig:lower_dim_embedding_samples}, table and intensity samples, as well as the ground truth ODM, are jointly projected to 2D using a T-distributed stochastic neighbour embedding \citep{hinton2002} based on the $l_1$ metric and a perplexity equal to $30$. The validation statistics in Table \ref{tab:metrics} are defined as follows. The standardised root mean square error \citep{oshan2016} is equal to 
\begin{equation*}
    \text{SRMSE}(\mathbf{T},\mathbf{T}^{\mathcal{D}}) = \sqrt{\frac{\sum_{x\in\mathcal{X}} \left(T(x)-T^{\mathcal{D}}(x)\right)^2}{\vert \mathcal{X} \vert }} \left(\frac{\sum_{x\in\mathcal{X}} T(x)}{\vert \mathcal{X} \vert}\right)^{-1},
\end{equation*}
and the Sorensen similarity index is equal to
\begin{equation*}
    \text{SSI}(\mathbf{T},\mathbf{T}^{\mathcal{D}}) = \frac{1}{\vert \mathcal{X} \vert} \sum_{x\in\mathcal{X}} \frac{2\min\left(T(x),T^{\mathcal{D}}(x)\right)}{T(x)+T^{\mathcal{D}}(x)},
\end{equation*}
whose domain is $[0,1]$ and values closer to $1$ imply a better ODM fit. Moreover, we define the Markov Basis distance as
\begin{equation*}
    \text{MBD}(\mathbf{T},\mathbf{T}^{\mathcal{D}}) = 1/2 \sum_{x\in\mathcal{X}} \vert T(x) - T^{\mathcal{D}}(x) \vert.
\end{equation*}
By construction, this distance function gives an upper bound to the number of Markov basis moves of step size $\eta=1$ required for $\mathbf{T}$ to reach $\mathbf{T}^{\mathcal{D}}$ while staying in $\mathcal{T}^{\mathcal{D}}$.

Finally, the coverage probability of the ground truth table $\mathbf{T}^{\mathcal{D}}$ is calculated by first identifying the lower $L_q\left(\mathbf{T}^{(1:N)}(x)\right)$ and upper $U_q\left(\mathbf{T}^{(1:N)}(x)\right)$ boundaries of the high probability region containing $q\%$ of the total mass for each table cell $x \in \mathcal{X}$. Then, the $q\%$ cell coverage probability is equal to
\begin{equation*}
    \text{CP}_q(\mathbf{T},\mathbf{T}^{\mathcal{D}}) = 1/ \vert \mathcal{X} \vert \sum_{x\in\mathcal{X}} \mathds{1}\biggl\{L_q\left(\mathbf{T}^{(1:N)}(x)\right) \leq T^{\mathcal{D}}(x) \leq U_q\left(\mathbf{T}^{(1:N)}(x)\right)\biggl\},
\end{equation*}
where where $\mathds{1}$ is the indicator function, and $\mathbf{T}^{(1:N)}$ is the tensor of all table samples.

\section{Auxiliary results for the Cambridge application}\label{app:more_experiments}

In this section, we append additional experimental results of the large-scale application to Cambridge commuting patterns. Figure \ref{fig:posterior_table_intensity_mean_predictions} compares the smallest ODM reconstruction errors from each method in Table \ref{tab:contributions}. The doubly and 20\% cell constrained table significantly improves ground truth coverage of free cells (row 2) over the singly constrained intensity of \citep{ellam2018} (row 4) even though both use the same cost matrix. Therefore, augmenting $\mathcal{C}_T$ shrinks the support of free cells and diminishes their reconstruction error. Compared to the singly constrained intensity of \citep{gaskin2023}, the table posterior mean achieves a reduced SRMSE ($0.51$ versus $0.61$) and does not overestimate the trips to popular destinations. Such destinations include $7$ and $13$, which correspond to areas where the city's university premises and hospital are located and are highly attractive. The neural network's low noise predictions (row 3) are characterised by stronger attraction effects, which translates to over-estimation of trips to the aforementioned destinations. The higher noise regime has little effect on the neural network's trip prediction error as opposed to the table posterior's error, which is substantially reduced in the high noise regime. This suggests that continuous intensity is more susceptible to ODM overfitting than the discrete table. A diffuse SDE smooths the marginal distributions of free table cells without changing their support and allows for more contributions of the $\mathbb{P}\left(\mathbf{T}\vert \mathbf{x}, \boldsymbol{\theta}, \mathcal{C}\right)$ term to the posterior mean. 

\begin{figure}[!ht]
    \centering
    \begin{tikzpicture}
    \pgfplotsset{
        scaled ticks = false,
        every tick label/.append style={font=\small},
    }

    \begin{axis}[
        axis x line=bottom,
        axis y line=left,
        hide y axis,
        hide x axis,
        tick pos=both,
        scale only axis,
        xmin=0,
        ymin=0.0,
        width=0.7\textwidth,
        height=\axisdefaultheight,
        xtick = \empty,
        point meta min=0.0,
        point meta max=1.0,
        colormap name=yellowpurple,
        colorbar horizontal,
        colorbar style={
            title={$\textcolor[HTML]{1E88E5}{\mathbf{T}}^{\mathcal{D}}$},
            title style={at={(-0.44,-0.1)},anchor=west},
            xmin=0,
            xmax=1,
            ymin=0,
            height=0.35cm,
            ytick = \empty,
            xtick= {0.000000,0.125000,0.250000,0.375000,0.500000,0.537538,0.575075,0.612613,0.650150,0.687688,
            0.725225,0.762763,0.800300,0.837838,0.875375,0.912913,0.950450,0.987988},
            xticklabels={0,25,50,75,100,125,150,175,200,225,250,275,300,325,350,375,400,425},
        }
        ]
        \addplot[draw=none,fill=none] coordinates {(0,0)};

    \end{axis}

\end{tikzpicture}
    \begin{tikzpicture}
        \pgfplotsset{
            scaled ticks = false,
            every tick label/.append style={font=\small},
        }
        \begin{axis}[
                axis x line=top,
                axis y line=right,
                hide y axis,
                hide x axis,
                tick pos=both,
                scale only axis,
                xmin=0,
                ymin=0.0,
                width=0.7\textwidth,
                height=\axisdefaultheight,
                xtick = \empty,
                point meta min=0.0,
                point meta max=1.0,
                colormap name=bluegreen,
                colorbar horizontal,
                colorbar style={
                    title={$\textcolor[HTML]{1E88E5}{\mathbf{T}}_{\cdot +}^{\mathcal{D}}\cdot 10^2$},
                    title style={at={(-0.44,-0.1)},anchor=west},
                    xmin=0,
                    xmax=1,
                    ymin=0,
                    height=0.35cm,
                    ytick=\empty,
                    xtick={0.021227,0.074966,0.128706,0.182445,0.236185,0.289924,0.343663,0.397403,0.451142,0.504881,
                    0.558621,0.612360,0.666099,0.719839,0.773578,0.827318,0.881057,0.934796,0.988536},
                    xticklabels={6,12,18,24,30,36,42,48,54,60,66,72,78,84,90,96,102,108,114},
                }
            ]
            \addplot[draw=none,fill=none] coordinates {(0,0)};

        \end{axis}
        
\end{tikzpicture}
    \begin{tikzpicture}
    \pgfplotsset{
        scaled ticks = false,
        every tick label/.append style={font=\small},
    }
    \pgfplotstableread{./tex_tables/mean_table_heatmaps/trip_error_colorbar_ticks.txt}{\colorbarticks}
    \pgfplotstableset{
        sci,
        sci zerofill,
        sci sep align,
        precision=1,
        sci superscript,
    }

    \begin{axis}[
        axis x line=bottom,
        axis y line=right,
        hide y axis,
        hide x axis,
        tick pos=both,
        scale only axis,
        xmin=0,
        ymin=0.0,
        width=0.7\textwidth,
        height=\axisdefaultheight,
        xtick = \empty,
        point meta min=0.0,
        point meta max=1.0,
        colormap name=redwhiteblue,
        colorbar horizontal,
        colorbar style={
            title={$\left(\textcolor[HTML]{1E88E5}{\mathbf{T}}^{\mathcal{D}}-\mathbb{E}\left[ h\left(\textcolor[HTML]{1E88E5}{\mathbf{T}}^{(1\mathrel{\mathop\ordinarycolon}N)},\textcolor[HTML]{FFC20A}{\boldsymbol{\Lambda}}^{(1\mathrel{\mathop\ordinarycolon}N)}\right) \right]\right) \cdot 10^{-3}$},
            title style={at={(-0.44,-0.1)},anchor=west},
            xmin=0,
            xmax=1,
            ymin=0,
            height=0.35cm,
            ytick=\empty,
            xtick={0.000000,0.012383,0.026423,0.042630,0.061799,0.085260,0.115506,0.158136,0.231012,0.500000,0.768988,0.841864,0.884494,0.914740,0.938201,0.957370,0.973577,0.987617,1.000000},
            xticklabels={-9,-8,-7,-6,-5,-4,-3,-2,-1,0,1,2,3,4,5,6,7,8,9},
        }
    ]
    \addplot[draw=none,fill=none] coordinates {(0,0)};

    \end{axis}

\end{tikzpicture}
    \begin{tikzpicture}
    \pgfplotsset{
        scaled ticks = false,
        every tick label/.append style={font=\small},
    }

    \begin{axis}[
        axis x line=top,
        axis y line=right,
        hide y axis,
        hide x axis,
        tick pos=both,
        scale only axis,
        xmin=0,
        ymin=0.0,
        width=0.7\textwidth,
        height=\axisdefaultheight,
        xtick = \empty,
        point meta min=0.0,
        point meta max=1.0,
        colormap name=yellowwhitepurple,
        colorbar horizontal,
        colorbar style={
            title={$\left(\textcolor[HTML]{1E88E5}{\mathbf{T}}_{+\cdot}^{\mathcal{D}}-\mathbb{E}\left[ \textcolor[HTML]{FFC20A}{\boldsymbol{\Lambda}}_{+\cdot}^{(1\mathrel{\mathop\ordinarycolon}N)}\right]\right) \cdot 10^{-2}$},
            title style={at={(-0.44,-0.1)},anchor=west},
            xmin=0,
            xmax=1,
            ymin=0,
            height=0.35cm,
            xtick={0.0000004,0.0172434,0.0394744,0.0708064,0.1243684,0.5000004,0.8756324,0.9291944,0.9605264,0.9827574,1.0000004},
            xticklabels={-5,-4,-3,-2,-1,0,1,2,3,4,5},
            ytick=\empty,
            xtick distance={0.01}
        }
    ]
    \addplot[draw=none] coordinates {(0,0)};

    \end{axis}
        
\end{tikzpicture}
\begin{tikzpicture}

    \pgfplotsset{
        compat=newest,
        every tick label/.append style={font=\tiny},
        every non boxed x axis/.append style={x axis line style=-},
        every non boxed y axis/.append style={y axis line style=-},
    }
    \pgfplotstableread{./tex_tables/mean_table_heatmaps/ground_truth_table_cell_data.txt}{\groundtruthtable}
    \pgfplotstableread{./tex_tables/mean_table_heatmaps/ground_truth_table_destination_demand_cell_data.txt}{\groundtruthdestinationdemand}

    \begin{axis}[
                axis x line=top,
                axis y line=right,
                scale only axis,
                y dir=reverse,
                xmax=70,
                width=\textwidth,
                height=4cm,
                xtick = {1,2,3,...,70},
                xtick pos=right,
                xtick align = center,
                xticklabel={
                \ifdim \tick cm < 70 cm
                        $\pgfmathprintnumber{\tick}$
                \fi},
                xticklabel style = {xshift=-0.1cm},
                xtick distance = {1},
                ytick align = center,
                ytick = {1,2,...,14},
                ytick pos=right,
                yticklabel={
                \ifdim \tick cm > 0 cm
                        $\pgfmathprintnumber{\tick}$
                \fi},
                yticklabel style = {yshift=0.13cm},
                ytick distance = {1},
                view={0}{90},
                grid=both,
            ]

            \addplot[
                matrix plot*,
                domain=0:69,
                y domain=0:13,
                mesh/rows=70,
                mesh/cols=14,
                mesh/color input=colormap,
                colormap name=yellowpurple,
                point meta=explicit,
                shift={(-0.5,-0.5)},
            ] table [x=x,y=y,meta=color] {\groundtruthtable};

            \addplot[
                matrix plot*,
                domain=69:70,
                y domain=0:13,
                mesh/rows=2,
                mesh/cols=13 ,
                mesh/color input=colormap,
                colormap name=bluegreen,
                point meta=explicit,
                shift={(0.5,0.5)},
            ] table [x=x,y=y,meta=color] {\groundtruthdestinationdemand};

    \end{axis}
    
    \end{tikzpicture}
\begin{tikzpicture}

    \pgfplotsset{
        compat=newest,
        every tick label/.append style={font=\tiny},
        every non boxed x axis/.append style={x axis line style=-},
        every non boxed y axis/.append style={y axis line style=-},
    }

\pgfplotstableread{./tex_tables/mean_table_heatmaps/table_mean_heatmap_JointTableSIMLatentMCMC_high_noise_both_margins_permuted_cells_20percent_thinning1_burnin10000_cell_data.txt}{\tabledata}
\pgfplotstableread{./tex_tables/mean_table_heatmaps/table_mean_heatmap_JointTableSIMLatentMCMC_high_noise_both_margins_permuted_cells_20percent_thinning1_burnin10000_covered_cell_coordinates.dat}{\tablecoverage}
\pgfplotstableread{./tex_tables/mean_table_heatmaps/table_mean_heatmap_JointTableSIMLatentMCMC_high_noise_both_margins_permuted_cells_20percent_thinning1_burnin10000_fixed_cell_coordinates.dat}{\tablecellsfixed}
    
    \begin{axis}[
            axis x line=top,
            axis y line=right,
            scale only axis,
            y dir=reverse,
            xmin=0,
            ymin=0.0,
            xmax=70,
            width=\textwidth,
            height=4cm,
            xtick = {1,2,3,...,70},
            xtick style={color=black},
            xtick align = center,
            xticklabel={
            \ifdim \tick cm < 70 cm
                   $\pgfmathprintnumber{\tick}$
            \fi},
            xticklabel style = {xshift=-0.1cm},
            xtick distance = {1},
            ytick align = center,
            ytick = {1,2,...,14},
            yticklabel={
            \ifdim \tick cm > 0 cm
                    $\pgfmathprintnumber{\tick}$
            \fi},
            yticklabel style = {yshift=0.13cm},
            ytick distance = {1},
            view={0}{90},
            grid=both,
            point meta min=0.0,
            point meta max=1.0,
            colormap name=redwhiteblue,
            enlargelimits=false,
        ]
        
        \addplot[
            matrix plot*,
            domain=0:69,
            y domain=0:13,
            mesh/rows=70,
            mesh/cols=14,
            mesh/color input=colormap,
            point meta=explicit,
            shift={(-0.5,-0.5)},
        ] table [x=x,y=y,meta=color] {\tabledata};
                
        \pgfdeclareplotmark{checkmark}
        {
        \node[scale=0.8] at (0,0.0) {\checkmark};
        }
        \addplot [
        only marks,
        mark=checkmark,
        xshift=0.13cm,
        yshift=-0.15cm,
        ] table [x=x, y=y] {\tablecoverage};

        \addplot [
            only marks,
            mark=square*,
            mark options={
                fill opacity=0.0,
                draw=black,
                line width=0.2pt,
                xscale=1.81,
                yscale=2.1,
            },
            xshift=0.1275cm,
            yshift=-0.152cm,
            ] table [x=x, y=y] {\tablecellsfixed};
            
    \end{axis}
    
    \end{tikzpicture}    \begin{tikzpicture}

    \pgfplotsset{
        compat=newest,
        every tick label/.append style={font=\tiny},
        every non boxed x axis/.append style={x axis line style=-},
        every non boxed y axis/.append style={y axis line style=-},
    }
    \pgfplotstableread{./tex_tables/mean_table_heatmaps/intensity_mean_heatmap_NeuralABM_low_noise_row_margin_thinning1_burnin10000_cell_data.txt}{\neuralabmintensitydata}
\pgfplotstableread{./tex_tables/mean_table_heatmaps/intensity_mean_heatmap_NeuralABM_low_noise_row_margin_thinning1_burnin10000_covered_cell_coordinates.dat}{\neuralabmcoveragedata}
\pgfplotstableread{./tex_tables/mean_table_heatmaps/intensity_mean_heatmap_NeuralABM_low_noise_row_margin_thinning1_burnin10000_destination_demand_cell_data.txt}{\neuralabmdestinationdemanddata}

    \begin{axis}[
                axis x line=top,
                axis y line=right,
                scale only axis,
                y dir=reverse,
                xmax=70,
                width=\textwidth,
                height=4cm,
                xtick = {1,2,3,...,70},
                xtick pos=right,
                xtick align = center,
                xticklabel={
                \ifdim \tick cm < 70 cm
                        $\pgfmathprintnumber{\tick}$
                \fi},
                xticklabel style = {xshift=-0.1cm},
                xtick distance = {1},
                ytick align = center,
                ytick = {1,2,...,14},
                ytick pos=right,
                yticklabel={
                \ifdim \tick cm > 0 cm
                        $\pgfmathprintnumber{\tick}$
                \fi},
                yticklabel style = {yshift=0.13cm},
                ytick distance = {1},
                view={0}{90},
                grid=both,
            ]

            \addplot[
                matrix plot*,
                domain=0:69,
                y domain=0:13,
                mesh/rows=70,
                mesh/cols=14,
                mesh/color input=colormap,
                colormap name=redwhiteblue,
                point meta=explicit,
                shift={(-0.5,-0.5)},
            ] table [x=x,y=y,meta=color] {\neuralabmintensitydata};

            \addplot[
                matrix plot*,
                domain=69:70,
                y domain=0:13,
                mesh/rows=2,
                mesh/cols=13,
                mesh/color input=colormap,
                colormap name=yellowwhitepurple,
                point meta=explicit,
                shift={(0.5,0.5)},
            ] table [x=x,y=y,meta=color] {\neuralabmdestinationdemanddata};

        \pgfdeclareplotmark{checkmark}
        {
        \node[scale=0.8] at (0,0.0) {\checkmark};
        }
        \addplot [
        only marks,
        mark=checkmark,
        xshift=0.13cm,
        yshift=-0.15cm,
        ] table [x=x, y=y] {\neuralabmcoveragedata};

    \end{axis}
    
    \end{tikzpicture}
\begin{tikzpicture}

        \pgfplotsset{
            compat=newest,
            every tick label/.append style={font=\tiny},
            every non boxed x axis/.append style={x axis line style=-},
            every non boxed y axis/.append style={y axis line style=-},
        }
        \pgfplotstableread{./tex_tables/mean_table_heatmaps/intensity_mean_heatmap_SIMLatentMCMC_high_noise_row_margin_thinning1_burnin10000_cell_data.txt}{\simintensitydata}
        \pgfplotstableread{./tex_tables/mean_table_heatmaps/intensity_mean_heatmap_SIMLatentMCMC_high_noise_row_margin_thinning1_burnin10000_covered_cell_coordinates.dat}{\simcoveragedata}
        \pgfplotstableread{./tex_tables/mean_table_heatmaps/intensity_mean_heatmap_SIMLatentMCMC_high_noise_row_margin_thinning1_burnin10000_destination_demand_cell_data.txt}{\simdestinationdemanddata}
    
        \begin{axis}[
                    axis x line=top,
                    axis y line=right,
                    scale only axis,
                    y dir=reverse,
                    xmax=70,
                    width=\textwidth,
                    height=4cm,
                    xtick = {1,2,3,...,70},
                    xtick pos=right,
                    xtick align = center,
                    xticklabel={
                    \ifdim \tick cm < 70 cm
                            $\pgfmathprintnumber{\tick}$
                    \fi},
                    xticklabel style = {xshift=-0.1cm},
                    xtick distance = {1},
                    ytick align = center,
                    ytick = {1,2,...,14},
                    ytick pos=right,
                    yticklabel={
                    \ifdim \tick cm > 0 cm
                            $\pgfmathprintnumber{\tick}$
                    \fi},
                    yticklabel style = {yshift=0.13cm},
                    ytick distance = {1},
                    view={0}{90},
                    grid=both,
                ] 
                
                \addplot[
                matrix plot*,
                domain=0:69,
                y domain=0:13,
                mesh/rows=70,
                mesh/cols=14,
                mesh/color input=colormap,
                colormap name=redwhiteblue,
                point meta=explicit,
                shift={(-0.5,-0.5)},
            ] table [x=x,y=y,meta=color] {\simintensitydata};

            \addplot[
                matrix plot*,
                domain=69:70,
                y domain=0:13,
                mesh/rows=2,
                mesh/cols=13,
                mesh/color input=colormap,
                colormap name=yellowwhitepurple,
                point meta=explicit,
                shift={(0.5,0.5)},
            ] table [x=x,y=y,meta=color] {\simdestinationdemanddata};
    
            \pgfdeclareplotmark{checkmark}
            {
            \node[scale=0.8] at (0,0.0) {\checkmark};
            }
            \addplot [
            only marks,
            mark=checkmark,
            xshift=0.13cm,
            yshift=-0.15cm,
            ] table [x=x, y=y] {\simcoveragedata};

        \end{axis}
        
        \end{tikzpicture}
  \caption{Ground truth table $\mathbf{T}^{\mathcal{D}}$ (row 1), $\mathbf{T}^{\mathcal{D}}-\mathbb{E}\left[\mathbf{T}\vert\mathbf{y},\{\mathbf{T}_{\cdot+},\mathbf{T}_{+,\cdot},\mathbf{T}_{\mathcal{X}_2},\Lambda_{++}\}\right]$ (row 2) and $\mathbf{T}^{\mathcal{D}}-\mathbb{E}\left[\boldsymbol{\Lambda}\vert\mathbf{y},\{\boldsymbol{\Lambda}_{\cdot+}\}\right]$ (rows 3 \& 4) using Algorithm \ref{alg:intractable_sampling_algorithm} (row 2), the low noise Neural network in \citep{gaskin2023} (row 3) and the high noise MCMC scheme in \citep{ellam2018} (row 4). Each table's rows resemble the number of trips by destination, and vice versa. Cells with a black boundary correspond to fixed cells while $\checkmark$ cells cover the ground truth in the 99\% HPM. The high noise table predictions interpolate missing cell data in the vicinity of fixed cells and produce a more accurate trip ODM. The error profile similarities across all methods are attributed to the use of the same cost matrix, which captures the spatial covariance of trips.}%
    \label{fig:posterior_table_intensity_mean_predictions}
\end{figure}

\section{Posterior mean convergence consistency}

We obtain approximate results on the consistency between the discrete table and continuous intensity posterior mean estimators in Figure \ref{fig:convergence_diagnostic}. A ground truth approximation of Fisher's non-central hypergeometric's mean $\bar{\mathbf{T}}$ is leveraged to ensure that the following identity holds empirically:
\begin{equation*}
    \mathbb{E}_{T,\Lambda} \left[ \mathbf{T} \vert \mathbf{y}, \mathcal{C} \right] = \mathbb{E}_{\Lambda} \Bigr[ \underbrace{\mathbb{E}_{T} \left[ \mathbf{T} \vert \boldsymbol{\Lambda}, \mathcal{C} \right]}_{\bar{\mathbf{T}}} \Bigr].
\end{equation*}
A pattern of faster convergence rates in the limit of more data $\mathcal{C}_T$ is depicted, although hardly visible.

\begin{figure}[ht!]
    \centering
\begin{tikzpicture}

\pgfplotsset{
    table/search path={./tex_tables/intensity_and_table_convergence/},
}

\begin{axis}[
    legend style={fill opacity=0.8, draw opacity=1, text opacity=1, draw=none},
    tick pos=both,
    x grid style={darkgray176},
    xlabel={MCMC Iteration},
    restrict x to domain=0:79920,
    xtick distance={10000},
    y grid style={darkgray176},
    ylabel={$\bigg\| \mathbb{{E}}[\mathbf{T}\vert\mathcal{C},\mathbf{y}] - \mathbb{{E}}[\boldsymbol{\Lambda}\vert\mathcal{C},\mathbf{y}] \bigg\|_1^{2}$},
    restrict y to domain=0.00:0.01,
    ytick distance={0.001}
]






\addplot [limegreen018569,line width=0.38mm] table {experiment_title_both_margins_noise_regime_low.dat};
\addlegendentry{$\mathcal{C}=\{\mathbf{T}_{\cdot+}\;,\mathbf{T}_{+,\cdot}\}$}
\label{subplot:doubly_constrained_low_noise_convergence}

\addplot [limegreen018569,dashed,line width=0.38mm] table {experiment_title_both_margins_noise_regime_high.dat};
\addlegendentry{$\mathcal{C}=\{\mathbf{T}_{\cdot+}\;,\mathbf{T}_{+,\cdot}\}$}
\label{subplot:doubly_constrained_high_noise_convergence}

\addplot [darkorange2551490,line width=0.38mm] table {experiment_title_both_margins_permuted_cells_10percent_noise_regime_low.dat};
\addlegendentry{$\mathcal{C}=\{\mathbf{T}_{\cdot+}\;,\mathbf{T}_{+,\cdot}\;,\mathbf{T}_{\mathcal{X}_1}\}$}
\label{subplot:doubly_10percent_cell_constrained_low_noise_convergence}

\addplot [darkorange2551490,dashed,line width=0.38mm] table {experiment_title_both_margins_permuted_cells_10percent_noise_regime_high.dat};
\addlegendentry{$\mathcal{C}=\{\mathbf{T}_{\cdot+}\;,\mathbf{T}_{+,\cdot}\;,\mathbf{T}_{\mathcal{X}_1}\}$}
\label{subplot:doubly_10percent_cell_constrained_high_noise_convergence}

\addplot [orangered255440,line width=0.38mm] table {experiment_title_both_margins_permuted_cells_20percent_noise_regime_low.dat};
\addlegendentry{$\mathcal{C}=\{\mathbf{T}_{\cdot+}\;,\mathbf{T}_{+,\cdot}\;,\mathbf{T}_{\mathcal{X}_2}\}$}
\label{subplot:doubly_20percent_cell_constrained_low_noise_convergence}

\addplot [orangered255440,dashed,line width=0.38mm] table {experiment_title_both_margins_permuted_cells_20percent_noise_regime_high.dat};
\addlegendentry{$\mathcal{C}=\{\mathbf{T}_{\cdot+}\;,\mathbf{T}_{+,\cdot}\;,\mathbf{T}_{\mathcal{X}_2}\}$}
\label{subplot:doubly_20percent_cell_constrained_high_noise_convergence}

\end{axis}

\end{tikzpicture}
    \captionsetup{width=.9\linewidth}
    \caption{Convergence of the $l_1$ norm of $\mathbb{E}[\mathbf{T}\vert\mathbf{y},\mathcal{C}] - \mathbb{E}[\boldsymbol{\Lambda}\vert\mathbf{y},\mathcal{C}]$ for different constraint sets $\mathcal{C}$ in the low (~\ref{subplot:doubly_constrained_low_noise_convergence},~\ref{subplot:doubly_10percent_cell_constrained_low_noise_convergence},~\ref{subplot:doubly_20percent_cell_constrained_low_noise_convergence}) and high (~\ref{subplot:doubly_constrained_high_noise_convergence},~\ref{subplot:doubly_10percent_cell_constrained_high_noise_convergence},~\ref{subplot:doubly_20percent_cell_constrained_high_noise_convergence}) noise regimes.}%
    \label{fig:convergence_diagnostic}
\end{figure}

\end{document}